\renewcommand{\baselinestretch}{1.2}
\tikzset{elegant/.style={smooth,thick,samples=50,cyan}}
\tikzset{eaxis/.style={->,>=stealth}}
\newif\ifblog
\newif\iftex
\newcommand{\thmref}[1]{Theorem~{\rm \ref{#1}}}
\newcommand{\lemref}[1]{Lemma~{\rm \ref{#1}}}
\newcommand{\propref}[1]{Proposition~{\rm \ref{#1}}}
\newcommand{\be}{\begin{eqnarray}}
\newcommand{\ee}{\end{eqnarray}}
\newcommand{\bee}{\begin{eqnarray*}}
\newcommand{\eee}{\end{eqnarray*}}
\def\d{{\rm d}}
\def\ds{{\rm d}s}
\def\P{{\mathbb P}}
\def\E{{\mathbb E}}
\def\R{{\mathbb R}}
\def\R{{\mathbb R}}
\def\p{{\partial}}
\newcommand{\cF}{{\cal F}}
\newcommand{\ep}{\varepsilon}
\newcommand{\al}{\alpha}
\newcommand{\nd}{\noindent}
\newcommand{\si}{\sigma}
\newcommand{\rt}{\rightarrow}
\newcommand{\fB}{{\mathfrak{B}}}
\newcommand{\fM}{{\mathfrak{N}}}
\newcommand{\fS}{{\mathfrak{S}}}
\newcommand{\argmax}{\mathop{\rm argmax}\limits}
\newcommand{\argmin}{\mathop{\rm argmin}\limits}
\newcommand{\sgn}{\mathop{\rm sgn}}
\newtheorem{theorem}{Theorem}[section]
\newtheorem{lemma}[theorem]{Lemma}
\newtheorem{proposition}[theorem]{Proposition}
\newtheorem{remark}[theorem]{Remark}
\newenvironment{proof}{\noindent{\sc Proof:}}{\strut\hfill $\Box$\medskip\\} %\medskip}
\newcommand{\TQ}{Q^T}
\newcommand{\ctr}{CTT\xspace} %use xspace package
\newcommand{\dtr}{DTT\xspace} %use xspace package
\title{Continuous-time Markowitz's mean-variance model \\ under different borrowing and saving rates}
\author{Chonghu Guan\thanks{School of Mathematics, Jiaying University, Meizhou 514015, Guangdong, China. Email: \url{gchonghu@163.com}}
\and Xiaomin Shi\thanks{School of Statistics and Mathematics, Shandong University of Finance and Economics, Jinan 250100, China. Email: \url{shixm@mail.sdu.edu.cn}}
\and Zuo Quan Xu\thanks{Department of Applied Mathematics, The Hong Kong Polytechnic University, Kowloon, Hong Kong, China. 
Email: \url{maxu@polyu.edu.hk}}
}
\date{}
\begin{document}
\maketitle

\begin{abstract}
We study Markowitz's mean-variance portfolio selection problem in a continuous-time Black-Scholes market with different borrowing and saving rates. The associated Hamilton-Jacobi-Bellman equation is fully nonlinear.
Using a delicate partial differential equation and verification argument, the value function is proven to be $C^{3,2}$ smooth. 
It is also shown that there are a borrowing boundary and a saving barrier which divide the entire trading area into a borrowing-money region, an all-in-stock region, and a saving-money region in ascending order.
The optimal trading strategy is a mixture of continuous-time strategy (as suggested by most continuous-time models) and discontinuous-time strategy (as suggested by models with transaction costs): one should put all her wealth in the stock in the middle all-in-stock region, and continuously trade it in the other two regions in a feedback form of wealth and time. It is never optimal to short sale the stock.
Numerical examples are also presented to verify the theoretical results and to give more financial insights beyond them.

\bigskip
\nd {\bf Keywords.} Markowitz's mean-variance portfolio selection; fully nonlinear PDE; free boundary; dual transformation; different borrowing and saving rates
\bigskip\\
\nd {\bf 2010 Mathematics Subject Classification.} 35R35; 35Q93; 91G10; 91G30; 93E20.

\end{abstract}

\newpage
\section{Introduction}

\noindent

The famous work of Harry M. Markowitz \cite{Ma52,Ma59} inaugurated a new era in modern finance. Markowitz's mean-variance portfolio selection has become one of the most prominent modern finance theories since its inception.

Numerous Markowitz's models with new features have been studied in the literature. In the realm of continuous-time framework, Richardson \cite{R89} first applied martingale method to study Markowitz's mean-variance portfolio selection problem. Zhou and Li \cite{ZL00} used an embedding technique and stochastic linear-quadratic (LQ) control theory to study the problem. Li, Zhou and Lim \cite{LZL01} considered the problem with no-shorting constraints. Hu and Zhou \cite{HZ05} extended it to the case with random coefficients and cone constraints on the control variable. Czichowsky and Schweizer \cite{CS13} provided the most general solutions for cone-constrained Markowitz's problem including new effects resulting from the jumps in the price process. Lv, Wu and Yu \cite{LWY16} studied Markowitz's problem with random horizon in an incomplete market setting. Xiong and Zhou \cite{XZ07}, and Xiong, Xu and Zheng \cite{XXZ21} investigated the problem under partial information. Zhou and Yin \cite{ZY03}, Hu, Shi and Xu \cite{HSX21a,HSX21b} considered the problem under regime switching and trading constraints.

As a variant of Markowitz's problem, the mean-variance (or quadratic) hedging problem was introduced by Duffie and Richardson \cite{DR91} and Schweizer \cite{S92}. Pham \cite{P00} extended the problem to a general incomplete market with semimartingale price process. Gourieroux, Laurent and Pham \cite{GLP02} introduced Numéraire to the problem. The most general model was instigated by Cerny and Kallsen \cite{CK07}. We refer to Schweizer \cite{S10} for an overview of the topic. Usually, the mean-variance portfolio selection and hedging problems are considered either for discounted prices or in the presence of a risk-free asset. Cerny, Czichowsky, and Kallsen \cite{CCK21} provided a solution to both problems that is symmetric and allows for all assets to be risky.

Stochastic LQ control method is widely used to study mean-variance portfolio selection and hedging problems. This method is extremely powerful when dealing with problems with trading constraints and random coefficients (see, e.g. \cite{KT02,LZL01,LZ02,Y13,ZL00,ZY03}), but less powerful when dealing with problems with state constraints such as bankruptcy prohibition. The latter kind of problems is often dealt by martingale method (which usually requires complete market setting) or by partial differential equation (PDE) method (which requires Markovian market setting). For instance, Bielecki, Jin, Pliska and Zhou \cite{BJPZ} investigated continuous-time Markowitz's problem with bankruptcy prohibition. Using martingale method, they turned the dynamic stochastic control problem into a static random variable chosen problem that was eventually solved by optimization method. Li and Xu \cite{LX16} studied Markowitz's problem with both trading and bankruptcy prohibition constraints by PDE method. Their idea is first to transform the problem into an equivalent one with only bankruptcy prohibition constraint, then to solve the latter by the method of \cite{BJPZ}. Xia \cite{X05} established the relationship between Markowitz's problem and the expected utility maximization problem with non-negative marginal utility in incomplete market with bankruptcy prohibition. Hou and Xu \cite{HX16} examined the effect of intractable claims on the trading strategy in Markowitz's problem by martingale approach.

The optimal trading strategies obtained in the aforementioned papers are typically trading continuously all the time, which are not consistent with real practice most of time. Dai, Xu and Zhou \cite{DXZ10} studied Markowitz's problem with proportional transaction costs by PDE method. To solve the associated Hamilton-Jacobi-Bellman (HJB) equation, they first derived a related double-obstacle PDE problem through an intuitive argument. The solvability of the latter PDE was completed resolved by PDE method so that they can get a classical solution to the original HJB equation. They showed that the optimal tradings only happen when the stock price arrives at a selling-stock boundary or at a buying-stock boundary. This is a discontinuous-time trading (\dtr) strategy which fits the real practice better than those continuous-time trading (\ctr) strategies suggested by most existing models.

All the aforementioned papers assume that there is no difference between the borrowing rate and saving rate in the market, namely the borrowing and saving rates are the same all the time even though they may be modeled as stochastic processes. But, as is well-known, a gap between the two rates always exists, which is fairly large sometimes, in practice. As borrowing rate is often higher than saving rate, it discourages/panelizes investors to borrow money. Different from a vast amount of no-gap-market models in the literature, only a very limited number of papers studied gap-market models. In the book Karatzas and Shreve \cite{KS98}, a utility maximization in a gap-market model is studied by martingale and duality methods. For continuous price processes, the mean-variance portfolio selection problem is rather close to the utility maximization problem. Although the quadratic functional is not always increasing, it can be shown that the optimal wealth process should stay always in the domain where the quadratic functional is strictly increasing and hence behaves like a utility function (see, Delbaen and Schachermayer \cite{DS96}). Therefore, the method of Karatzas and Shreve \cite{KS98} may be applied to Markowitz's problem, but it cannot provide a description of the optimal borrowing, saving and all-in-stock regions such as their connectedness and monotonicity. Other investment and pricing problems are also studied in gap-market models. For instance, Fleming and Zariphopoulou \cite{FZ06}, Xu and Chen \cite{XC98} considered optimal investment and consumption problems; Bergman \cite{Be91}, Korn \cite{Ko92}, and Cvitanic and Karatzas \cite{CK93} studied option pricing problems; Guan \cite{G18} studied a utility maximization problem.

Fu, Lari-Lavassani and Li \cite{FLL10} is the only paper we can identify in the literature which tried to solve Markowitz's problem in a continuous-time market with different borrowing and saving rates. They constructed a piece-wise quadratic solution to the HJB equation, but did not verify if the solution is the value function of the original problem. Their constructed solution is not of $C^{3,2}$ smooth, but we will show the value function is of $C^{3,2}$ smooth in this paper, so \cite{FLL10} indeed did not get the right value function or the optimal strategy. Therefore, the problem is still open and we will fill this gap.

This paper investigates Markowitz's portfolio selection problem in a continuous-time Black-Scholes market with different borrowing and saving rates. We show that the whole trading area is divided by a borrowing boundary and a saving boundary into three ascending trading regions, corresponding to the optimal strategies of borrowing money, putting all wealth in the stock and saving money. The existence of the three trading regions was already observed for utility maximization problem by Fleming and Zariphopoulou in \cite{FZ06}. We prove that these regions are connected and ordered, so that our results can provide the following financial insights. When an investor's wealth is far from her target, she must borrow money to invest in the stock so as to maximize the chance to achieve her goal; by contrast, if her wealth is sufficiently close to her target, she does not need to invest all her wealth in the stock and should save some in the money account to reduce her risk; while in the middle all-in-stock region, she should keep all her wealth in the stock so that no trading is needed inside the region. 

Compared to no-gap market models, the all-in-stock region is new. In the other two regions, the trading strategies are of the same form as the no-gap case, except for that one should use the borrowing rate in the borrowing-money region and the saving rate in the saving-money region. Therefore, our optimal strategy is a mixture of \ctr strategy (as suggested by most continuous-time models) and \dtr strategy (as suggested by models with transaction costs): one does not need to trade in the middle all-in-stock region, and has to continuously trade the stock in the borrowing-money and saving-money regions.

Although both the presence of transaction costs (such as \cite{DN90,DY09,DXZ10}) and the presence of gap between the borrowing and saving rates lead to similar optimal \dtr strategies, the reasons behind are fairly different. In transaction costs models, trading frequently directly increases transaction costs, so one should not trade all the time, leading to the existence of no-trading regions. According our gap-market model, one should not borrow money when the marginal cost of borrowing at a high rate is higher than the marginal benefit of extra leveraging, and one should not save money when the marginal benefit resulted from the low return is not high enough to compensate for giving up a better return-risk trade-off provided by the stock, so the existence of no-trading region is due to the gap between the borrowing and saving rates.

Mathematically speaking, it is very important to notice that the diverging of the borrowing and saving rates forces the wealth dynamics to become piecewise linear, and no longer linear. As a consequence, the stochastic LQ control theory cannot be applied and new stochastic control theory is called for to solve the problem. Indeed, one can apply stochastic LQ control methods to solve Markowitz's problem only when the value function of the problem is of a quadratic form, in which case one reduces to determining the coefficients of quadratic function by solving the so-called Riccati equation. Because of the diverging of the borrowing and saving rates in our model, the associated HJB equation is a fully nonlinear PDE and does not admit any solution in quadratic form. Hence, the problem cannot reduce to solving some Riccati equation. By contrast, because of the infinite time horizon setting, the HJB equation in Fleming and Zariphopoulou \cite{FZ06} is an ordinary differential equation, which is easier to study than our PDE. Because of this, the method of \cite{FZ06} cannot be applied to our model.
Instead, we adopt the PDE argument used in Dai and Yi \cite{DY09} and Guan \cite{G18} to solve our problem. We first transform the associated HJB equation into a semi-linear parabolic PDE through an intuitive argument. Adopting some standard PDE tools including the truncation method, the Leray-Schauder fixed point theorem, the embedding theorem and the Schauder estimation, we derive a solution to the semi-linear parabolic PDE, from which we eventually construct a $C^{3,2}$ smooth solution to the original HJB equation. Different from \cite{FLL10,DY09,G18}, we show that the constructed solution is indeed the value function to our mean-variance problem through a verification theorem. An optimal feedback strategy is also obtained during this process (where the smoothness of the value function plays an important role). The first-order smoothness of the borrowing and saving boundaries are obtained as well under some slightly stronger conditions on the market parameters.

The reminder of this paper is organized as follows. In Section \ref{sec:mf}, we formulate a mean-variance portfolio selection problem under different borrowing and saving rates. In Section \ref{sec:main}, we present our main theoretical results including the smoothness of the value function and provide an optimal control to the problem. Numerical examples are also provided to justify our theoretical results. Sections \ref{sec:Eq}-\ref{sec:V_solu} are devoted to the proofs of the main technical results. We first derive a semi-linear parabolic PDE from the original fully nonlinear HJB equation through an intuitive argument in Section \ref{sec:Eq}; then show that the parabolic PDE has a classical solution by PDE method in Section \ref{sec:hjb}; and Section \ref{sec:V_solu} completes the proof of the main results presented in Section \ref{sec:main}. Some concluding remarks are given in Section \ref{sec:cr}.

%%%%%%%%%%%%%%%%%%%%%%%%%%%%%%%%%%%%%%%%%%%%%%

\section{Model Formulation}\label{sec:mf}
We call a filtered complete probability space $(\Omega, \cF, \P, \{\cF_t\}_{t\geq0})$ the financial market. And assume that the filtration $\{\cF_t\}_{t\geq0}$ is generated by a standard one-dimensional Brownian motion $\{W_t, t\geq 0\}$ defined in the probability space, argumented with all $\P$-null sets.

The market consists of a risk-free money account and a continuously traded stock. The saving rate and the borrowing rate of the money account are different, denoted by $r_1$ and $r_2$, respectively. Economically speaking, the borrowing rate shall be higher than the saving rate. The stock price process $S^1>0$ follows a geometric Brownian motion:
\begin{equation*} 
{\rm d}S_t^1=S_t^1\big(\mu{\rm d}t+\sigma {\rm d}W_t\big),\end{equation*}
where $\mu$ is the appreciation rate, and $\sigma$ is the volatility rate of the stock. We assume that the market parameters $r_1$, $r_2$, $\mu$ and $\sigma$ are constants and satisfy $\sigma>0$ and
\begin{equation}\label{murr}
\mu >r_2>r_1.
\end{equation}
Remind that $r_1$, $r_2$, $\mu$ are not necessary to be positive, which happens in many financial markets right now.

Consider an agent (``She'') faced with an initial endowment $x$ and an investment horizon $[t,T]$. Let $X_s$ and $\pi_s$ denote her total wealth and dollar amount invested in the stock at time $s$, respectively. When $X_s>\pi_s$, the agent saves the extra money of the amount $X_s-\pi_s$ in the money account to earn interests at the saving rate $r_1$; whereas when $X_s<\pi_s$, the agent borrows the money of the amount $\pi_s-X_s$ from the money account at the borrowing rate $r_2$. Assume that the trading of shares is self-financed and takes place continuously, and there are no transaction costs or taxes. Then the wealth process $X_s$ of the agent satisfies the following stochastic differential equation (SDE):
\begin{equation}\label{X_eq}
\left\{\begin{array}{rl}
{\rm d}X_s&=\big[\big(r_1\chi_{X_s>\pi_s}+r_2\chi_{X_s<\pi_s}\big)(X_s-\pi_s)+\mu\pi_s\big]{\rm d}s+\sigma\pi_s{\rm d}W_s,\;\; t\leq s\leq T, \\ [2mm]
X_t&=x.
\end{array}\right.\end{equation}
Here $\chi_{S}$ is the indicator function for a statement $S$: it is equal to 1 if the statement $S$ is true, and 0 otherwise.

We call the process $\pi=\{\pi_s\}_{s\in[t,T]}$, a portfolio of the agent. Define the set of admissible portfolios as
\begin{align*}
\Pi_t:=L^2_{\cal F}([t,T];\R),
\end{align*}
where $L^2_\cF([t,T];\mathbb{R})$ denotes the set of all $\mathbb{R}$-valued, $\cF_s$-progressively measurable stochastic processes $f(\cdot)$ on $[t,T]$ with $\E\int_t^T|f(s)|^2\ds<+\infty$. For any admissible portfolio $\pi\in\Pi_t$, the SDE \eqref{X_eq} admits a unique strong solution $X_{\cdot}$ on $[t,T]$.

Given a constant target $d>0$, the agent's objective is to find an admissible portfolio $\pi^*\in\Pi_t$ to solve the following portfolio selection problem
\begin{align}\label{value}
V(x,t)=\inf\limits_{\pi\in \Pi_t}\E \big[(X_T-d)^2 \mid X_t=x \big],\quad (x,t)\in \TQ,
\end{align}
where
$$\TQ=\big\{(x,t)\mid \;xe^{r_1(T-t)} <d,\;0\leq t< T\big\}.$$
If such an admissible portfolio $\pi^*\in \Pi_t$ exists, we call it an optimal portfolio for the problem \eqref{value}.
The agent's target $d$ shall be higher than the outcome of saving all her money in the money account, so we put the constraint $xe^{r_1(T-t)}<d$, leading to the above admissible region $\TQ$.

The main aim of this paper is to determine the optimal value function $V(x,t)$ and find an optimal portfolio to the stochastic control problem \eqref{value}.

\begin{remark}
The standard Markowitz's problem can be formulated as
\begin{align}\label{stvalue}
\inf\limits_{\pi\in \Pi_t}&\quad \mathrm{Var}(X_T),\quad
\mathrm{s.t.}\quad \E\big[X_T\big]=z,\quad X_t=x,
\end{align}
where $z >0$ and $xe^{r_1(T-t)}<z$. Let $V_{MV}(x,t, z)$ denote its optimal value.
Then the set
\[\Big\{\big(\sqrt{V_{MV}(x,t, z)},\; z\big): z>xe^{r_1(T-t)}\Big\}\]
is called the efficient mean-variance frontier. By the Lagrange duality theorem (see Luenberger \cite{L69}), we have
\begin{align}\label{Lagrangeduality}
V_{MV}(x,t, z)=\sup_{d>z} \big[V(x,t,d)-(d-z)^2\big],
\end{align}
where $V(x,t,d)=V(x,t)$ defined by \eqref{value}. Indeed, the optimal $d$ is determined by
\begin{align}\label{Lagrangeduality2}
\frac{\partial }{\partial d}V(x,t,d)=2(d-z),\quad d>z.
\end{align}
In order to determine the efficient mean-variance frontier, it is unnecessary to solve the optimization problem in \eqref{Lagrangeduality}, or equivalently, to solve \eqref{Lagrangeduality2}. Indeed, by \eqref{Lagrangeduality} and \eqref{Lagrangeduality2}, the efficient mean-variance frontier can be expressed as
\[\bigg\{\bigg(\sqrt{V(x,t, d)-\frac{1}{4}\Big(\frac{\partial }{\partial d}V(x,t,d)\Big)^2}, \; d-\frac{1}{2}\frac{\partial }{\partial d}V(x,t,d)\bigg): d>xe^{r_1(T-t)}\bigg\}.\]
By the above relationship, it suffices to solve the portfolio selection problem \eqref{value} in order to solve the standard Markowitz's problem \eqref{stvalue}. Our proceeding analysis will also show that numerically solving $V(x,t,d)$ for each fixed $d$ can reduce to solving the approximation equation in a bounded domain \eqref{wNN_pb}. Clearly, the latter can be computed by standard such as finite difference method.
\end{remark}

\begin{remark}
When bankruptcy is prohibited in the market, we need to replace $\TQ$ by a bounded domain
$$\big\{(x,t)\mid \;0<xe^{r_1(T-t)} <d,\;0\leq t< T\big\}.$$
Meanwhile, we need to put an extra boundary condition $V(0,t)=d^2$, $0\leq t< T$ into the HJB equation \eqref{V_pb} below. Our argument, after minor adjustment, still works for that case. We encourage the interested reader to write down the details.
\end{remark}
%%%%%%%%%%%%%%%%%%%%%%%%%%%%%%%%%%%%%%%%%%

\section{Main Results}\label{sec:main}
Using the standard viscosity theory (see, e.g. Grandall and Lions \cite{CL83}, Yong and Zhou \cite{YZ99}), one can prove that the value function of \eqref{value} is a viscosity solution to the following HJB equation with boundary and terminal conditions:
\begin{align}\label{V_pb}
\left\{
\begin{array}{ll}
V_t+\inf\limits_{\pi}\Big(\frac{1}{2}\si^2\pi^2V_{xx}+\Big((r_1\chi_{\pi<x}+r_2\chi_{\pi> x})(x-\pi)+\mu\pi\Big)V_x\Big)=0\quad \hbox{in}\quad \TQ,\\[5mm]
V(e^{-r_1(T-t)} d,t)=0,\quad 0\leq t<T,\\[5mm]
V(x,T)=(x-d)^2,\quad x< e^{-r_1(T-t)} d.
\end{array}
\right.\end{align}
This paper does not adopt the viscosity approach because viscosity solution usually does not lead to good smoothness of the value function. Instead, we will prove that the above HJB equation \eqref{V_pb} admits a classical solution $V$ (see the precise definition in Theorem \ref{theo:V} below) by PDE method directly. This together with a verification result (see \thmref{veri} below) can guarantee that $V$ is the value function of the problem \eqref{value}.

\begin{theorem}[Solvability of the HJB Equation \eqref{V_pb}]\label{theo:V}
There exists a solution 
$$V\in C^{3,2}\big(\overline{\TQ}\setminus\{x=e^{-r_1 (T-t)}d\}\big)\bigcap C\big(\overline{\TQ}\big)$$ 
to the HJB equation \eqref{V_pb} such that
\begin{align}\label{Vx_b}
&V_x<0 ,\\ \label{Vxx_b}
&V_{xx}>0\end{align}
in $\TQ$, and
\begin{align}\label{V_lim}
\lim\limits_{x\rightarrow e^{-r_1 (T-t)}d-}V_x=0,\quad \lim\limits_{x\rightarrow-\infty}V_x=-\infty,\quad \forall\;t\in[0,T].\end{align}
\end{theorem}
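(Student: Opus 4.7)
The plan is to avoid attacking the fully nonlinear HJB \eqref{V_pb} directly and instead to derive and solve an equivalent semi-linear parabolic equation via a dual (Legendre-type) transformation. The strategy has four stages: (i) identify the optimal-control structure and the three regions heuristically, (ii) dualize to reduce to a semi-linear PDE on a half-strip in a new spatial variable, (iii) obtain a classical solution of the dual PDE by a standard PDE scheme, and (iv) invert the dual and verify the sign and asymptotic conditions.

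For stage (i), assume tentatively $V_{xx}>0$ and $V_x<0$ and minimize in $\pi$ inside the $\inf$ in \eqref{V_pb} by splitting on the indicator. This yields $\pi^\ast=-(\mu-r_1)V_x/(\sigma^2 V_{xx})$ where it is consistent with $\pi^\ast<x$ (the saving region), $\pi^\ast=-(\mu-r_2)V_x/(\sigma^2 V_{xx})$ where it is consistent with $\pi^\ast>x$ (the borrowing region), and $\pi^\ast=x$ otherwise (the all-in-stock region). Substituting these minimizers in \eqref{V_pb} produces a regionwise explicit PDE for $V$. For stage (ii), introduce $y:=-V_x>0$ and the transform $\hat V(y,t):=V(x,t)+xy$. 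Under $V_{xx}>0$ this is well defined with $\hat V_y=x$, $\hat V_{yy}=-1/V_{xx}$, and $\hat V_t=V_t$; a short computation converts the regionwise PDE into a single parabolic equation for $\hat V$ on $\{y>0,\,0\leq t<T\}$ whose principal part is linear in $\hat V_{yy}$ with piecewise-smooth coefficients that depend on which region $(y,t)$ lies in, and whose remaining dependence on the lower derivatives is smooth. The terminal condition $V(\cdot,T)=(\cdot-d)^2$ and the vanishing of $V$ on $x=e^{-r_1(T-t)}d$ translate into a terminal condition for $\hat V$ and a boundary condition at $y=0^+$.

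For stage (iii), I would approximate this semi-linear dual problem by a sequence of truncated problems on bounded cylinders in $y$ with smoothed data. Each truncated problem is quasi-linear, and the Leray--Schauder fixed-point theorem applies once a uniform a priori $L^\infty$ bound is in place; such bounds can be obtained by comparison with the explicit quadratic super- and subsolutions arising from the standard no-gap Markowitz problems with rates $r_1$ and $r_2$. Interior regularity then follows from Sobolev embeddings together with Schauder estimates for linear parabolic equations, and a diagonal compactness argument removes the truncation to yield a classical solution on the entire half-strip. Maximum-principle arguments applied to $\hat V_y$ and $\hat V_{yy}$ then give $\hat V_y>0$ and a strict one-sided sign on $\hat V_{yy}$, which will translate to \eqref{Vx_b} and \eqref{Vxx_b}.

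For stage (iv), the Legendre inversion is legitimate because $V_{xx}>0$ has been secured uniformly in the truncation, and the resulting $V$ inherits $C^{3,2}$ regularity on $\overline{\TQ}\setminus\{x=e^{-r_1(T-t)}d\}$ while remaining continuous up to the boundary. The asymptotics \eqref{V_lim} follow from the behavior of $\hat V$ as $y\to 0^+$ (matching the upper boundary of $\TQ$) and as $y\to\infty$ (matching $x\to-\infty$). The main obstacle I anticipate is that the equation switches form across two a priori unknown free boundaries separating the three regions, yet the value function must be $C^{3,2}$ across them: the semi-linear structure in the dual variable is what makes the matching across these free boundaries tractable, but one still has to propagate the uniform strict convexity $V_{xx}>0$, the a priori $L^\infty$ bound, and the required compactness through the limit. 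Ensuring all three survive simultaneously, uniformly in the truncation parameter, is the delicate technical core of the argument.
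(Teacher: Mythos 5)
Your overall roadmap (identify the three-region structure, dualize, solve by truncation plus Leray--Schauder plus Schauder estimates plus compactness, then invert) coincides with the paper's, and the boundary data and the idea of sandwiching by the no-gap Markowitz solutions with rates $r_1$ and $r_2$ are all on target. However, there is a genuine gap in your stage (ii): a single Legendre transform does \emph{not} produce an equation whose principal part is linear in $\hat V_{yy}$. Tracking the three cases through the transform, in the two outer regions the resulting $\hat V$-equation is indeed linear in $\hat V_{yy}$, but in the all-in-stock region the original $V$-equation is $-V_t-\tfrac12\sigma^2 x^2V_{xx}-\mu xV_x=0$ and under $\hat V_y=x$, $\hat V_{yy}=-1/V_{xx}$ this becomes
\begin{align*}
-\hat V_t+\frac{\sigma^2}{2}\,\frac{\hat V_y^2}{\hat V_{yy}}+\mu\,y\,\hat V_y=0,
\end{align*}
which is fully nonlinear (the $\hat V_{yy}$ appears in a denominator). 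Consequently the machinery you invoke in stage (iii) -- Leray--Schauder for quasi-linear problems followed by linear Schauder estimates -- does not apply as stated to the $\hat V$-equation.

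The missing ingredient, and the crux of the paper's construction, is to differentiate the dual equation once more in $y$: setting $u:=-\hat V_y$ turns the above into $-u_t-\tfrac{\sigma^2}{2}(u/u_y)^2 u_{yy}+\sigma^2 u+\mu y u_y+\mu u=0$ in the middle region, whose coefficient of $u_{yy}$ depends only on $u,u_y$ and is therefore amenable to the quasi-linear theory. The paper then applies the further change of variables $u(y,t)=w(\ln y, T-t)$ to move the problem onto the whole line $\R\times(0,T]$ and to package all three regions into a single semi-linear equation $w_s={\cal T}w$ with a bounded Lipschitz coefficient $A(w/w_z)$, to which the truncation/Leray--Schauder/Schauder scheme applies cleanly. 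Without the extra differentiation and log-transform you will need a different framework for the fully nonlinear $\hat V$-equation, and the subsequent recovery of $C^{3,2}$ smoothness of $V$ (which in the paper comes from $w\in C^{2+\alpha,1+\alpha/2}$ via one integration and one Schauder bootstrap in $t$) would need to be re-derived. So the proposal needs this additional reduction before stage (iii) can be carried out.
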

\begin{proof}
We leave the proof to Section \ref{proofv}.
\end{proof}
Figure \ref{fig1} illustrates the function $V(x,t)$, based on \thmref{theo:V}.
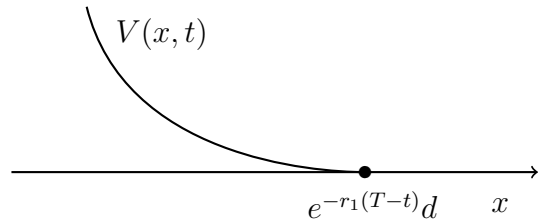
\begin{figure}[H]
\begin{center}
\begin{tikzpicture}[thick]
\draw[thick] (2, 2.2) .. controls (2.5, 0.2) and (5, 0) .. (5.7,0);
\draw[->, black] (1,0)--(8,0);
\node[below] at (5.8,-0.1) {$e^{-r_1 (T-t)}d$};
\node[below] at (5.7,0.23) {$\bullet$};
\node[below] at (3, 2.2) {$V(x,t)$};
\node[below] at (7.5,-0.2) {$x$};
\end{tikzpicture}
\end{center}
\caption{The solution of the HJB Equation \eqref{V_pb}: $V(x,t)$ with fixed $t$.}
\label{fig1}
\end{figure}

\subsection{Optimal Portfolio}
Let $V$ be given in \thmref{theo:V} and we divide the whole trading area 
$$\TQ=\big\{(x,t)\mid \;xe^{r_1(T-t)} <d,\;0\leq t< T\big\}$$ 
into three regions:
\begin{align*}
\hbox{\bf Borrowing-money Region $\fB$} &:=\Big\{(x,t)\in \TQ\;\Big|\;-\frac{\mu-r_2}{\si^2}\frac{V_{x}}{V_{xx}}>x\Big\}, \\[2mm]
\hbox{\bf All-in-stock Region $\fM$} &:=\Big\{(x,t)\in \TQ\;\Big|\;-\frac{\mu-r_2}{\si^2}\frac{V_{x}}{V_{xx}}\leq x\leq-\frac{\mu-r_1}{\si^2}\frac{V_{x}}{V_{xx}}\Big\}, \\[2mm]
\hbox{\bf Saving-money Region $\fS$}&:=\Big\{(x,t)\in \TQ\;\Big|\;-\frac{\mu-r_1}{\si^2}\frac{V_{x}}{V_{xx}}<x\Big\}.
\end{align*}
The following result shows that they are connected regions and separated by two free boundaries.
\begin{proposition}[Optimal Trading Regions]\label{theo:free_boundary}
We have
\begin{align*}
\fB&=\Big\{(x,t)\mid x<B(t),\;t\in[0,T)\Big\},\\
\fM&=\Big\{(x,t)\mid B(t)\leq x\leq L(t),\;t\in[0,T)\Big\},\\
\fS&=\Big\{(x,t)\mid L(t)<x<e^{-r_1(T-t)} d,\;t\in[0,T)\Big\},
\end{align*}
where $B(\cdot)$ and $L(\cdot)$ are respectively called the borrowing and saving boundaries, defined by
\begin{align*}
B(t):=V_x^{-1}(\cdot,t)(-e^{b(T-t)}),\quad L(t):=V_x^{-1}(\cdot,t)(-e^{l(T-t)}),
\end{align*}
with $V_x^{-1}(\cdot,t)$ being the inverse with respect to (w.r.t.) the spatial argument $x$, and the two functions $b(\cdot)$ and $l(\cdot)$ are given by \eqref{boundaryb} and \eqref{boundaryl}. Moreover, we have the estimate
\[0<B(t)<L(t)<e^{-r_1(T-t)} d,\quad t\in[0,T],\]
and the terminal values
\[B(T)=\frac{\mu-r_2}{\si^2+\mu-r_2}d,\quad L(T)=\frac{\mu-r_1}{\si^2+\mu-r_1}d.\] 
\end{proposition}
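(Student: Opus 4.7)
The plan is to derive the proposition as a consequence of the regularity of $V$ established in \thmref{theo:V}, together with the dual-variable analysis that underlies the construction of $V$ in Sections~\ref{sec:Eq}--\ref{sec:hjb}. The starting point is that, by \eqref{Vxx_b}, $x\mapsto V_x(x,t)$ is strictly increasing for each fixed $t\in[0,T)$, and the limits in \eqref{V_lim} pin down its range as $(-\infty,0)$. Consequently $V_x(\cdot,t)$ is a $C^1$ diffeomorphism from $(-\infty,e^{-r_1(T-t)}d)$ onto $(-\infty,0)$, the inverse $V_x^{-1}(\cdot,t)$ is smooth and strictly increasing, and the curves $B(t)$, $L(t)$ in the statement are unambiguously defined once $b(\cdot)$ and $l(\cdot)$ from \eqref{boundaryb}-\eqref{boundaryl} are in hand.

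Next I would translate the region-defining inequalities into the dual coordinate $y$ via $V_x(x,t)=-e^{y}$. Setting $X(y,t):=V_x^{-1}(\cdot,t)(-e^{y})$, differentiation in $y$ gives $-V_x/V_{xx}=-X_y(y,t)>0$, so the two threshold inequalities
\[-\frac{\mu-r_2}{\sigma^2}\frac{V_x}{V_{xx}}>x\qquad\text{and}\qquad -\frac{\mu-r_1}{\sigma^2}\frac{V_x}{V_{xx}}<x\]
become pointwise inequalities between $X(y,t)$ and scalar multiples of $-X_y(y,t)$. The semi-linear parabolic PDE for $X$ derived in Section~\ref{sec:Eq} has piecewise coefficients that switch at exactly two dual-variable levels, which is precisely how $b(T-t)$ and $l(T-t)$ enter the construction. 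Pulling back through $x=X(y,t)$ then produces the claimed representations $B(t)=V_x^{-1}(\cdot,t)(-e^{b(T-t)})$ and $L(t)=V_x^{-1}(\cdot,t)(-e^{l(T-t)})$, and the connectedness of $\fB,\fM,\fS$ as well as their order follow from the strict monotonicity of $V_x^{-1}(\cdot,t)$.

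For the estimate $0<B(t)<L(t)<e^{-r_1(T-t)}d$: monotonicity of $V_x^{-1}(\cdot,t)$ reduces $B(t)<L(t)$ to $b(T-t)>l(T-t)$, which I would read off from the explicit characterization \eqref{boundaryb}-\eqref{boundaryl}; and $L(t)<e^{-r_1(T-t)}d$ is immediate from $V_x^{-1}(\cdot,t)(0^-)=e^{-r_1(T-t)}d$ together with $-e^{l(T-t)}<0$. The strict lower bound $B(t)>0$ requires a separate short argument showing $V_x(0,t)<-e^{b(T-t)}$, which I would obtain by inspecting the HJB equation \eqref{V_pb} at $x=0$ in combination with the form of $b$. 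The terminal values then follow by direct substitution of $V_x(x,T)=2(x-d)$ and $V_{xx}(x,T)=2$ into $x=-\frac{\mu-r_i}{\sigma^2}V_x/V_{xx}$, yielding $B(T)=\frac{\mu-r_2}{\sigma^2+\mu-r_2}d$ and, with $r_2$ replaced by $r_1$, $L(T)=\frac{\mu-r_1}{\sigma^2+\mu-r_1}d$.

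The main obstacle I foresee is guaranteeing that each of the three regions is a genuine interval in $x$, rather than an a priori union of intervals. This amounts to strict monotonicity in $y$ of the auxiliary functions $y\mapsto X(y,t)+\tfrac{\mu-r_i}{\sigma^2}X_y(y,t)$ for $i=1,2$, so that each threshold inequality has a unique crossing in $x$. This non-degeneracy, as well as the strict ordering $b(T-t)>l(T-t)$ that underpins $B(t)<L(t)$, rests on a strong-comparison or strict-convexity property of the semi-linear PDE solution constructed in Section~\ref{sec:hjb}, together with the explicit analytic form of the free-boundary curves $b$ and $l$ built there.
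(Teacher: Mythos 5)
The high-level picture you sketch is the same as the paper's: work in the dual variable, identify the three regions with the sets $\mathcal{B},\mathcal{N},\mathcal{S}$ for $w$ built in Section~\ref{sec:free boundary}, pull back through the strictly increasing map $V_x^{-1}(\cdot,t)$, and read off the terminal values from $V(x,T)=(x-d)^2$. Your terminal-value computation and the reduction of $B(t)<L(t)<e^{-r_1(T-t)}d$ to $b>l$ and to $-e^{l(T-t)}<0$ are both correct.

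The genuine gap is in the step you yourself flag as ``the main obstacle'': showing each region is a single interval in $x$. You propose to get this from strict monotonicity in $z$ of $z\mapsto -w-a_iw_z$ (equivalently $X+a_iX_y$), citing a vague ``strong-comparison or strict-convexity property'' and an ``explicit analytic form of $b$ and $l$.'' This is not what the paper does, and it is more than what the paper establishes. There is no explicit formula for $b$ or $l$; they are defined only as sup/inf of level sets. More importantly, the paper never proves $I:=w+a_2w_z$ is monotone in $z$; it proves the weaker but sufficient fact that for each fixed $s$ the set $\{I<0\}$ is a left half-line and $\{I>0\}$ a right half-line. This is the real technical content, carried out in Lemmas \ref{lem:b+-}--\ref{lim:>f} by a localization-and-strong-maximum-principle argument (comparing $w$ to the solution of an auxiliary frozen-coefficient problem, then applying the strong maximum principle to $I$ in the region where $A\equiv a_2$). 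Your proposal gestures at the right neighbourhood of ideas but does not supply this mechanism, and the stronger monotonicity you invoke is only obtained in the paper (as Lemma \ref{lem:Az<}) under the extra hypotheses \eqref{CD0}--\eqref{CD2}, which are not assumed in Proposition~\ref{theo:free_boundary}. A secondary point: once the interval structure of $\fB$ is in hand, $B(t)>0$ is immediate because $(0,t)\in\fB$ follows directly from $V_x<0$, $V_{xx}>0$; your suggested route via ``inspecting the HJB equation at $x=0$'' to show $V_x(0,t)<-e^{b(T-t)}$ is both unnecessary and harder to justify.
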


\begin{proof}
We leave the proof to Section \ref{sec:freeboundary}.
\end{proof}

Figure \ref{regions} illustrates the borrowing and saving boundaries, based on \propref{theo:free_boundary}.

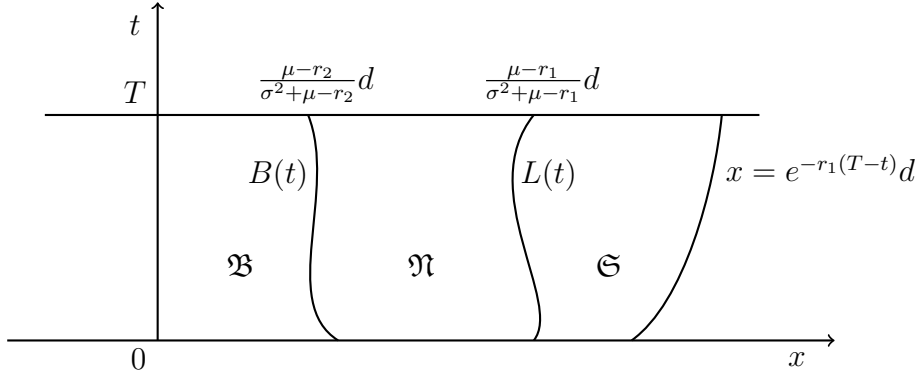
\begin{figure}[H]
\begin{center}
\begin{tikzpicture}[thick]
\draw[->, black] (-2,0)--(9,0);
\draw[->, black] (0,0)--(0,4.5);
\draw[-, black] (-1.5,3)--(8,3);
\node at (-0.3, 3.3) {$T$};
\node at (-0.3, 4.2) {$t$};
\draw[thick] (5, 3) .. controls (4.2, 2) and (5.4, 0.5) .. (5,0);
\draw[thick] (2, 3) .. controls (2.4, 2) and (1.6, 0.5) .. (2.4,0);
\node at (1.1, 1) {$\fB$};
\node at (3.5, 1) {$\fM$};
\node at (6, 1) {$\fS$};
\node at (1.6, 2.2) {$B(t)$};
\node at (5.2, 2.2) {$L(t)$};

\node[above] at (2.1, 3){$\frac{\mu-r_2}{\si^2+\mu-r_2}d$};
\node[above] at (5.1, 3){$\frac{\mu-r_1}{\si^2+\mu-r_1}d$};

\node[below] at (8.5, 0) {$x$};
\node[left] at (0,-0.24) {$0$};

\draw[thick] (6.3, 0) .. controls (7, 0.5) and (7.4, 2) .. (7.5, 3);

\node[right] at (7.4, 2.3) {$x=e^{-r_1(T-t)} d$};
\end{tikzpicture}
\end{center}
\vspace{0pt}
\caption{The optimal trading regions.} \label{regions}
\end{figure}

\begin{remark}
In fact, 
\[B(t)=\sup\Big\{x\;\Big|\;-\frac{\mu-r_2}{\si^2}\frac{V_{x}}{V_{xx}}>x,\; (x,t)\in \TQ\Big\},\]
and
\[L(t)=\inf\Big\{x\;\Big|\;-\frac{\mu-r_1}{\si^2}\frac{V_{x}}{V_{xx}}<x, \; (x,t)\in \TQ\Big\}.\]
\end{remark}

We will also establish the first-order smoothness of the boundaries $B(\cdot)$ and $L(\cdot)$ under certain conditions (see \propref{prop:C1}).

\begin{theorem}[Verification Theorem]\label{veri}
The function $V$ given in \thmref{theo:V} is the same as the value function $V$ defined by \eqref{value}.
Moreover, the optimal portfolio to the problem \eqref{value}, given in the feedback form, is
\[
\pi(x,t)=
\begin{cases}
-\frac{\mu-r_2}{\si^2}\frac{V_{x}(x,t)}{V_{xx}(x,t)}, & (x,t)\in \fB, \\[3mm]
x, & (x,t)\in \fM, \\[3mm]
-\frac{\mu-r_1}{\si^2}\frac{V_{x}(x,t)}{V_{xx}(x,t)}, & (x,t)\in \fS.
\end{cases}
\]
\end{theorem}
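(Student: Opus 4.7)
The plan is a classical verification argument, tailored to the piecewise-linear drift of $X$ and to the upper free boundary $x=e^{-r_1(T-t)}d$. Write
\[
H(\pi;x,t,p,q):=\tfrac12\sigma^2\pi^2 q+\big((r_1\chi_{\pi<x}+r_2\chi_{\pi>x})(x-\pi)+\mu\pi\big)p,
\]
so that \thmref{theo:V} says $V_t+\inf_\pi H(\pi;x,t,V_x,V_{xx})=0$ on $\TQ$ with the listed boundary and terminal data.

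The first step is to identify the pointwise minimizer in $H$. Since $V_{xx}>0$ by \eqref{Vxx_b}, $\pi\mapsto H(\pi;x,t,V_x,V_{xx})$ is a convex piecewise quadratic with a kink at $\pi=x$; its unconstrained minima on $\{\pi<x\}$ and $\{\pi>x\}$ sit at $\pi_i^\ast=-\frac{\mu-r_i}{\sigma^2}\frac{V_x}{V_{xx}}$, $i=1,2$. Using $V_x<0$ from \eqref{Vx_b} and $r_1<r_2<\mu$ gives $0<\pi_2^\ast<\pi_1^\ast$, and a three-case comparison with the kink $\pi=x$ reproduces exactly the feedback stated: $\pi_2^\ast$ on $\fB$, the kink value $x$ on $\fM$, and $\pi_1^\ast$ on $\fS$. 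This also re-derives the trichotomy of regions in \propref{theo:free_boundary}.

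Next, for an arbitrary $\pi\in\Pi_t$ with wealth $X$ solving \eqref{X_eq}, I would apply It\^o's formula to $V(X_s,s)$ on $[t,\tau_n]$ with
\[
\tau_n:=\inf\big\{s\ge t:\,X_s\ge e^{-r_1(T-s)}d-\tfrac1n\text{ or }|X_s|\ge n\big\}\wedge T,
\]
so that $X_s$ stays in a compact subset of the interior of $\TQ$ where the $C^{3,2}$ regularity of \thmref{theo:V} is available. The drift integrand equals $V_t+H(\pi_s;X_s,s,V_x,V_{xx})\ge 0$ by the HJB, and the diffusion term is a true martingale on each $[t,\tau_n]$, so $\E[V(X_{\tau_n},\tau_n)]\ge V(x,t)$. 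Letting $n\to\infty$, continuity of $V$ on $\overline{\TQ}$ together with $V(e^{-r_1(T-\cdot)}d,\cdot)=0$ and $V(\cdot,T)=(\cdot-d)^2$ yields $V(X_{\tau_n},\tau_n)\to 0$ on $\{\tau<T\}$ and $V(X_{\tau_n},\tau_n)\to(X_T-d)^2$ on $\{\tau=T\}$. Note $V\ge 0$ follows from \eqref{Vx_b} combined with the vanishing of $V$ at the upper boundary; the quadratic domination $V(x,t)\le(xe^{r_1(T-t)}-d)^2$ (delivered by the trivial strategy $\pi\equiv0$) together with the standard $L^2$ estimate for $X$ supplies a dominating function, and we arrive at $\E[(X_T-d)^2]\ge\E[V(X_\tau,\tau)]\ge V(x,t)$.

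Finally I would show that the feedback $\pi(\cdot,\cdot)$ from the statement generates a unique strong solution $X^\ast$ on $[t,T]$ with $\pi^\ast_s:=\pi(X^\ast_s,s)\in\Pi_t$. The $C^{3,2}$ regularity of $V$ and $V_{xx}>0$ make the feedback locally Lipschitz in $x$ strictly inside $\TQ$, producing strong local existence. The first limit in \eqref{V_lim} forces $\pi(x,t)\to 0$ as $x\uparrow e^{-r_1(T-t)}d$, so the closed-loop drift degenerates to $r_1 X^\ast$ near the upper boundary; a comparison with the deterministic ODE $\dot y=r_1 y$, whose solution from below equals $e^{-r_1(T-s)}d$ at $s=T$, then prevents $X^\ast$ from exiting $\TQ$ before $T$. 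With $\pi^\ast\in\Pi_t$ in hand, the same It\^o computation becomes an equality because the feedback realises the pointwise infimum in $H$, giving $V(x,t)=\E[(X^\ast_T-d)^2]$ and matching the lower bound, thereby identifying $V$ with the value function \eqref{value}.

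The principal obstacle will be this last admissibility step: producing a strong solution to the closed-loop SDE on the full horizon together with $\E\int_t^T|\pi_s^\ast|^2\ds<\infty$. This requires quantitative control of $V_x/V_{xx}$, particularly as $x\to-\infty$ (so that the feedback has at most linear growth), which should be extracted from the dual-transform construction of $V$ in Sections \ref{sec:Eq}--\ref{sec:V_solu}; a Gronwall estimate on the linear-growth drift then closes the $L^2$ bound. Everything else reduces to routine localization and a Fatou or dominated convergence passage.
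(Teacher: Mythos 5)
Your proposal follows essentially the same route as the paper: localize, apply It\^o's formula and the HJB inequality to obtain the supermartingale bound $\E[V(X_{\cdot\wedge\tau_n},\cdot\wedge\tau_n)]\ge V(x,t)$, pass to the limit by dominated convergence using a quadratic bound on $V$, and then obtain the reverse inequality by exhibiting an admissible strong solution of the closed-loop SDE (via local Lipschitz continuity plus linear growth of $\pi^*$, the latter coming from $|V_x/V_{xx}|\le C(1+|x|)$ extracted from the dual estimates \eqref{vy_b}--\eqref{vyy_b}). Two points deserve care. First, the dominating bound: you invoke ``the trivial strategy $\pi\equiv 0$'' to claim $V(x,t)\le(xe^{r_1(T-t)}-d)^2$, which reads as circular (it presupposes $V$ is already the value function) and in any case is wrong for $x<0$ (the deterministic flow then uses rate $r_2$); the clean route is the a priori estimate \eqref{V_B2}, $V(x,t)\le e^{-\theta_2(T-t)}(e^{-r_1(T-t)}d-x)^2$, which comes directly from the dual construction \eqref{V_duel} and \eqref{vy_b}, exactly as the paper does. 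Second, your comparison with $\dot y=r_1 y$ to keep $X^*$ inside $\TQ$ is not immediate: the upper boundary $x=e^{-r_1(T-t)}d$ is a characteristic curve of the closed-loop drift (the drift is tangent, not repelling), so the ODE comparison needs a more careful argument near the degenerate diffusion. The paper sidesteps this by extending $\pi^*$ to a globally defined locally Lipschitz, linear-growth feedback, invoking Mao's existence/estimate theorems on all of $\R$, and checking $\hat\pi\in\Pi_t$ directly; you would do well to adopt that route rather than the ODE comparison. Aside from these two details, your stopping time (keeping $X$ away from the free boundary) differs cosmetically from the paper's (which truncates $|V_x|$ and $\int|\pi|^2$), but both yield a true martingale for the stochastic integral and the same conclusion.
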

\begin{proof}
We leave the proof to Section \ref{sec:veri}.
\end{proof}

We have the following financial findings from the above theoretical results. When one's wealth is far from her target (i.e. $x<B(t)$), she must borrow money to invest in the stock so as to maximize the chance to achieve her goal $d$. By contrast, if her wealth is sufficiently close to her target (i.e. $x>L(t)$), she does not need to invest all her wealth in the stock and can save some in the money account to reduce her risk. In the middle range (i.e. $B(t)\leq x\leq L(t)$), she does not need to borrow or save money, and shall invest all her wealth in the stock so that no trading happens there. Therefore, we see the optimal strategy is a mixture of \ctr strategy in the first two scenarios (as suggested by most continuous-time models) and \dtr strategy in the last scenario (as suggested by models with transaction costs).

Also, the optimal portfolio is long in the stock in all scenarios, so it is never optimal to short sale the stock. As a consequence, the portfolio is still optimal if we restrict us to the control set with no-shorting constraint:
\begin{align*}
\Big\{\pi_s\in L^2_{\cal F}([t,T];\R)\;\Big|\; \pi_s\geq 0,\; s\in[t,T]\Big\}
\end{align*}
in the problem \eqref{value}.

When $r_2\to r_1$, the optimal feedback portfolio reduces to
\[\pi(x,t)=-\frac{\mu-r_1}{\si^2} \frac{V_{x}(x,t)}{V_{xx}(x,t)}.\]
This recovers the classical optimal portfolio when there is no gap between the borrowing and saving rates. In this case, continuously tradings happen all the time, and the all-in-stock region is a zero measure set.

\subsection{Numerical Study}

We now present numerical examples to depict the behaviors of the borrowing and saving boundaries as well as the optimal value when the borrowing and saving rates change.

\begin{figure}[H]
\begin{center}
\subfigure[Free boundaries $B(\cdot)$ and $L(\cdot)$ when $r_1=0.02,\;r_2=0.08$]
{\includegraphics[width=0.47\linewidth]{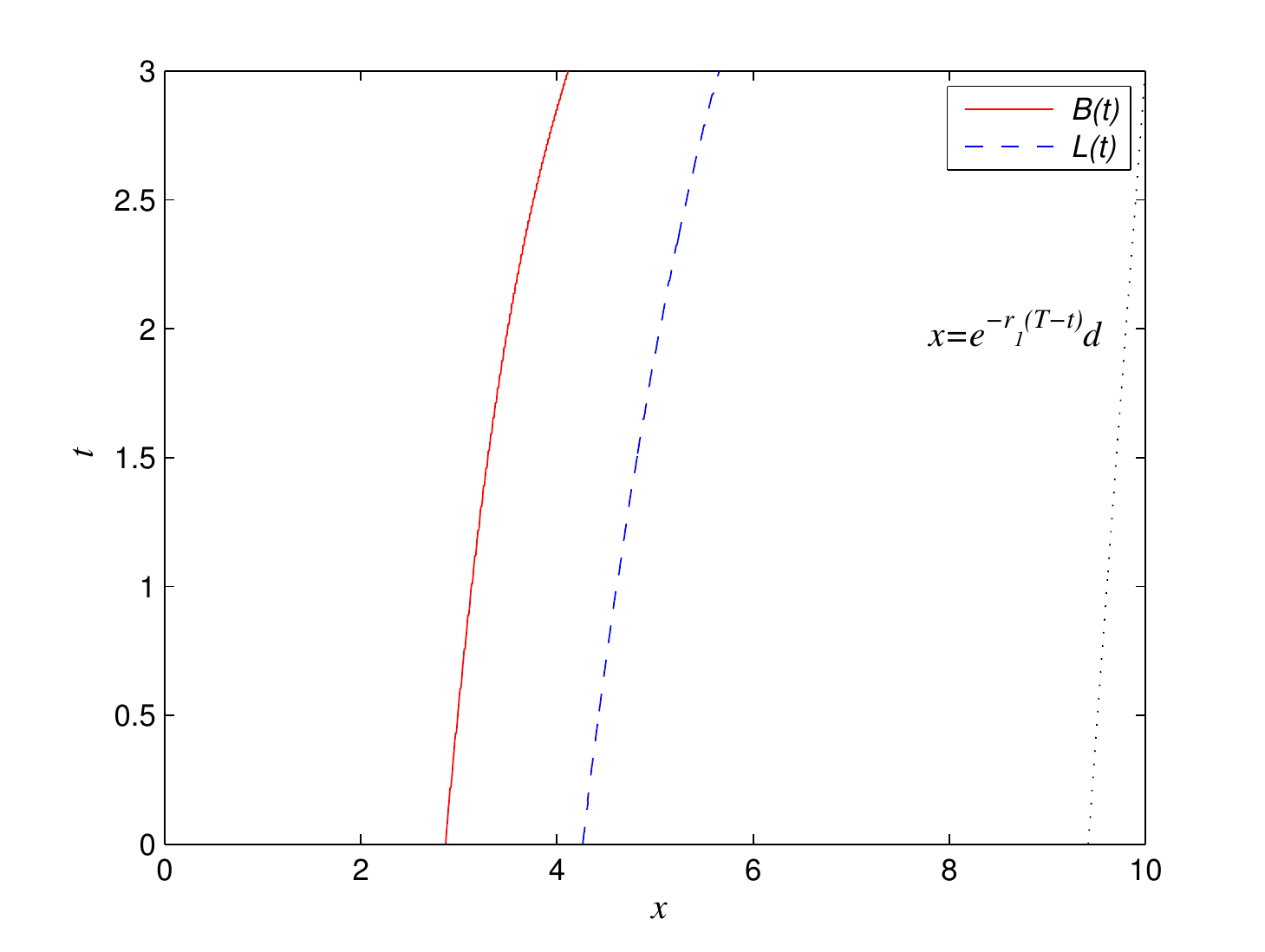} }
\subfigure[Free boundaries $B(\cdot)$ and $L(\cdot)$ when $r_1=0.03,\;r_2=0.07$]
{\includegraphics[width=0.47\linewidth]{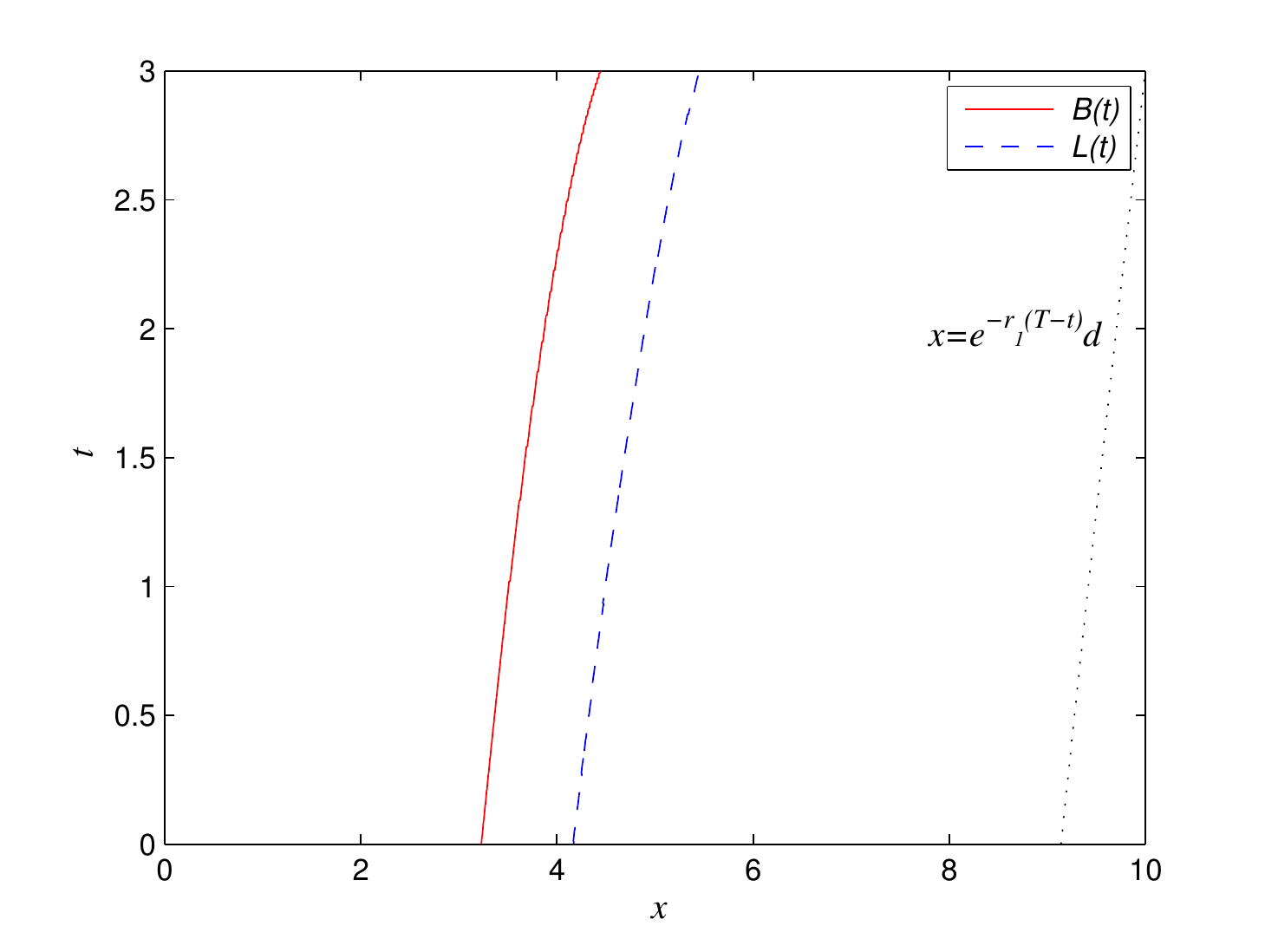} }
\end{center} 
\begin{center}
\subfigure[Free boundaries $B(\cdot)$ and $L(\cdot)$ when $r_1=0.04,\;r_2=0.06$]
{\includegraphics[width=0.47\linewidth]{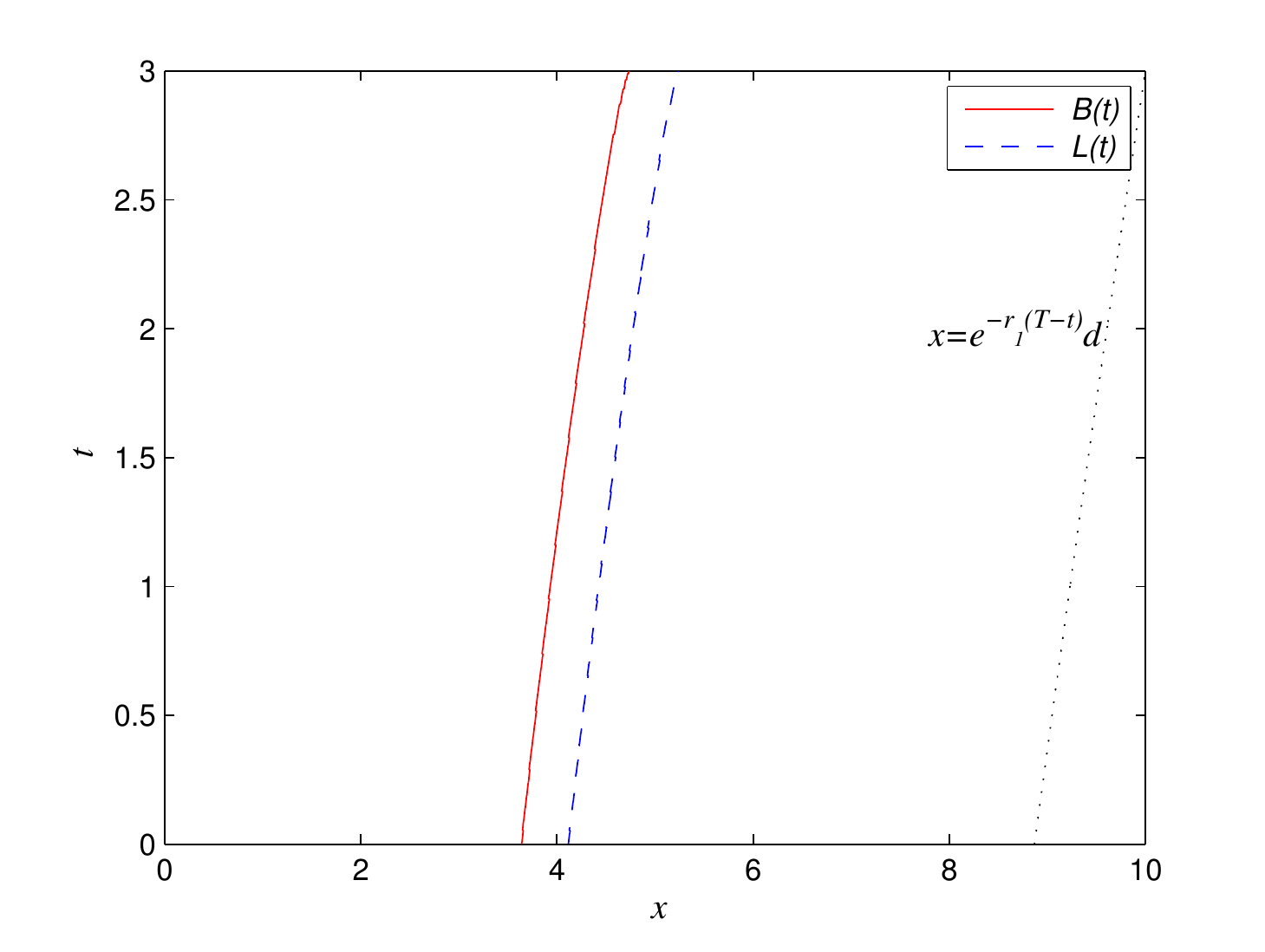} }
\subfigure[Free boundaries $B(\cdot)$ and $L(\cdot)$ when $r_1=r_2=0.05$]
{\includegraphics[width=0.47\linewidth]{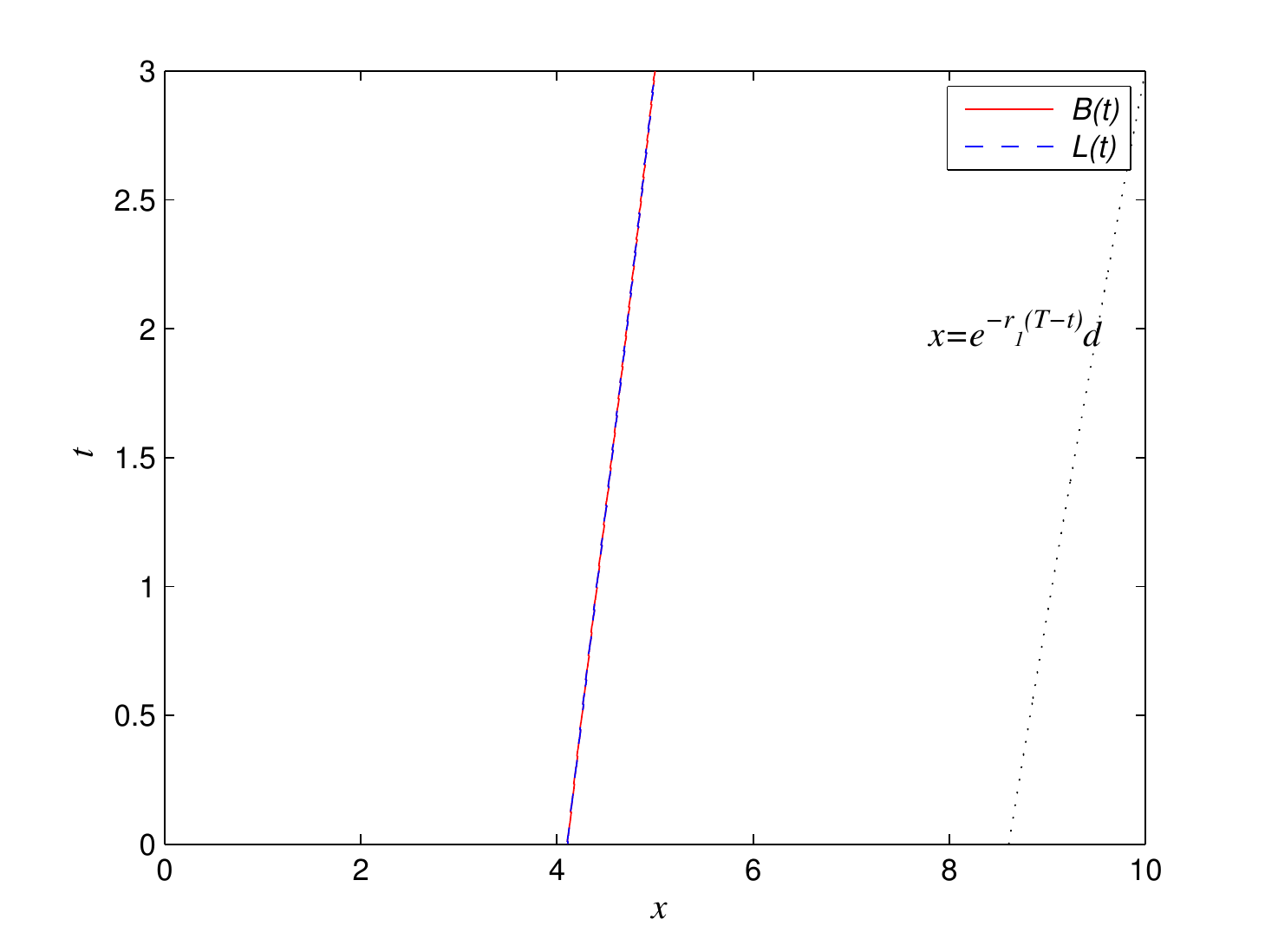} }
\end{center}
\caption{The borrowing and saving boundaries $B(\cdot)$ and $L(\cdot)$ under various saving rates $r_1$ and borrowing rates $r_2$.}\label{fig_FB}
\end{figure}
Figure \ref{fig_FB} plots the saving and borrowing boundaries under various pairs of saving and borrowing rates $(r_1, r_2)$, where $d=10$, $\mu=0.15$, $\si^2=0.10$ and $T=3$ are fixed. The borrowing boundary is always on the left of the saving boundary. As the gap between the borrowing and saving rates becomes smaller, the borrowing boundary moves to the right whereas the saving boundary moves to the left, becoming closer. Thus, Borrowing-money Region $\fB$ and Saving-money Region $\fS$ expand, whereas All-in-stock Region $\fM$ shrinks. The latter becomes a curve when the gap between the borrowing and saving rates disappears, in which case, the optimal trading strategy becomes a continuously trading one all the time. This is consistent with the common financial intuition that one should trade more frequently if the gap becomes smaller, because the marginal benefit of extra leveraging becomes higher as the costs of borrowing money become less.

\begin{figure}[H]
\begin{center}
\subfigure[Value function, $V(x,0)$, against $x$]
{\includegraphics[width=0.47\linewidth]{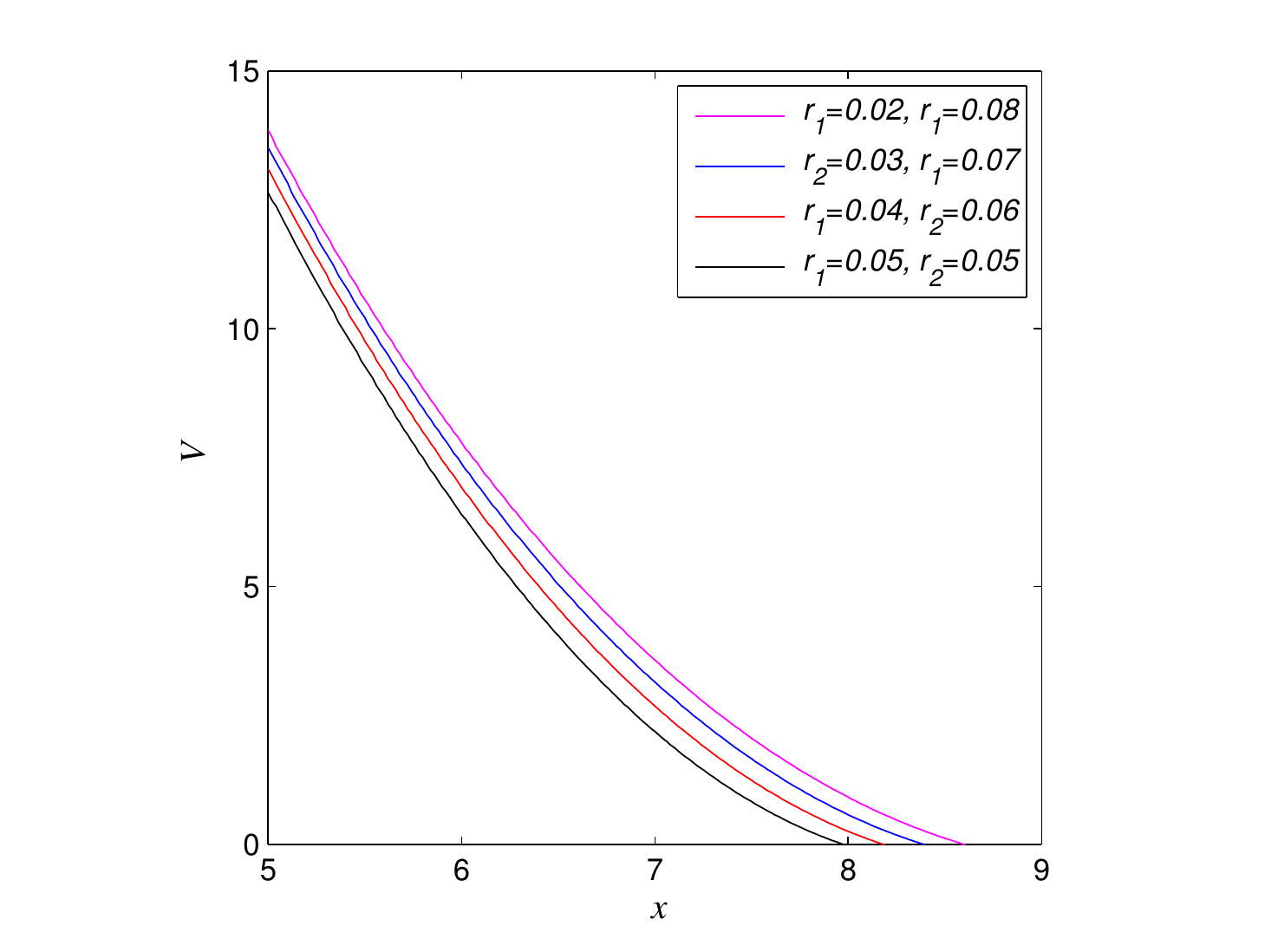} }
\subfigure[Value function, $V(1,t)$, against $t$]
{\includegraphics[width=0.47\linewidth]{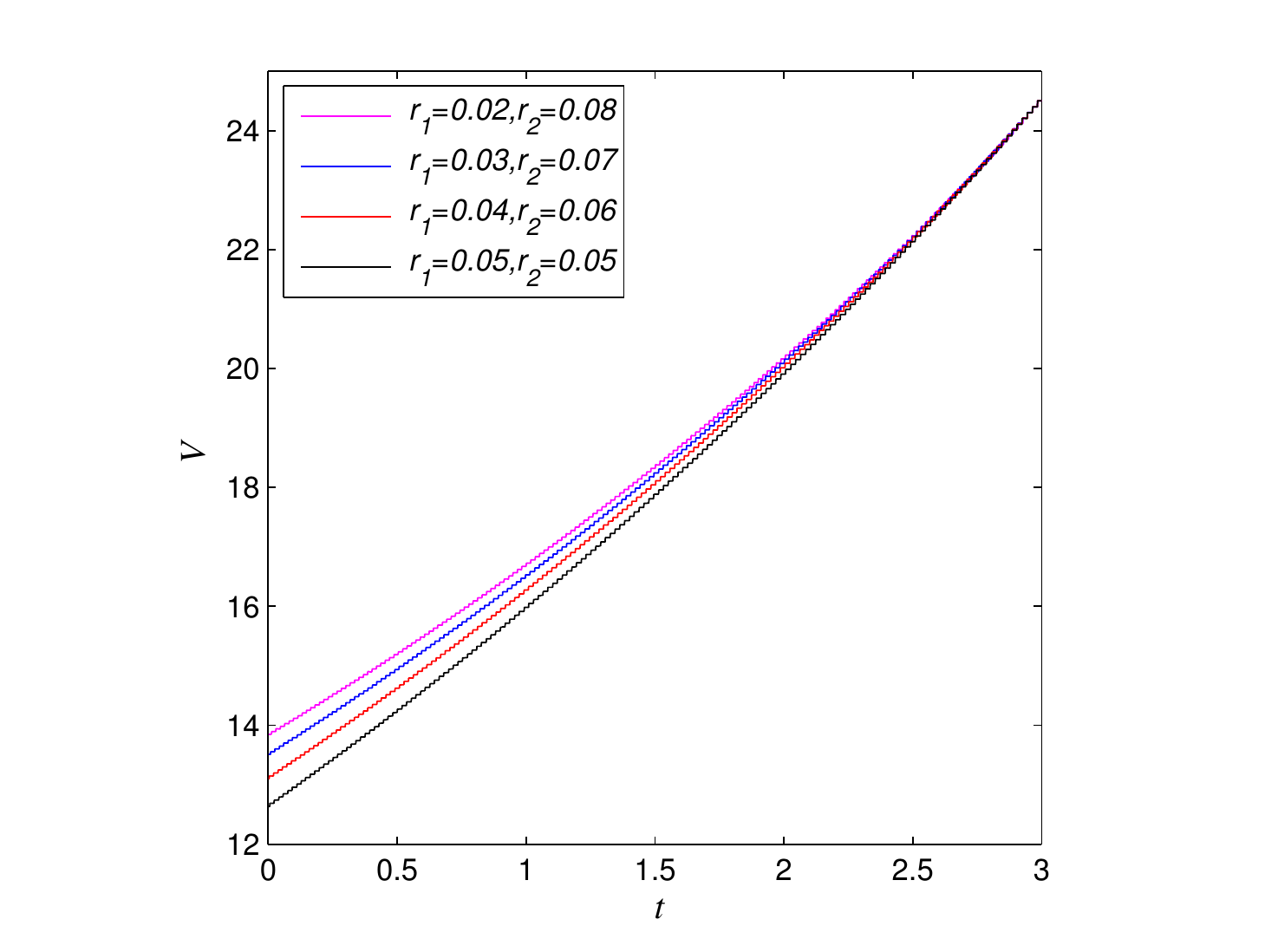} }
\caption{The value function $V$ under various saving rates $r_1$ and borrowing rates $r_2$.}\label{fig_VF}
\end{center}
\end{figure}
The left and right panels of Figure \ref{fig_VF} display, respectively, the value function against the wealth position and time under various pairs of saving and borrowing rates $(r_1, r_2)$, where $d=10$, $\mu=0.15$, $\si^2=0.10$ and $T=3$ are the same as in Figure \ref{fig_FB}. Intuitively speaking, the smaller the gap between the borrowing and saving rates, the smaller the risk (i.e. the value function). Both the left and right panels of Figure \ref{fig_VF} confirm this. We also see from the left panel that the bigger the wealth position the smaller the risk, since bigger wealth position is closer to the fixed target. Meanwhile, the right panel demonstrates that the shorter the time to the maturity, the higher the risk, since shorter time to the maturity means less trading opportunities and less likely to achieve the goal.

\begin{figure}[H]
\begin{center}
\subfigure[Free boundaries $B(\cdot)$ and $L(\cdot)$ when $\mu=0.20$]
{\includegraphics[width=0.47\linewidth]{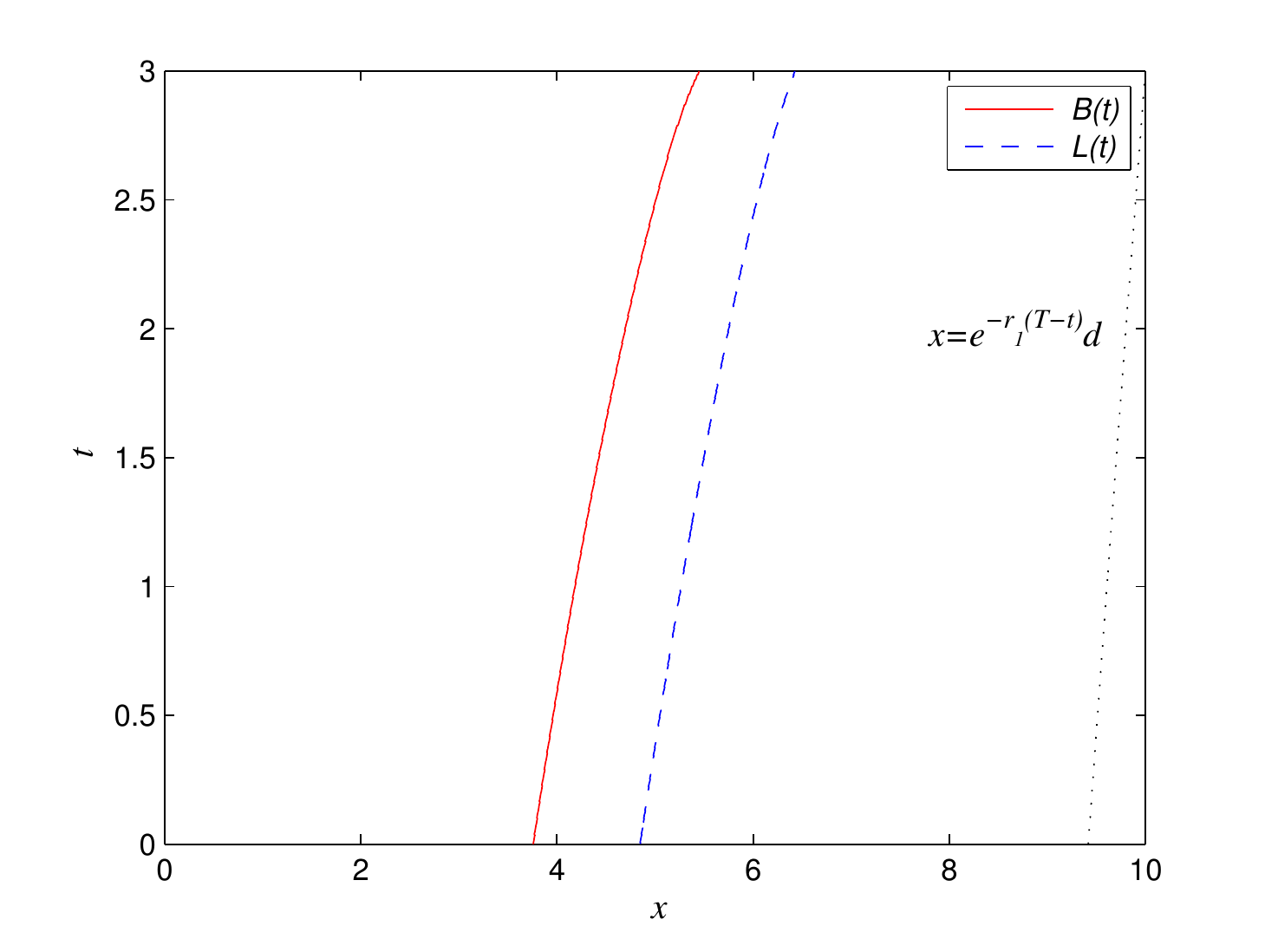} }
\subfigure[Free boundaries $B(\cdot)$ and $L(\cdot)$ when $\mu=0.25$]
{\includegraphics[width=0.47\linewidth]{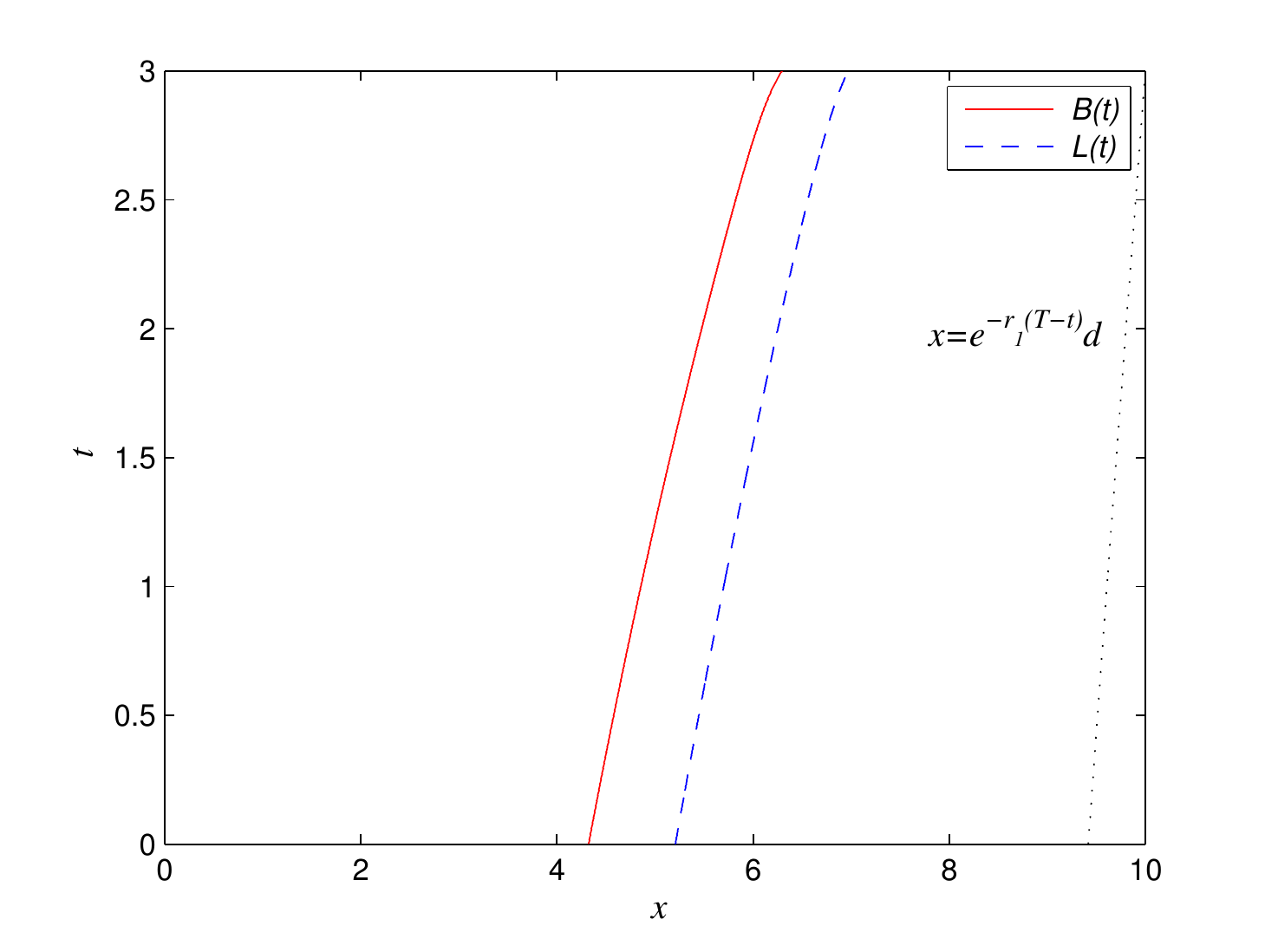} }
\subfigure[Free boundaries $B(\cdot)$ and $L(\cdot)$ when $\mu=0.30$]
{\includegraphics[width=0.47\linewidth]{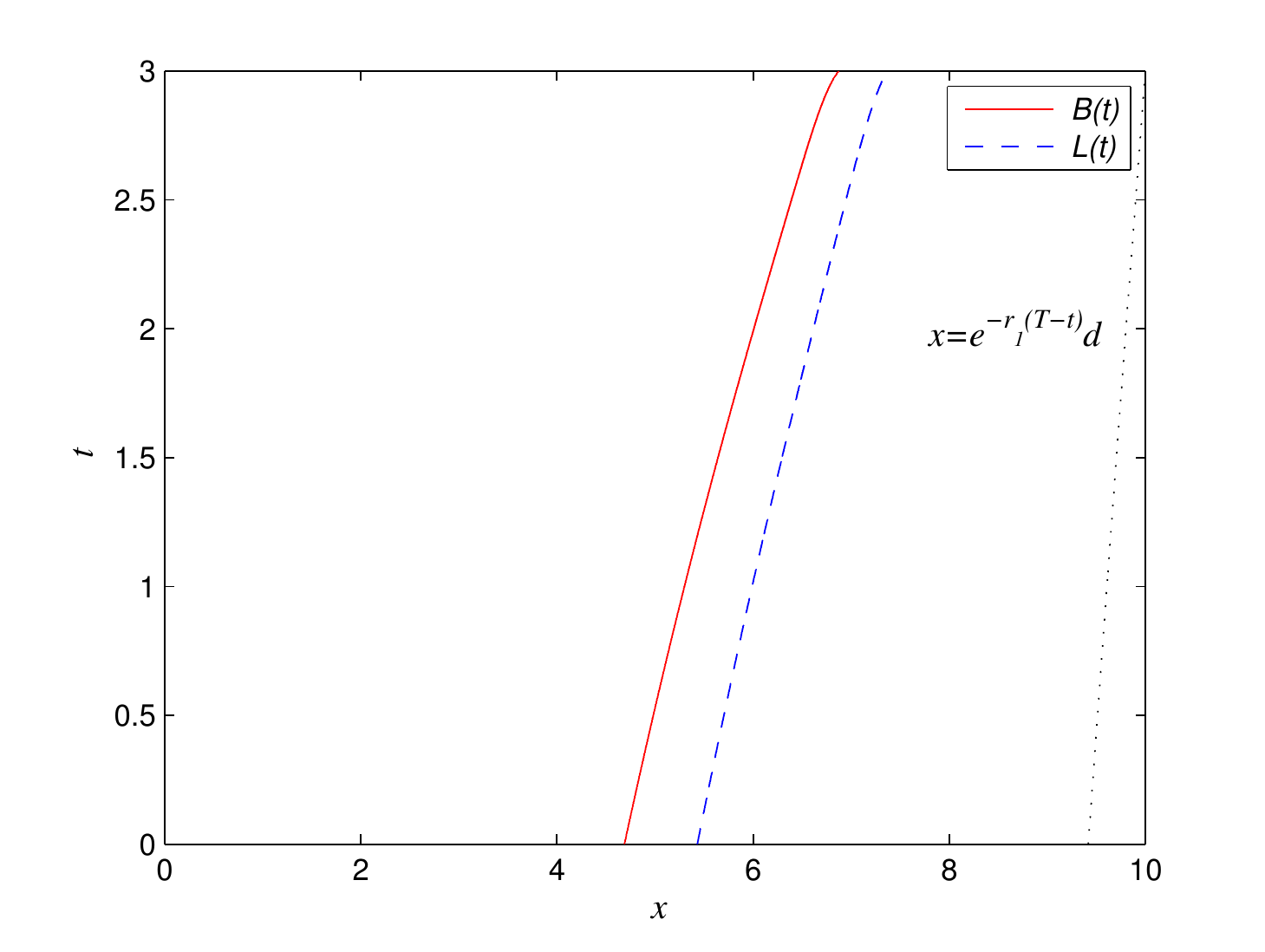} }
\subfigure[Free boundaries $B(\cdot)$ and $L(\cdot)$ when $\mu=0.35$]
{\includegraphics[width=0.47\linewidth]{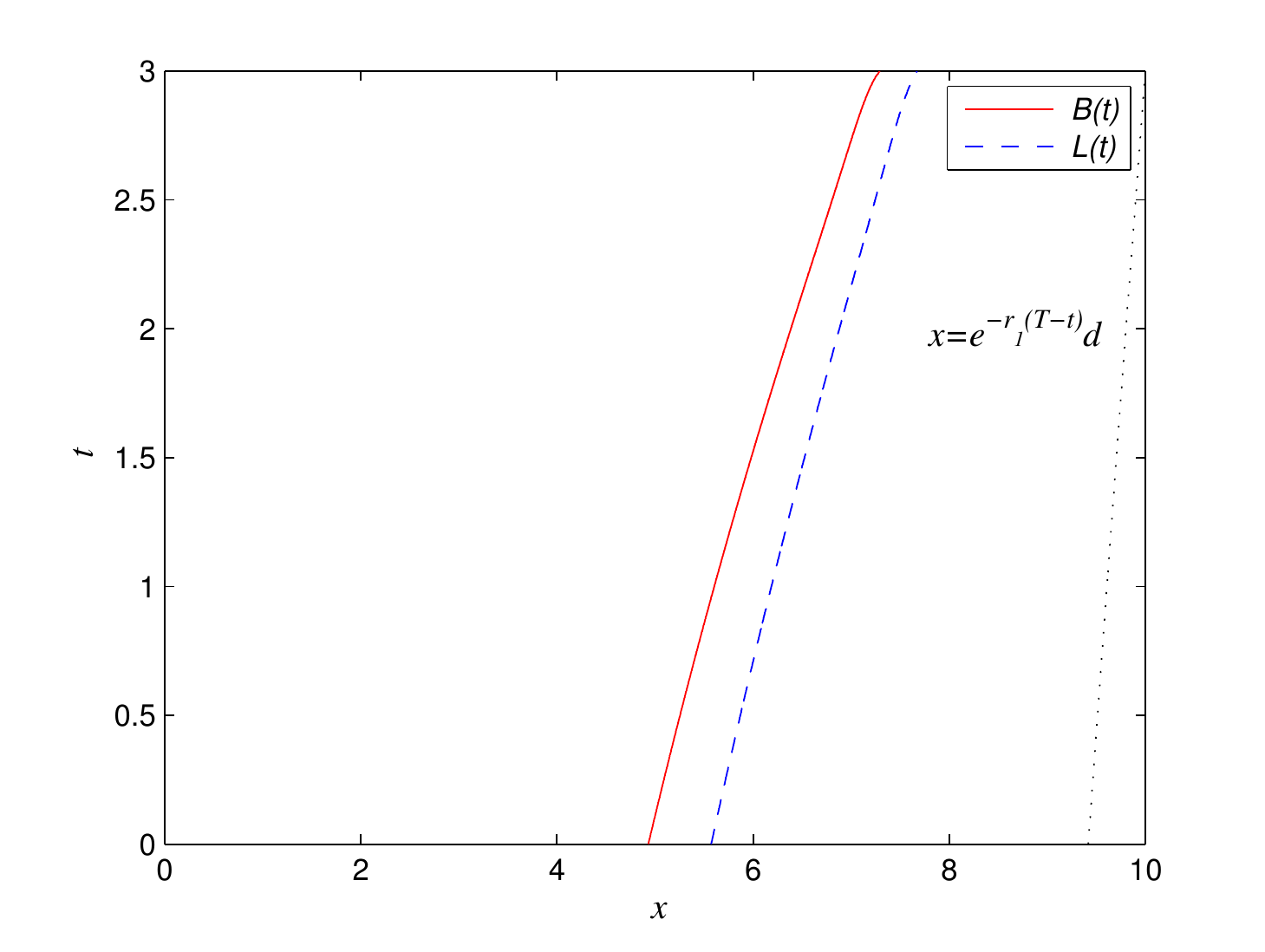} }
\end{center}
\caption{The borrowing and saving boundaries $B(\cdot)$ and $L(\cdot)$ under various return rates $\mu$. }
\label{fig_FB2}
\end{figure}
Figure \ref{fig_FB2} depicts the borrowing and saving boundaries under various return rates $\mu$, where $r_1=0.02$, $r_2=0.08$, $\si^2=0.10$ and $d=10$ are fixed. These pictures show that the two boundaries move to the right as $\mu$ increases. It seems that the closer the time to the maturity day, the faster the speed to move to the right. As the return rate of the stock increases, the role of the gap should become less important, encouraging one in a bad wealth position to borrow money to invest in the stock, discouraging one in a good wealth position to save money. Hence, as $\mu$ increases, Borrowing-money Region $\fB$ shall expand, whereas Saving-money Region $\fS$ and All-in-stock Region $\fM$ shall shrink. The above pictures confirm this financial intuition. 
\begin{figure}[H]
\begin{center}
\subfigure[Free boundaries $B(\cdot)$ and $L(\cdot)$ when $\si^2=0.30$]
{\includegraphics[width=0.47\linewidth]{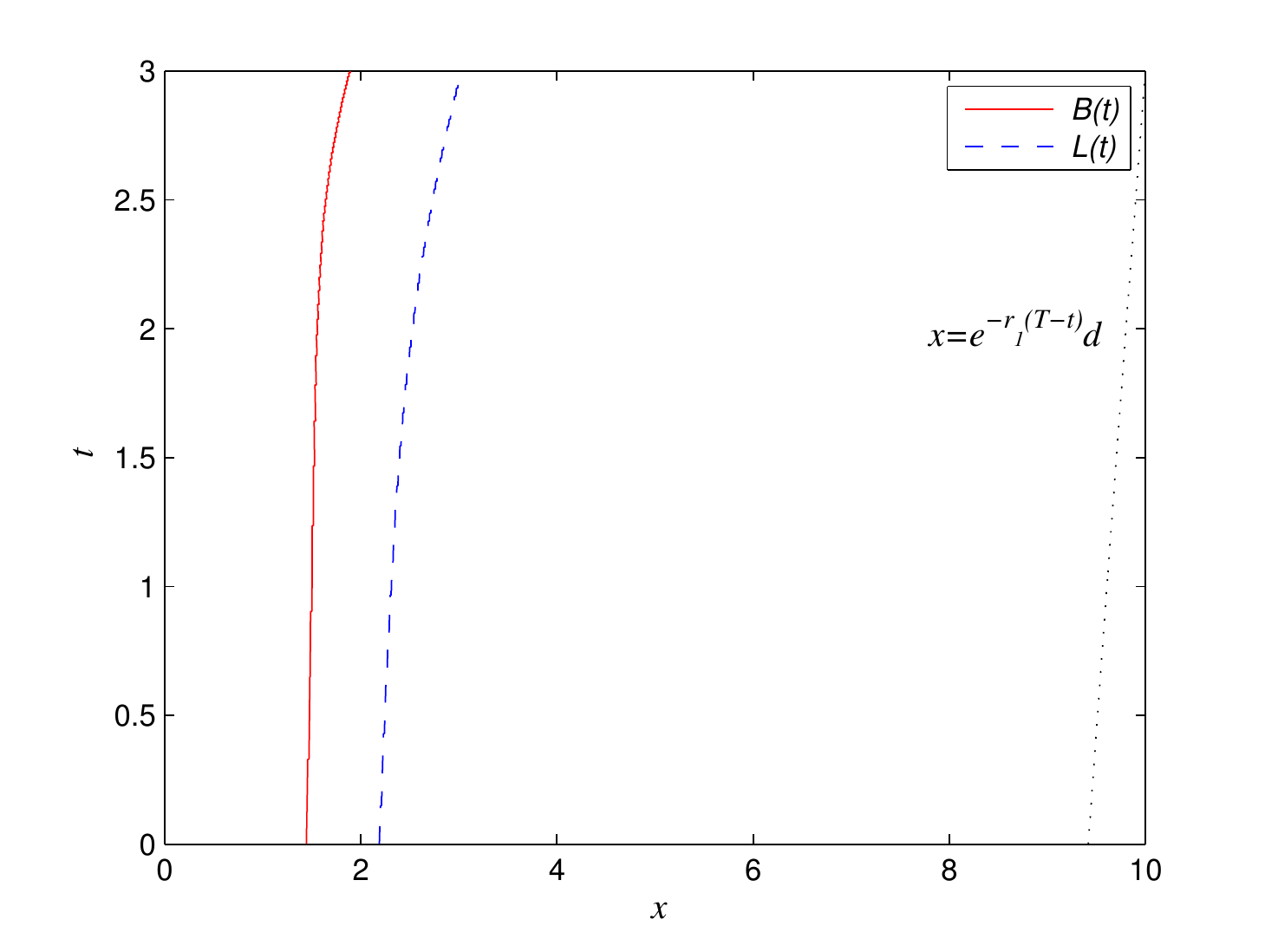} }
\subfigure[Free boundaries $B(\cdot)$ and $L(\cdot)$ when $\si^2=0.25$]
{\includegraphics[width=0.47\linewidth]{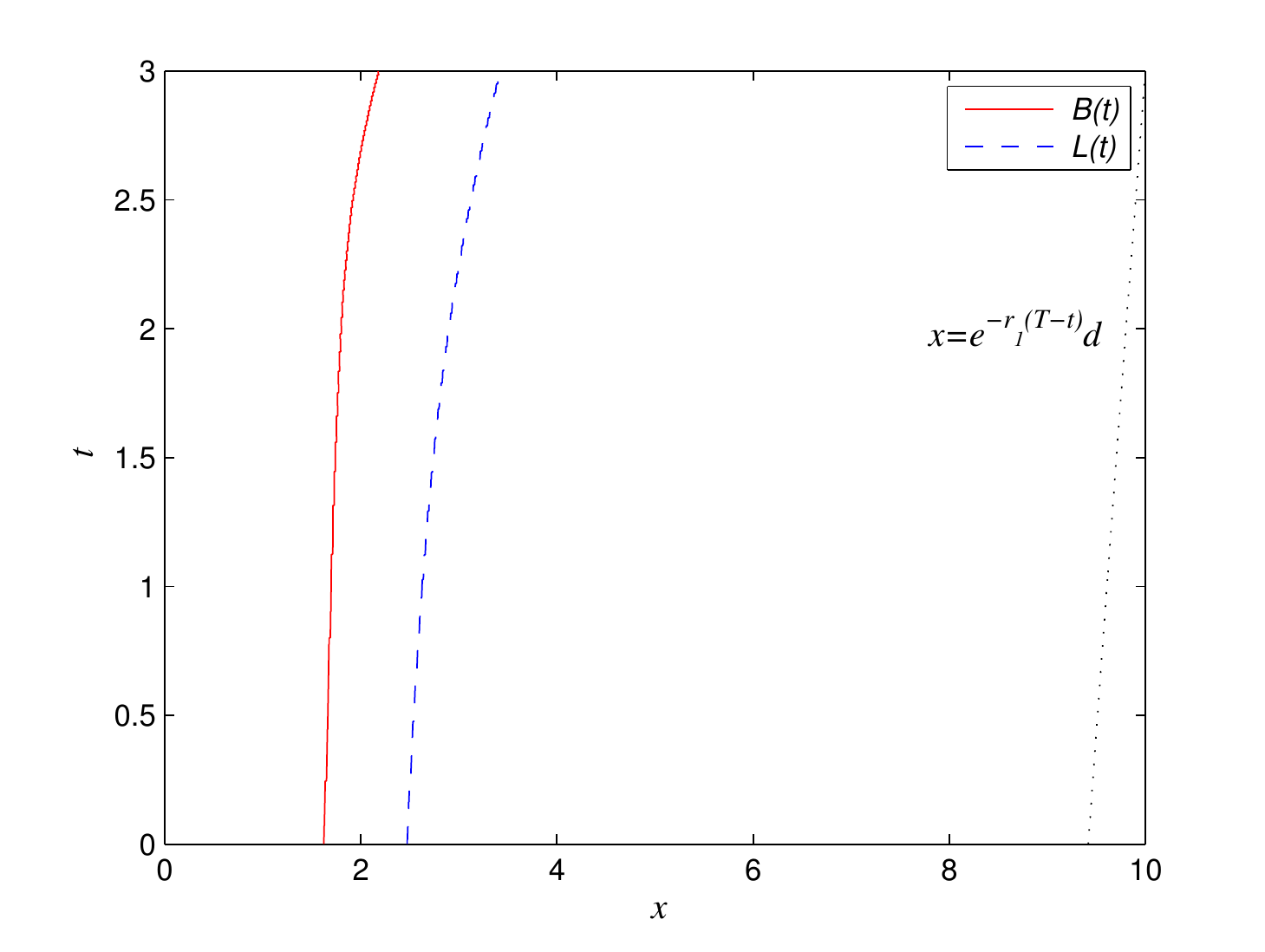} }
\subfigure[Free boundaries $B(\cdot)$ and $L(\cdot)$ when $\si^2=0.20$]
{\includegraphics[width=0.47\linewidth]{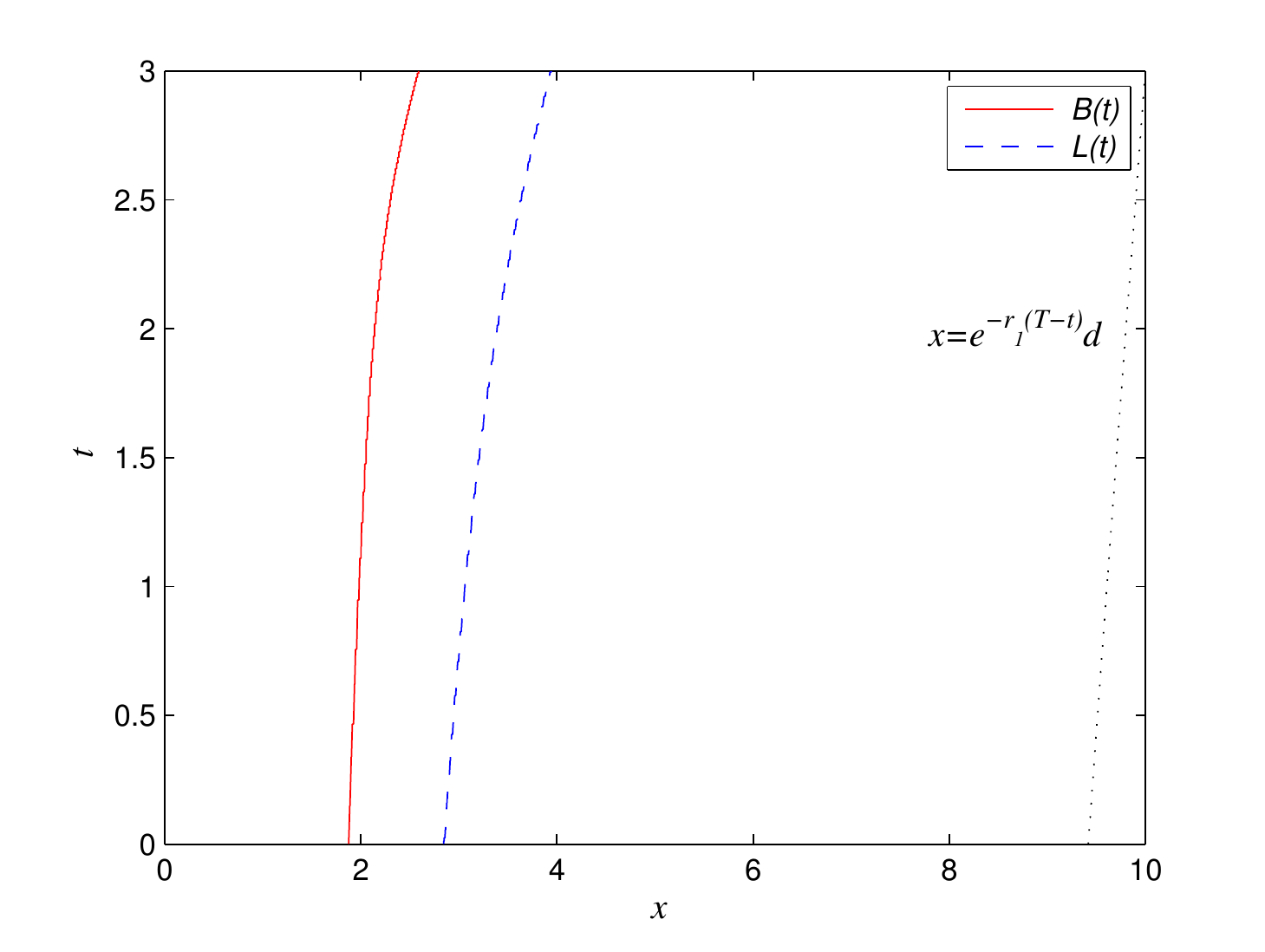} }
\subfigure[Free boundaries $B(\cdot)$ and $L(\cdot)$ when $\si^2=0.15$]
{\includegraphics[width=0.47\linewidth]{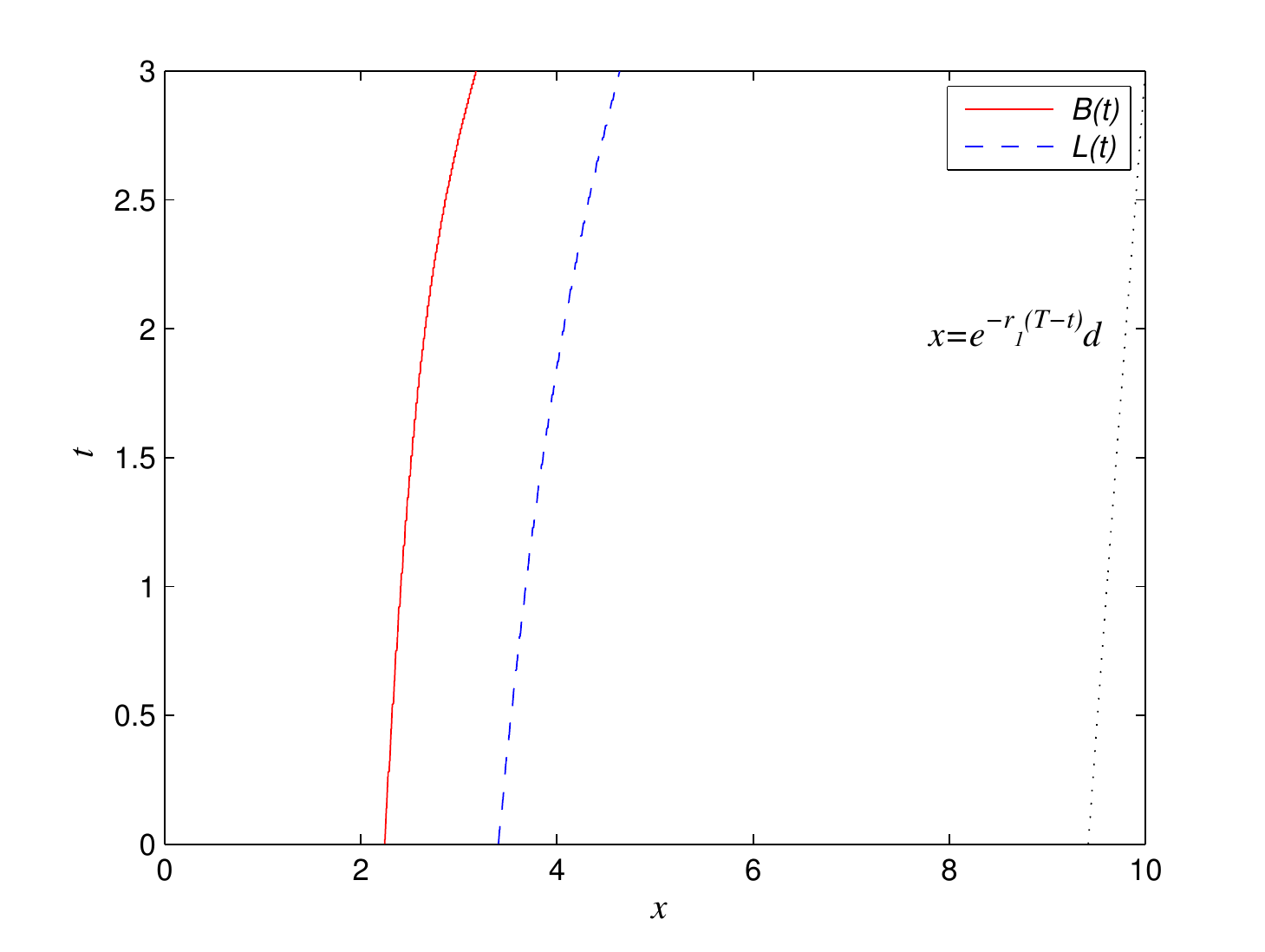} }
\end{center}
\caption{The borrowing and saving boundaries $B(\cdot)$ and $L(\cdot)$ under various volatility rates $\si$.}\label{fig_FB3}
\end{figure}
Figure \ref{fig_FB3} displays the borrowing and saving boundaries under various volatility rates $\si$, where $r_1=0.02$, $r_2=0.08$, $\mu=0.15$ and $d=10$ are the same as Figure \ref{fig_FB2}. Similar to Figure \ref{fig_FB2}, the borrowing and saving boundaries move to the right as $\si$ decreases. This is not surprising since no matter increasing $\mu$ or decreasing $\si$ will lead to an increment in the Shape ratio of the stock, making the stock more attractive to the investor. Therefore, Borrowing-money Region $\fB$ expands, and Saving-money Region $\fS$ shrinks, that is, one tends to borrow money to invest in the stock and is less likely to sell the stock to earn the saving rate. However, opposite to Figure \ref{fig_FB2}, All-in-stock Region $\fM$ expends here. We think, as $\si$ decreases, the uncertainty becomes less, hence less frequently tradings are enough to achieve the goal. In other words, one has to trade more frequently to control the risk if $\si$ increases.

\begin{figure}[H]
\begin{center}
\subfigure[Optimal proportion, $\pi(x,t)/x$, against $x$]
{\includegraphics[width=0.47\linewidth]{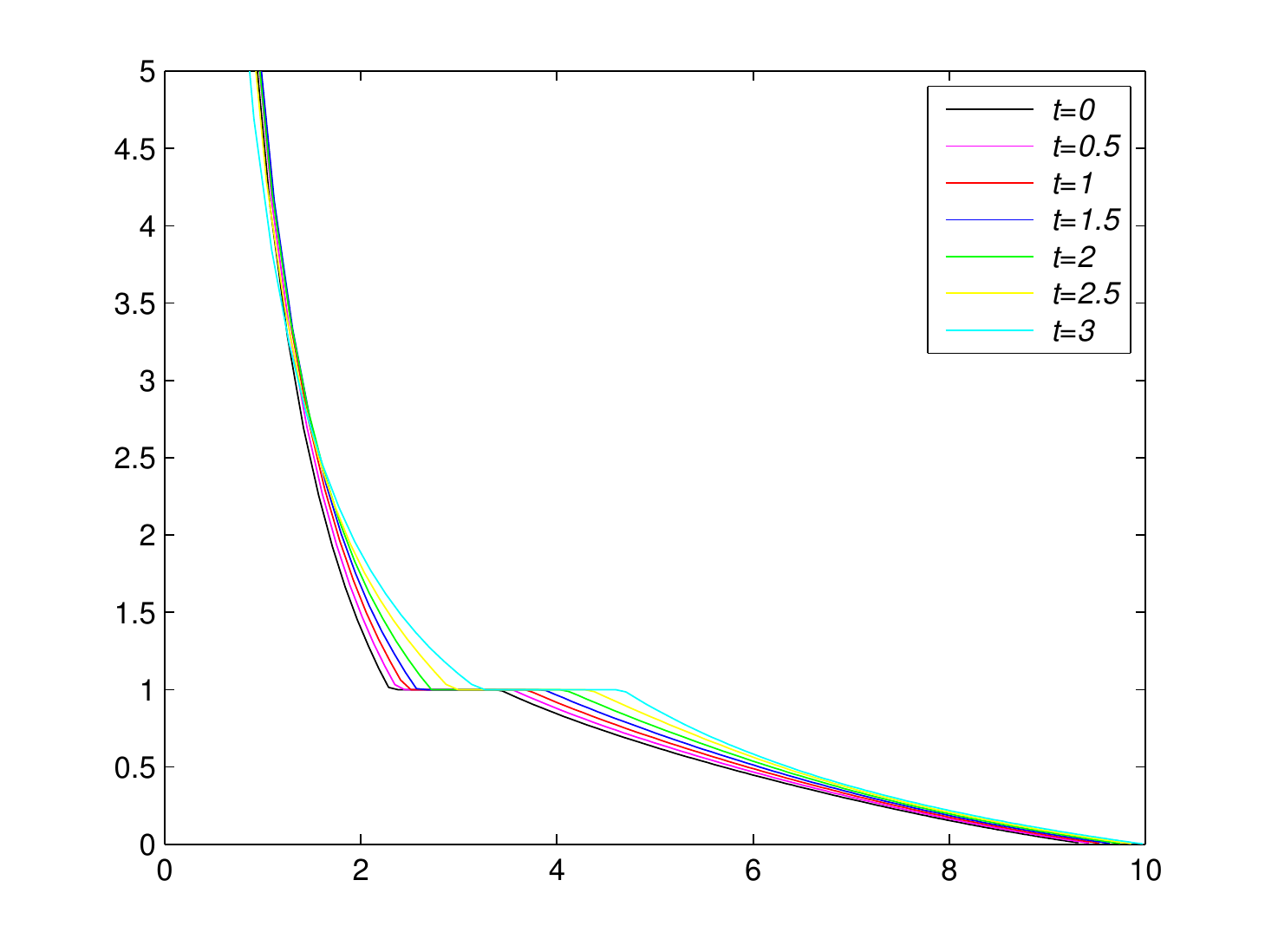} }\\
\subfigure[Optimal portfolio, $\pi(x,0)$, against $x$]
{\includegraphics[width=0.47\linewidth]{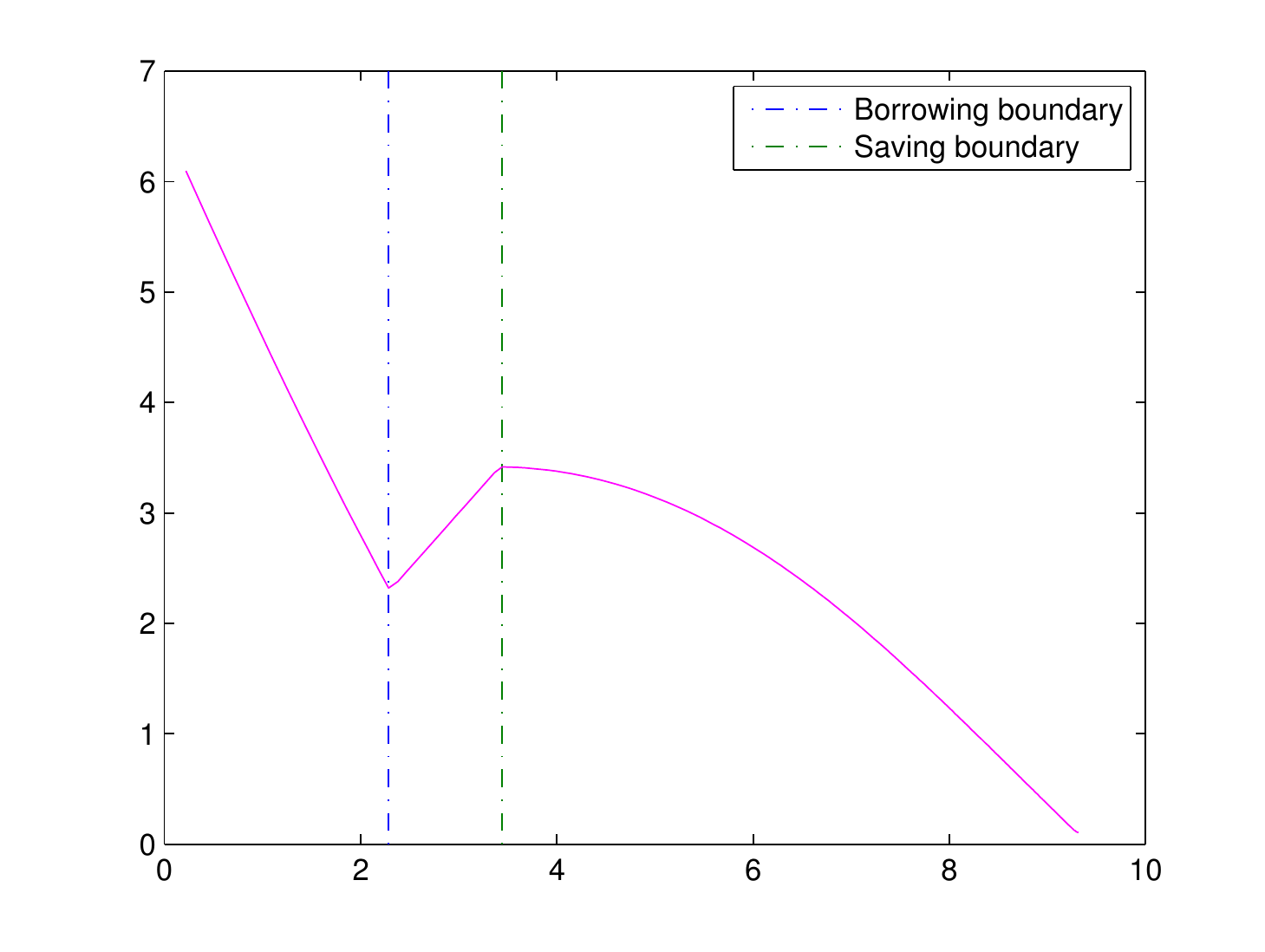} }
\subfigure[Optimal portfolio, $\pi(x,t)$, against $x$]
{\includegraphics[width=0.47\linewidth]{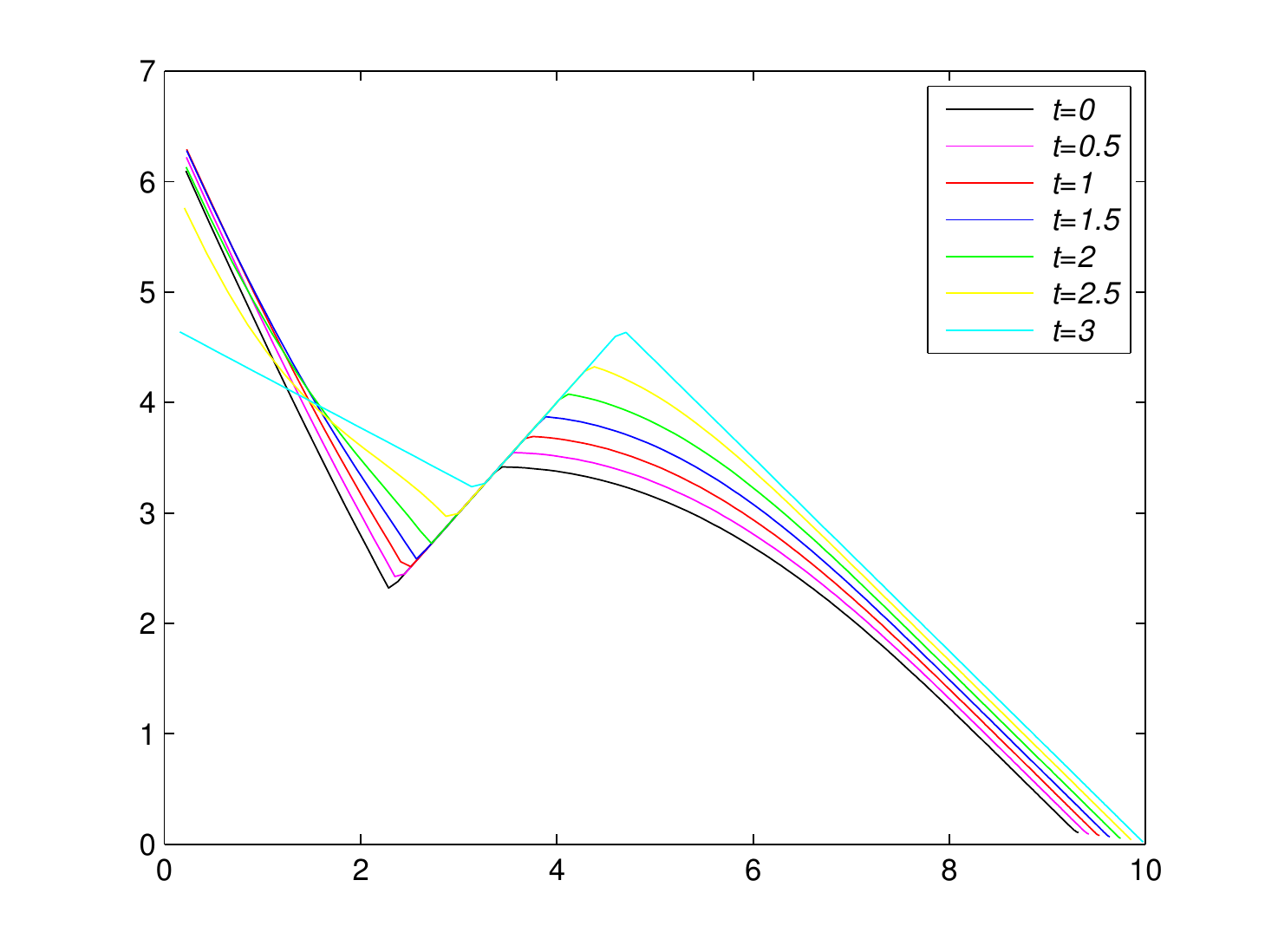} }
\caption{The optimal portfolio $\pi(x,t)$ under various times $t$.}\label{fig_pi}
\end{center}
\end{figure}

Figure \ref{fig_pi} displays the optimal proportion $\pi(x,t)/x$ and the optimal portfolio $\pi(x,t)$ under various times $t$, where $r_1=0.02$, $r_2=0.08$, $d=10$, $\mu=0.15$, $\si=0.15$ and $T=3$ are fixed. 
The above pictures show that $\pi(x,t)=x$ in All-in-stock Region $\fM$. Figure \ref{fig_pi}(b) and Figure \ref{fig_pi}(c) show that, as wealth $x$ increases to the discounted target $e^{-r_1(T-t)}d$, the net amount $\pi(x,t)$ invested in the stock is not globally monotone increasing or decreasing. Indeed, it is first decreasing in Borrowing-money Region $\fB$, then increasing All-in-stock Region $\fM$, and finally decreasing to zero in Saving-money Region $\fS$. However, we can observe from Figure \ref{fig_pi}(a) that the proportion invested in the stock is globally monotone decreasing to zero as $x$ increases. This is not surprising because one is expected to invest less proportion in the stock to reduce the risk as wealth approaches the discounted target.

Figure \ref{fig_pi}(a) also shows that the proportion invested in the stock becomes bigger in Saving-money Region $\fS$ as the time to maturity is closer. This is because one has to take a higher risk when there is less time available to achieve the target. By contrast, the behavior of the optimal proportion in Borrowing-money Region $\fB$ is quite complicate so that we cannot draw the same conclusion. But we can see that the slope of the optimal proportion in Borrowing-money Region $\fB$ becomes flatter as time approaches maturity, so the less time to maturity, the less sensitive to wealth position. Meanwhile, an opposite phenomenon is observed in Saving-money Region $\fS$.

The remaining part of this paper is devoted to the rigorous proofs for the theoretical results stated in Section \ref{sec:main}.

\section{Related Semi-linear PDE}\label{sec:Eq}

To study the fully nonlinear PDE \eqref{V_pb}, in this section, we transform it into a semi-linear PDE \eqref{w_pb} that satisfies the usual structural conditions by a heuristic argument. Many a priori estimates of the solution will be used in this process. In the following Section \ref{sec:hjb}, we will rigorously prove the existence and uniqueness of the solution to the PDE \eqref{w_pb} as well as those prior estimates used, and finally we will construct a solution to the PDE \eqref{V_pb} from the solution to the PDE \eqref{w_pb} in Section \ref{sec:V_solu}.

Our argument in the rest part of this section is intuitive and it will lead to a more tractable PDE \eqref{w_pb} which will service as our starting point of the theoretical treatment in the next section.

Our subsequent argument is based on the following hypotheses
\begin{align}\label{Vx_prb}
V_x < 0,\quad V_{xx} > 0,\quad (x,t)\in \TQ,
\end{align}
and
\begin{align}\label{V_lim_prb}
\lim\limits_{x\rightarrow e^{-r_1 (T-t)}d-}V_x=0,\quad \lim\limits_{x\rightarrow-\infty}V_x=-\infty,\quad t\in[0,T].\end{align}
These hypotheses will be eventually proved in Theorem \ref{theo:V}.

In order to solve the optimization problem in the HJB equation \eqref{V_pb}, write
\[
H(\pi):=\frac{1}{2}\si^2\pi^2V_{xx}+\big((r_1\chi_{x>\pi}+r_2\chi_{x<\pi})(x-\pi)+\mu\pi\big)V_x,\]
and
$$
\pi^*_i:=-a_i\frac{V_{x}}{V_{xx}},\quad a_i:=\frac{\mu-r_i}{\si^2},\quad i=1,2.
$$
By our assumption \eqref{murr}, $a_1>a_2>0$, so that it follows from \eqref{Vx_prb} that $\pi^*_1>\pi^*_2$. Consequently, only three possible scenarios can happen, which are demonstrated in Figure \ref{strategies}.

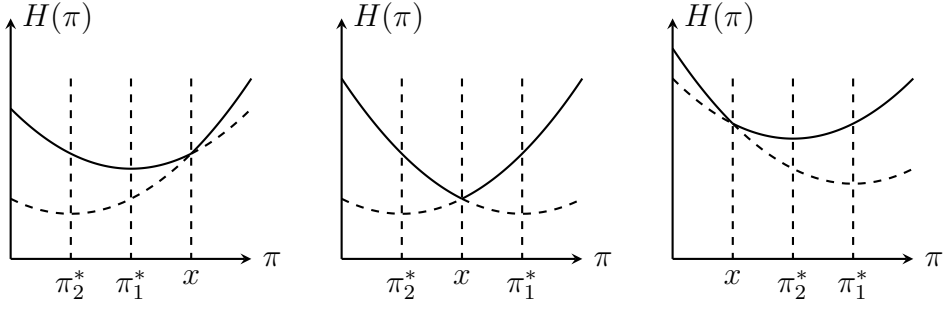
\begin{figure}[H]
\begin{center}
\begin{tikzpicture}[thick, scale=0.8]
\draw[eaxis] (0,0)--(4,0) node[right] {$\pi$};
\draw[eaxis] (0,0)--(0,4) node[right] {$H(\pi)$};
\draw[ dashed,black,domain=3:4] plot(\x,{(1/2*(\x)^2-1*(3-\x)-3*\x)/2+4});
\draw[elegant,black,domain=0:3] plot(\x,{(1/2*(\x)^2-1*(3-\x)-3*\x)/2+4});
\draw[elegant,black,domain=3:4] plot(\x,{(1/2*(\x)^2-2*(3-\x)-3*\x)/2+4});
\draw[dashed,black,domain=0:3] plot(\x,{(1/2*(\x)^2-2*(3-\x)-3*\x)/2+4});
\draw[-,dashed,] (3,3)--(3,0) node[below] {$x$};
\draw[-,dashed,] (2,3)--(2,0) node[below] {$\pi^*_1$};
\draw[-,dashed,] (1,3)--(1,0) node[below] {$\pi^*_2$}; 
\draw[eaxis] (0+5.5,0)--(4+5.5,0) node[right] {$\pi$};
\draw[eaxis] (0+5.5,0)--(0+5.5,4) node[right] {$H(\pi)$};
\draw[dashed,black,domain=2+5.5:4+5.5] plot(\x,{(1/2*(\x-5.5)^2-1*(2-(\x-5.5))-4*(\x-5.5))/2+4});
\draw[elegant,black,domain=0+5.5:2+5.5] plot(\x,{(1/2*(\x-5.5)^2-1*(2-(\x-5.5))-4*(\x-5.5))/2+4});
\draw[dashed,black,domain=0+5.5:2+5.5] plot(\x,{(1/2*(\x-5.5)^2-3*(2-(\x-5.5))-4*(\x-5.5))/2+4});;
\draw[elegant,black,domain=2+5.5:4+5.5] plot(\x,{(1/2*(\x-5.5)^2-3*(2-(\x-5.5))-4*(\x-5.5))/2+4});;
\draw[-,dashed,] (2+5.5,3)--(2+5.5,0) node[below] {$x$};
\draw[-,dashed,] (3+5.5,3)--(3+5.5,0) node[below] {$\pi^*_1$};
\draw[-,dashed,] (1+5.5,3)--(1+5.5,0) node[below] {$\pi^*_2$}; 
\draw[eaxis] (0+11,0)--(4+11,0) node[right] {$\pi$};
\draw[eaxis] (0+11,0)--(0+11,4) node[right] {$H(\pi)$};
\draw[dashed,black,domain=1+11:4+11] plot(\x,{(1/2*(\x-11)^2-1*(1-(\x-11))-4*(\x-11))/2+4});
\draw[elegant,black,domain=0+11:1+11] plot(\x,{(1/2*(\x-11)^2-1*(1-(\x-11))-4*(\x-11))/2+4});
\draw[dashed,black,domain=0+11:1+11] plot(\x,{(1/2*(\x-11)^2-2*(1-(\x-11))-4*(\x-11))/2+4});;
\draw[elegant,black,domain=1+11:4+11] plot(\x,{(1/2*(\x-11)^2-2*(1-(\x-11))-4*(\x-11))/2+4});;
\draw[-,dashed,] (1+11,3)--(1+11,0) node[below] {$x$};
\draw[-,dashed,] (3+11,3)--(3+11,0) node[below] {$\pi^*_1$};
\draw[-,dashed,] (2+11,3)--(2+11,0) node[below] {$\pi^*_2$};
\end{tikzpicture}
\end{center}
\caption{Three possible cases.} \label{strategies}
\end{figure}

\noindent
Correspondingly, we have
\[
\argmin_{\pi} H(\pi)=
\begin{cases}
-a_1\frac{V_{x}}{V_{xx}}, &-a_1\frac{V_{x}}{V_{xx}}<x,\\[3mm]
x, &-a_2\frac{V_{x}}{V_{xx}}\leq x\leq-a_1\frac{V_{x}}{V_{xx}},\\[3mm]
-a_2\frac{V_{x}}{V_{xx}}, &-a_2\frac{V_{x}}{V_{xx}}>x.
\end{cases}
\]
Inserting this into \eqref{V_pb}, we get
\begin{align}\label{Three_eq}
\begin{cases}
-V_t+\frac{\si^2 a_1^2}{2}\frac{V_x^2}{V_{xx}}+(\si^2a_1-\mu)xV_x=0, \quad &-a_1\frac{V_{x}}{V_{xx}}<x,\\[3mm]
-V_t-\frac{\si^2}{2}x^2V_{xx}-\mu xV_x=0, &-a_2\frac{V_{x}}{V_{xx}}\leq x\leq-a_1\frac{V_{x}}{V_{xx}},\\[3mm]
-V_t+\frac{\si^2 a_2^2}{2}\frac{V_x^2}{V_{xx}}+(\si^2a_2-\mu)xV_x=0, &-a_2\frac{V_{x}}{V_{xx}}>x,\\[3mm]
V(e^{-r_1(T-t)} d,t)=0,& 0\leq t<T,\\[3mm]
V(x,T)=(x-d)^2, &x<d.
\end{cases}
\end{align}
This is a fully nonlinear free boundary problem that does not satisfy the general structural conditions of nonlinear parabolic equation, so it is hard to apply the existing results to study it directly. We need to rewrite it in a more tractable form.

By \eqref{Vx_prb}, $V$ should be a convex function. This motives us to apply the dual transformation (see, e.g., Pham \cite{Ph09}) to simplify \eqref{Three_eq}. To this end, let
\begin{align*}
v(y,t):=\inf\limits_{x< e^{-r_1(T-t)}d}\big(V(x,t)+xy\big),\quad y>0,\;\;0\leq t\leq T.\end{align*}
Thanks to the hypotheses \eqref{Vx_prb} and \eqref{V_lim_prb}, we get $v(0,t)=0$. 
By \eqref{V_lim_prb}, for each fixed $t\in [0,T]$, the optimal $x$ corresponding to $y$ is
\[
x=x(y,t):=V_x^{-1}(\cdot,t)(-y),\quad y>0,\]
where $V_x^{-1}(\cdot,t)$ is the inverse of $V_x(\cdot,t)$. This gives the following correspondence between $v(y,t)$ and $V(x,t)$,
\begin{align*}
v(y,t)&=V(x(y,t),t)+x(y,t)y,\\
v_y(y,t)&=V_x(x(y,t),t)x_y(y,t)+y x_y(y,t)+x(y,t)=x(y,t),\\
v_{yy}(y,t)&=x_y(y,t)=\frac{-1}{V_{xx}(x(y,t),t)},\\
v_t(y,t)&=V_t(x(y,t),t)+V_x(x(y,t),t)x_t(y,t)+y x_t(y,t)=V_t(x(y,t),t).
\end{align*}
It then follows from \eqref{Three_eq} that
\begin{align}\label{v_eq}
\begin{cases}
-v_t-\frac{1}{2}\si^2 a_1^2y^2v_{yy}-(\si^2a_1-\mu) yv_y=0, &-\frac{v_y}{v_{yy}}>a_1y,\\[3mm]
-v_t+\frac{\si^2}{2}\frac{v_y^2}{v_{yy}}+\mu yv_y=0, & a_2y\leq-\frac{v_y}{v_{yy}}\leq a_1y,\\[3mm]
-v_t-\frac{1}{2}\si^2 a_2^2y^2v_{yy}-(\si^2a_2-\mu) yv_y=0, &-\frac{v_y}{v_{yy}}<a_2y,\\[3mm]
v(0,t)=0, & 0\leq t<T,\\[3mm]
v(y,T)=-\frac{1}{4}y^2+d y, & y>0.
\end{cases}
\end{align}
Now we introduce
$$
u:=-v_y.
$$
After differentiating \eqref{v_eq} w.r.t. $y$, we obtain an equation for $u$:
\begin{align}\label{u_eq}
\begin{cases}
-u_t-\frac{1}{2}\si^2 a_1^2y^2u_{yy}+(-\si^2a_1^2-\si^2a_1+\mu)yu_y+(\mu-\si^2a_1)u=0, &-\frac{u}{u_y}>a_1y,\\[3mm]
-u_t-\frac{\si^2}{2}\big(\frac{u}{u_y}\big)^2u_{yy}+\si^2 u+\mu yu_y+\mu u=0, & a_2y\leq-\frac{u}{u_y}\leq a_1y,\\[3mm]
-u_t-\frac{1}{2}\si^2 a_2^2y^2u_{yy}+(-\si^2a_2^2-\si^2a_2+\mu)yu_y+(\mu-\si^2a_2)u=0, &-\frac{u}{u_y}<a_2y,\\[3mm]
u(y,T)=\frac{1}{2}y-d, &y>0.
\end{cases}
\end{align}
Making a transformation $u(y,t)=w(z,s)$ for $s=T-t$, $z=\ln y$, we get
\[
u_t=-w_s,\quad u_y=w_z\frac{1}{y},\quad u_{yy}=\big(w_{zz}-w_z\big)\frac{1}{y^2},\]
so that \eqref{u_eq} becomes
\begin{align}\label{three_eq}
\begin{cases}
w_s-\frac{1}{2}\si^2 a_1^2w_{zz}+\big(\mu-\frac{1}{2}\si^2a_1^2-\si^2a_1\big)w_z+(\mu-\si^2a_1)w=0, &-\frac{w}{w_z}>a_1,\\[3mm]
w_s-\frac{\si^2}{2}\Big(\frac{w}{w_z}\Big)^2\big(w_{zz}-w_z\big)+\si^2 w+\mu w_z+\mu w=0, & a_2\leq-\frac{w}{w_z}\leq a_1,\\[3mm]
w_s-\frac{1}{2}\si^2 a_2^2w_{zz}+\big(\mu-\frac{1}{2}\si^2a_2^2-\si^2a_2\big)w_z+(\mu-\si^2a_2)w=0, &-\frac{w}{w_z}<a_2,\\[3mm]
w(z,0)=\frac{1}{2}e^z-d, &z\in\R.
\end{cases}
\end{align}

For convenience, we define a function
\[
A(\xi):=\min\big\{\max\{a_2,-\xi\}, a_1\big\}=
\begin{cases}
a_1, &-\xi>a_1,\\[2mm]
-\xi, & a_2\leq-\xi\leq a_1,\\[2mm]
a_2, &-\xi<a_2.
\end{cases}
\]
It is clearly a 1-Lipschitz-continuous bounded decreasing function. Now we can rewrite \eqref{three_eq} in a compact form as an initial value problem
\begin{align}\label{w_pb}
\begin{cases}
w_s-{\cal T}w=0 \quad \hbox{in}\quad Q_T:=\R\times(0,T],\\[3mm]
w(z,0)=\frac{1}{2}e^z-d,\quad z\in \R,
\end{cases}
\end{align}
where
\[
{\cal T}w:=\frac{1}{2}\si^2A^2\Big(\frac{w}{w_z}\Big)w_{zz}
-\Big[\mu-\frac{1}{2}\si^2A^2\Big(\frac{w}{w_z}\Big)-\si^2A\Big(\frac{w}{w_z}\Big)\Big]w_z
-\Big[\mu-\si^2A\Big(\frac{w}{w_z}\Big)\Big]w.\]
By definition we have $$0<a_2\leq A(\cdot)\leq a_1,$$ so \eqref{w_pb} is a semi-linear parabolic PDE, which satisfies the usual structural conditions. We will study its solvability and properties in the next section.

\section{Solvability of the PDE \eqref{w_pb}}\label{sec:hjb}

We have got a semi-linear parabolic PDE \eqref{w_pb} through an intuitive argument in the previous section. From now on, we do a rigorous analysis and focus on the solvability and properties of the PDE \eqref{w_pb} in this section.

We first introduce several constants that will be used throughout the paper: 
\bee
\theta_1:=\si^2 a_1^2-2r_1,\quad \theta_2:=\si^2 a_2^2-2r_2,\eee
and
\bee
k :=\max\Big\{2\si^2 a_1+4\si^2 a_1^2,\;\theta_1\Big\}, \quad
\kappa:=2\mu+\si^2 (3a_1+1)(a_1+1).\eee

Our main theoretical result is stated as follows. 
\begin{theorem}\label{theo:w}
There exists a solution $w\in C^{2+\al,1+\frac{\al}{2}} \big(\R\times [0,T]\big)$ (for some $\al\in (0,1)$) to the PDE \eqref{w_pb} such that
\begin{align}\label{w_b}
\frac{1}{2} e^{\theta_2 s} e^z-e^{-r_1 s} d\leq & w \leq \frac{1}{2} e^{\theta_1 s} e^z-e^{-r_2 s} d,\end{align}
and
\begin{align}\label{wz_b}
\frac{1}{2}e^{-\kappa s} e^z\leq & w_z\leq \frac{1}{2}e^{k s}e^z.\end{align}
\end{theorem}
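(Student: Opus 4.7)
The plan is to implement the four-step PDE scheme (truncation, Leray--Schauder fixed point, uniform a priori bounds, compactness) signalled in the excerpt. Since $A(\cdot) \in [a_2, a_1]$ is uniformly bounded, the operator $\mathcal T$ is uniformly parabolic, with only one nonstandard feature: the dependence of $A$ on the ratio $w/w_z$. I would first replace \eqref{w_pb} by a sequence of problems on bounded cylinders $(-N,N) \times (0,T]$ with Dirichlet boundary data equal to the barriers in \eqref{w_b} at $z = \pm N$, ensuring compatibility at $s=0$. On such a truncated problem, fixing $\tilde w$ in a convex subset of $C^{1+\alpha,(1+\alpha)/2}$ that encodes the desired barrier bounds and a strict lower bound on $\tilde w_z$, the frozen coefficient $A(\tilde w/\tilde w_z)$ becomes a $C^{\alpha,\alpha/2}$ function; classical Schauder theory then solves the linearised PDE, producing a map $\tilde w \mapsto w$ which is continuous (by Lipschitz continuity of $A$) and compact (by Schauder estimates). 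The Leray--Schauder theorem yields a truncated solution $w^N$.

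The bound \eqref{w_b} is then a direct barrier comparison. Writing $\bar w := \frac{1}{2}e^{\theta_1 s}e^z - e^{-r_2 s}d$ so that $\bar w_z = \bar w_{zz} = \frac{1}{2}e^{\theta_1 s}e^z$ and $\bar w_s = \theta_1 \bar w_z + r_2 e^{-r_2 s}d$, substituting into $\mathcal T$ and using the identities $\sigma^2 a_i = \mu - r_i$ and $\theta_1 = \sigma^2 a_1^2 - 2r_1$, all terms collapse into
\[
\bar w_s - \mathcal T \bar w \;=\; \sigma^2 (a_1 - A)(a_1 + A + 2)\, \bar w_z \;+\; \sigma^2 (A - a_2)\, e^{-r_2 s}d \;\geq\; 0,
\]
since $a_2 \leq A \leq a_1$. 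The symmetric calculation shows $\underline w := \frac{1}{2}e^{\theta_2 s}e^z - e^{-r_1 s}d$ is a subsolution. Matching at $s=0$ and on $\{z = \pm N\}$, the parabolic maximum principle gives $\underline w \leq w^N \leq \bar w$.

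For \eqref{wz_b}, I would differentiate the PDE in $z$ (first regularising $A$ to a $C^1$ function to legitimise the chain rule, and then passing to the limit) to obtain a linear parabolic equation for $p := w^N_z$ whose coefficients are bounded in terms of $\mu$, $\sigma^2 a_1$, and $\sigma^2 a_1^2$. I would then verify that $\frac{1}{2}e^{ks}e^z$ and $\frac{1}{2}e^{-\kappa s}e^z$ are, respectively, a supersolution and a subsolution for this linearised equation in $p$; the specific constants $k$ and $\kappa$ defined just above the theorem are tailored precisely so that the barrier inequalities survive uniformly over $A \in [a_2,a_1]$, and so that the zeroth-order term coming from the differentiated nonlinearity is absorbed. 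The \textbf{main obstacle} here is that $A$ has two kinks at $-\xi = a_1$ and $-\xi = a_2$, so the formal derivative of $A(w/w_z)$ involves singular terms; I would handle this by regularising $A$, running the barrier argument on the smoothed problem, and then sending the regularisation parameter to zero, relying on the fact that $A$ is $1$-Lipschitz so that the smoothed coefficients converge in $L^\infty$. The lower bound $w^N_z \geq \frac{1}{2}e^{-\kappa s}e^z > 0$ recovered this way also justifies \emph{a posteriori} the strict positivity required by the fixed-point set-up.

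Finally, with uniform pointwise bounds on $w^N$ and $w^N_z$, interior Schauder estimates (together with a Sobolev embedding to bootstrap from $W^{2,1}_p$ regularity) yield uniform $C^{2+\alpha,1+\alpha/2}$ bounds on compact subsets of $\mathbb R \times [0,T]$, independent of $N$. An Arzela--Ascoli extraction on compacts then produces a subsequential limit $w \in C^{2+\alpha,1+\alpha/2}(\mathbb R \times [0,T])$ that solves \eqref{w_pb} classically and inherits the bounds \eqref{w_b} and \eqref{wz_b}.
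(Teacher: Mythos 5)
Your overall scheme (truncate, fixed point, barriers, compactness) matches the paper, and your barrier computation for \eqref{w_b} is correct. But there are two structural points where your proposal diverges in a way that would not work.

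\textbf{Boundary conditions on the truncated cylinder.} You prescribe Dirichlet data for $w^N$ at $z=\pm N$. To run the barrier argument for \eqref{wz_b}, you differentiate the PDE in $z$ and compare $p := w^N_z$ against $\frac12 e^{\pm\kappa s}e^z$; but with Dirichlet data on $w^N$, the value of $p$ on the lateral boundary $\{z=\pm N\}$ is not prescribed and is not a priori controlled, so the comparison principle for $p$ cannot be closed. The paper instead poses the truncated problem with a Robin condition $(w^{\ep,N}-w^{\ep,N}_z)(-N,s)=-e^{-r_2s}d$ and a Neumann condition $w^{\ep,N}_z(N,s)=\frac12 e^{\theta_1 s}e^N$, which are chosen precisely so that the sup/sub-solution bounds \eqref{wNN_b} immediately convert into control on $w^{\ep,N}_z$ at $z=\pm N$, and the $z$-differentiated comparison argument closes.

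\textbf{Order of bounds and limits.} Even with the paper's boundary conditions, the strong lower bound $w_z\ge\frac12 e^{-\kappa s}e^z$ is \emph{not} provable on the truncated cylinder: the Robin condition forces $w^{\ep,N}_z(-N,s)=w^{\ep,N}(-N,s)+e^{-r_2 s}d$, which can be negative for $s>0$ since $r_1<r_2$, contradicting any positive exponential lower barrier at $z=-N$. The paper therefore first establishes only the weaker bound $w^{\ep,N}_z\ge -e^{-\theta_3 s}d$ on the bounded domain (just enough for uniform interior Schauder estimates), passes $N\to\infty$, and \emph{then} proves the full lower bound \eqref{wz_b} on the unbounded strip where there is no lateral boundary. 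Your proposal tries to prove \eqref{wz_b} before taking $N\to\infty$; that step fails.

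A related subtlety: you propose to encode a strict lower bound on $\tilde w_z$ in the Leray--Schauder fixed-point set, so that $A(\tilde w/\tilde w_z)$ is H\"older. This is circular, since the a priori lower bound on $w_z$ is exactly what is not yet available. The paper instead replaces the quotient by $w/(|w_z|+\ep)$; the map $(\xi,\eta)\mapsto A(\xi/(|\eta|+\ep))$ is globally Lipschitz, so the frozen coefficient is H\"older for \emph{any} $\tilde w\in C^{1+\alpha,(1+\alpha)/2}$, no positivity assumption needed. The bound $w^\ep_z>0$ is then recovered a posteriori, and $\ep\to0$ is taken as a second limit. Your alternative regularization (smoothing the kinks of $A$) is a reasonable device for justifying the differentiated equation, but it does not resolve the $w_z\to 0$ degeneracy of the quotient, which is the issue the paper's $\ep$ is there for; the paper itself does not smooth $A$ but uses the comparison principle for divergence-form equations, which tolerates the bounded discontinuous $A'$.
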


\begin{proof}
The proof, which needs some results of Sobolev space and a priori estimation method of parabolic equations, is cumbersome, so we put it in Appendix \ref{sec:w_exist}.
\end{proof}

In the rest part of this paper we fix a solution $w$ as in \thmref{theo:w}. Based on it, we will construct solutions to \eqref{u_eq}, \eqref{v_eq}, \eqref{Three_eq} and \eqref{V_pb} in the following subsections.
In particular, Theorem \ref{veri} will ensure such $w$ is indeed unique.

\begin{remark}
The exact values of $\theta_1$, $\theta_2$, $k$ and $\kappa$ in \thmref{theo:w} are not important. We just need to make sure that $w$ and $w_z$ are growth exponentially in $z$, which will suffice to ensure such solution $w$ to \eqref{w_pb} is unique.
\end{remark}

\subsection{Free Boundaries of \eqref{w_pb}}\label{sec:free boundary}

In order to study the properties of \eqref{w_pb}, we define three sets
\begin{align*}
{\cal B}&:=\Big\{(z,s)\in Q_T\;\Big|\;-\frac{w}{w_z}< a_2\Big\}, \\
{\cal N}&:=\Big\{(z,s)\in Q_T\;\Big|\;a_2\leq-\frac{w}{w_z}\leq a_1\Big\}, \\
{\cal S}&:=\Big\{(z,s)\in Q_T\;\Big|\;-\frac{w}{w_z}> a_1\Big\},
\end{align*}
and define two free boundaries
\begin{align}\label{boundaryb}
b(s)&:=\;\sup\Big\{z\in \R\;\Big|\;-\frac{w}{w_z}(z,s)\geq a_2\Big\}=\sup\big\{z\in \R\mid (w+a_2 w_z)(z,s)\leq 0\big\},\quad s\in(0,T],\\
l(s)&:=\inf\Big\{z\in \R\;\Big|\;-\frac{w}{w_z}(z,s)\leq a_1\Big\}=\inf\big\{z\in \R\mid (w+a_1 w_z)(z,s)\geq 0\big\},\quad s\in(0,T],\label{boundaryl}
\end{align}
where we used $w_z>0$ to get the second expressions in above. They will be used to study the properties of the optimal portfolio for our original problem \eqref{value}. Because $a_1>a_2$ and $w_z>0$, we see $b(s)>l(s)$ for all $s\in(0,T]$.

We have the following estimates for the two boundaries $b(\cdot)$ and $l(\cdot)$. 
\begin{lemma}\label{lem:b_ub}
For any $s\in (0,T]$,
\begin{align}\label{b_ub}
b(s)< \ln(2d)-(r_1+\theta_2)s.
\end{align}
\end{lemma}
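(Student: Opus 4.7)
The plan is to leverage the two-sided estimates \eqref{w_b} and \eqref{wz_b} from \thmref{theo:w} directly. By the second characterization in \eqref{boundaryb}, $b(s)=\sup\{z\in\R\mid (w+a_2 w_z)(z,s)\le 0\}$, so to establish $b(s)<z_0:=\ln(2d)-(r_1+\theta_2)s$ it suffices to show that $(w+a_2 w_z)(z,s)>0$ for every $z\ge z_0$, and then invoke continuity to push the supremum strictly below $z_0$.

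First I would examine the lower bound from \eqref{w_b}, namely $w(z,s)\ge \tfrac{1}{2}e^{\theta_2 s}e^{z}-e^{-r_1 s}d$. Plugging in $z=z_0$ makes $\tfrac{1}{2}e^{\theta_2 s}e^{z_0}=e^{-r_1 s}d$, giving $w(z_0,s)\ge 0$; for $z>z_0$ the inequality is strict and gives $w(z,s)>0$. Next, by \eqref{murr} we have $a_2=(\mu-r_2)/\sigma^2>0$, and the lower bound $w_z\ge \tfrac{1}{2}e^{-\kappa s}e^z>0$ from \eqref{wz_b} therefore yields $a_2 w_z(z,s)>0$ everywhere. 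Combining these, $(w+a_2 w_z)(z,s)>0$ for all $z\ge z_0$.

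Finally, since $w\in C^{2+\alpha,1+\alpha/2}$, the function $z\mapsto (w+a_2 w_z)(z,s)$ is continuous, so its positivity at $z_0$ extends to an open left neighborhood of $z_0$. Hence the set $\{z\mid (w+a_2 w_z)(z,s)\le 0\}$ is bounded above by some point strictly less than $z_0$, which gives $b(s)<z_0$ as desired.

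This is essentially a one-line consequence of \thmref{theo:w}; there is no serious obstacle. All of the real work sits upstream in the construction of $w$ and in the derivation of the exponential bounds \eqref{w_b}--\eqref{wz_b}, which is exactly why the constant $\theta_2$ (coming from the lower envelope in \eqref{w_b}) appears in the statement.
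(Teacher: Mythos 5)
Your proof is correct and follows essentially the same route as the paper's: both use the lower bounds from Theorem~\ref{theo:w} to show $(w+a_2w_z)(z,s)>0$ for all $z\ge \ln(2d)-(r_1+\theta_2)s$, with the strict positivity at the endpoint supplied by $a_2w_z>0$. The only cosmetic difference is that you spell out the continuity step needed to upgrade $b(s)\le z_0$ to the strict inequality $b(s)<z_0$, which the paper leaves implicit.
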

\begin{proof}
If $z\geq \ln(2d)-(r_1+\theta_2)s$, then
by $a_2>0$ and the lower bounds in \eqref{w_b} and \eqref{wz_b} we have
\[ w(z,s)+a_2w_z>\frac{1}{2} e^{\theta_2 s} e^z-e^{-r_1 s} d\geq 0\]
implying \eqref{b_ub}.
\end{proof}

\begin{lemma}\label{lem:l_lb}
For any $s\in (0,T]$,
\begin{align}\label{l_lb}
l(s)\geq \ln(2d)-\ln(a_1+e^{(\theta_2-k)s})-(r_1+k)s.\end{align}
\end{lemma}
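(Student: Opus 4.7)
The plan is to mirror the proof of \lemref{lem:b_ub}, but with \emph{upper} bounds replacing lower bounds. By the definition
\[
l(s) = \inf\bigl\{z \in \R : (w + a_1 w_z)(z,s) \geq 0\bigr\},
\]
it suffices to show that $(w + a_1 w_z)(z,s) < 0$ for every $z$ strictly below the claimed threshold, since any such $z$ then fails to lie in the defining set for $l(s)$.

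To that end, I would combine the upper bound $w \leq \tfrac12 e^{\theta_1 s}e^z - e^{-r_2 s}d$ from \eqref{w_b} with the upper bound $w_z \leq \tfrac12 e^{ks}e^z$ from \eqref{wz_b}. Since the coefficients $1$ and $a_1$ in $w + a_1 w_z$ are both positive, the two estimates combine monotonically to give
\[
w + a_1 w_z \;\leq\; \tfrac12\bigl(e^{\theta_1 s} + a_1 e^{ks}\bigr) e^z - e^{-r_2 s}d \;=\; \tfrac12 e^{ks}\bigl(a_1 + e^{(\theta_1-k)s}\bigr) e^z - e^{-r_2 s}d,
\]
where the rewriting uses $k \geq \theta_1$ (built into the definition of $k$) to factor out $e^{ks}$. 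Requiring this upper bound to be strictly negative and solving for $z$ immediately produces a lower bound for $l(s)$ of the same shape as asserted, since the inequality rearranges to $z < \ln(2d) - \ln(a_1 + e^{(\theta_1-k)s}) - (r_2+k)s$.

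The main remaining point is matching the specific constants: the direct substitution above naturally produces indices $(\theta_1, r_2)$, while the statement displays $(\theta_2, r_1)$. The ordering relations $\theta_1 > \theta_2$, $r_2 > r_1$, and $k \geq \theta_1 > \theta_2$ supply the monotonicity room to interpolate between comparable forms of the bound. If the statement's sharper constants are genuinely required, the a priori upper bound on $w$ can be refined by a comparison-principle argument in the region $\mathcal S := \{-w/w_z > a_1\}$, where $\mathcal Tw$ reduces to a linear operator for which $\tfrac12 e^{\theta_1 s}e^z - e^{-r_1 s}d$ is an explicit solution matching the initial datum $\tfrac12 e^z - d$ at $s = 0$. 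This comparison is the only nontrivial obstacle, and it requires care because $w$ may exit $\mathcal S$ through the lateral boundary $\{-w/w_z = a_1\}$; otherwise the argument is a one-line substitution of the a priori bounds from \thmref{theo:w} and involves no new PDE input.
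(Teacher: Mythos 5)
Your direct-substitution argument is exactly the paper's proof: bound $w+a_1w_z$ from above using the upper bounds in \eqref{w_b} and \eqref{wz_b}, factor out $e^{ks}$ using $k\geq\theta_1$, and solve for the threshold value of $z$ below which the sum is strictly negative, which then bounds $l(s)$ from below. So you found the right argument.

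What you noticed about the constants is real, and it is worth being precise about: the substitution yields $l(s)\geq \ln(2d)-\ln\big(a_1+e^{(\theta_1-k)s}\big)-(r_2+k)s$, which is \emph{the same expression the paper's own proof writes down}, whereas the displayed statement has $(\theta_2,r_1)$. Since $\theta_1>\theta_2$ and $r_2>r_1$, one has $\ln\big(a_1+e^{(\theta_1-k)s}\big)>\ln\big(a_1+e^{(\theta_2-k)s}\big)$ and $(r_2+k)s>(r_1+k)s$, so the $(\theta_1,r_2)$ quantity is \emph{strictly smaller}. In other words, the ordering relations you cite run in the wrong direction: a lower bound of $l(s)$ by the smaller $(\theta_1,r_2)$ quantity does \emph{not} imply a lower bound by the larger $(\theta_2,r_1)$ quantity, and there is no ``interpolation'' that closes the gap. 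The discrepancy is almost certainly a typo in the statement of the lemma (the proof and the statement should both read $(\theta_1,r_2)$); the lemma is only used downstream (in the proof of \lemref{lem:u_lim}) to conclude $\inf_{s\in[0,T]}l(s)>-\infty$, for which either form suffices. Your proposed repair via a comparison argument inside $\mathcal S$ would not help either: even if it sharpened $r_2$ to $r_1$, it would still leave $\theta_1$, not $\theta_2$, in the exponential, and as you yourself observe $w$ can exit $\mathcal S$ laterally so the comparison would not go through cleanly. The better conclusion is simply that the stated constants are a transcription error and the correct bound is $l(s)\geq \ln(2d)-\ln\big(a_1+e^{(\theta_1-k)s}\big)-(r_2+k)s$.
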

\begin{proof}
If $z<\ln(2d)-\ln(a_1+e^{(\theta_1-k)s})-(r_2+k)s$, by the upper bounds in \eqref{w_b} and \eqref{wz_b} we have
\[w+a_1w_z\leq \frac{1}{2} e^{\theta_1 s} e^z-e^{-r_2 s} d+a_1\frac{1}{2}e^{k s}e^z
=\frac{1}{2}e^{k s}e^z\big(e^{(\theta_1-k)s}-2e^{-(r_2+k) s-z} d+a_1\big)<0,\]
which implies \eqref{l_lb}.
\end{proof}

Define two functions
$$
I:=w+a_2 w_z
$$
and
$$
f(s):=\sup\{z\in \R\mid I(z,s)< 0\},\quad s\in (0,T].
$$
By definition, we have $f(s)\leq b(s)$.

\begin{lemma}\label{lem:b+-}
If $f_*(s_0-)< f^*(s_0)$ for some $s_0\in(0,T]$, then
\begin{align}\label{b+-}
I(z,s_0)=0,\quad\forall z\in(f_*(s_0-),f^*(s_0)),\end{align}
where $f_*(s_0-):=\liminf\limits_{s\rt s_0-}f(s)$ and $f^*(s_0):=\limsup\limits_{s\rightarrow s_0}f(s)$.
\end{lemma}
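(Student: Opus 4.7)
The plan is to show $I(z_0, s_0) = 0$ pointwise for each $z_0 \in (f_*(s_0-), f^*(s_0))$ by proving the two inequalities $I(z_0, s_0) \geq 0$ and $I(z_0, s_0) \leq 0$ separately. Both arguments rest on the continuity of $I = w + a_2 w_z$ on $\overline{Q_T}$, which is inherited from the regularity $w \in C^{2+\alpha, 1+\alpha/2}$ furnished by Theorem 5.1, together with the a priori bounds on $f$ that come from Lemmas 5.2 and 5.3.

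For the lower bound $I(z_0, s_0) \geq 0$, I would use the hypothesis $f_*(s_0-) < z_0$ to extract a sequence $s_n \uparrow s_0$ with $f(s_n) < z_0$. By the very definition of $f(s_n) = \sup\{z : I(z, s_n) < 0\}$, any $z_0$ strictly above this supremum satisfies $I(z_0, s_n) \geq 0$. Passing to the limit via continuity of $I$ gives $I(z_0, s_0) \geq 0$.

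For the upper bound $I(z_0, s_0) \leq 0$, I would use the hypothesis $f^*(s_0) > z_0$ to extract a sequence $s_n \to s_0$ with $f(s_n) > z_0$. The crucial observation is that $I(f(s_n), s_n) = 0$ for every $n$: by continuity, $I(f(s_n), s_n) < 0$ would propagate negativity to a neighborhood, contradicting the supremum, while $I(f(s_n), s_n) > 0$ would pull the supremum down. Combined with the uniform upper bound $f(s_n) \leq b(s_n) < \ln(2d) - (r_1 + \theta_2) s_n$ from Lemma 5.2, a subsequence gives $f(s_n) \to \tilde z \in [z_0, b(s_0)]$ with $I(\tilde z, s_0) = 0$ by continuity.

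The main obstacle is to upgrade this from "$I$ has a zero at some $\tilde z \geq z_0$" to "$I(z_0, s_0) \leq 0$" itself. My plan is to argue by contradiction: if $I(z_0, s_0) > 0$, continuity gives a parabolic rectangle $R = (z_0 - \epsilon, z_0 + \epsilon) \times (s_0 - \eta, s_0 + \eta) \cap Q_T$ on which $I > 0$, so $R \subset \mathcal{B}$. On $\mathcal{B}$, the linear constant-coefficient operator $L_2 := \partial_s - \frac{1}{2}\sigma^2 a_2^2 \partial_{zz} + (\mu - \frac{1}{2}\sigma^2 a_2^2 - \sigma^2 a_2)\partial_z + (\mu - \sigma^2 a_2)$ annihilates $I$ (since $L_2$ commutes with $\partial_z$ and $L_2 w = 0$ on $\mathcal{B}$ from the third line of the componentwise equation). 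Combined with the lower semicontinuity of $f$ (which is immediate from the openness of $\{I < 0\}$ and the continuity of $I$), the existence of $R$ forces $f(s) \leq z_0 - \epsilon$ or $f(s) \geq z_0 + \epsilon$ for all $s \in (s_0 - \eta, s_0 + \eta)$. Since the hypotheses supply sequences accumulating at $s_0$ in each of these two regimes, a strong maximum principle applied to $I$ on the parabolic cylinder $R$ (or on the connected component of $\mathcal{B}$ through $(z_0, s_0)$) should produce the desired contradiction. This free-boundary/strong-maximum-principle step is where I expect the real technical work to lie.
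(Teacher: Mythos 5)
Your lower bound $I(z_0,s_0)\geq 0$ is correct and matches the paper's opening step. The real issue lies in the upper bound, where you have correctly identified that the work is in the contradiction step, but the plan you sketch does not close the argument.

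The gap is geometric. Assuming $I(z_0,s_0)>0$, you produce a bounded parabolic rectangle $R=(z_0-\epsilon,z_0+\epsilon)\times(s_0-\eta,s_0+\eta)$ inside $\mathcal B$ on which $I$ solves the constant-coefficient linear equation $L_2 I=0$, and you conclude (correctly) that $f(s)\leq z_0-\epsilon$ or $f(s)\geq z_0+\epsilon$ for $s$ near $s_0$. But the hypotheses $f_*(s_0-)<z_0<f^*(s_0)$ only give sequences landing in each regime, not any topological obstruction: lower semicontinuity of $f$ makes $\{f>z_0-\epsilon\}$ open, not $\{f\leq z_0-\epsilon\}$, so there is no connectedness contradiction. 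And a strong maximum principle on $R$ (where $I>0$ and $L_2 I=0$) is internally consistent — it yields nothing. Your alternative, applying a maximum principle on the connected component $\mathcal U$ of $\mathcal B$ through $(z_0,s_0)$, runs into the free-boundary circularity: $I$ satisfies $L_2 I=0$ only where $I>0$, so you cannot use the maximum principle to enlarge the positivity set $\mathcal U$ — you would need to know a priori that $\mathcal U$ already covers a large region. Your side computation showing $I(\tilde z,s_0)=0$ for some $\tilde z\geq z_0$ (essentially $\tilde z=f^*(s_0)$) is correct but not used anywhere in the contradiction.

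What is actually needed is to show $I>0$ on the entire \emph{half-infinite} strip $\mathcal D=(z_0,+\infty)\times(s_0-\epsilon,s_0+\epsilon)$, because only then does $f(s)\leq z_0$ for all $s$ near $s_0$, contradicting $f^*(s_0)>z_0$. The paper breaks the circularity by constructing an auxiliary $\psi$ solving the \emph{linear} $a_2$-equation on $\mathcal D$ with Robin data $(\psi+a_2\psi_z)(z_0,\cdot)=I(z_0,\cdot)>0$ and initial data $\psi(\cdot,s_0-\epsilon)=w(\cdot,s_0-\epsilon)$. The maximum principle gives $\Psi:=\psi+a_2\psi_z>0$ on $\mathcal D$ since $\Psi$ solves the same linear equation with nonnegative parabolic boundary data. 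The crucial final step is to identify $\psi$ with $w$: since $\Psi>0$, the map $\Gamma(\psi,\psi_z)$ equals $a_2$, so $\psi$ also solves the \emph{nonlinear} $\Gamma$-equation on $\mathcal D$ with the same data as $w$, and uniqueness forces $\psi=w$, hence $I=\Psi>0$ on all of $\mathcal D$. This construct-then-identify step is the missing idea in your proposal; without it, the positivity of $I$ cannot be propagated beyond a small rectangle.
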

\begin{proof}
By the continuity of $I$ and the definition of $f(s)$, we have
\begin{align}\label{wp1}
I(z,s_0)\geq0,\quad z\geq f_*(s_0-).\end{align}
If \eqref{b+-} were not true, then there would exits $z_0\in(f_*(s_0-),f^*(s_0))$ such that $I(z_0,s_0)>0$. Owing to the continuity, we would have
\be\label{wp2}
I(z_0,s)>0,\;s\in (s_0-\ep,s_0+\ep)
\ee
for sufficiently small $\ep>0$. Since $z_0>f_*(s_0-)$, we could suppose $z_0>f(s_0-\ep)$ so that
\be\label{wp3}
I(z,s_0-\ep)\geq 0,\quad z>z_0.\ee

Now we would prove $I> 0$ in $${\cal D}:=(z_0,+\infty)\times(s_0-\ep,s_0+\ep).$$ 
Indeed, suppose $\psi$ is the unique solution to
\begin{align}\label{psi}
\begin{cases}
\psi_s-\frac{1}{2}\si^2 a_2^2\psi_{zz}+\big(\mu-\frac{1}{2}\si^2a_2^2-\si^2a_2\big)\psi_z+(\mu-\si^2a_2)\psi=0 \quad \hbox{in}\quad {\cal D},\\[3mm] \big(\psi+a_2\psi_z\big)(z_0,s)=I(z_0,s),\quad s\in(s_0-\ep,s_0+\ep),\\[3mm]
\psi(z,s_0-\ep)=w(z,s_0-\ep),\quad z> z_0.
\end{cases}
\end{align}
under the exponential growth conditions on $\psi$ and $\psi_z$.
Differentiating the equation in \eqref{psi} w.r.t. $z$, we have
\[
\psi_{zs}-\tfrac{1}{2}\si^2 a_2^2\psi_{zzz}+\big(\mu-\tfrac{1}{2}\si^2a_2^2-\si^2a_2\big)\psi_{zz}+(\mu-\si^2a_2)\psi_z=0 \quad \hbox{in}\quad {\cal D},\]
So $\Psi=\psi+a_2\psi_z$ satisfies
\begin{align*}
\begin{cases}
\Psi_s-\frac{1}{2}\si^2 a_2^2\Psi_{zz}+\big(\mu-\frac{1}{2}\si^2a_2^2-\si^2a_2\big)\Psi_z+(\mu-\si^2a_2)\Psi=0 \quad \hbox{in}\quad {\cal D},\\[3mm] \Psi(z_0,s)=I(z_0,s),\quad s\in(s_0-\ep,s_0+\ep),\\[3mm]
\Psi(z,s_0-\ep)=I(z,s_0-\ep),\quad z> z_0.
\end{cases}
\end{align*}
Using \eqref{wp2} and \eqref{wp3}, by the strong maximum principle, we have $\Psi> 0$ in ${\cal D}$.
Define a function
\begin{align*}
\Gamma(x,y)=
\left\{
\begin{array}{ll}
a_2, & x+a_2 y\geq 0, \\[3mm]
a_1, & x+a_2 y< 0 \hbox{\quad and\quad } x+a_1 y\leq 0, \\[3mm]
-\frac{x}{y},\;\;& x+a_2 y< 0 \hbox{\quad and\quad } x+a_1 y> 0.
\end{array}
\right.
\end{align*}
Then \eqref{psi} can be rewritten as
\begin{align*}%\label{psi_1}
\begin{cases}
\psi_s-\frac{1}{2}\si^2 \Gamma^2(\psi,\psi_z)\psi_{zz}+\big(\mu-\frac{1}{2}\si^2\Gamma^2(\psi,\psi_z) -\si^2\Gamma(\psi,\psi_z)\big)\psi_z+\big(\mu-\si^2\Gamma(\psi,\psi_z)\big)\psi=0,\\[3mm] \big(\psi+a_2\psi_z\big)(z_0,s)=\big( w+a_2w_z\big)(z_0,s),\quad s\in(s_0-\ep,s_0+\ep)\\[3mm]
\psi(z,s_0-\ep)=w(z,s_0-\ep),\quad z> z_0.
\end{cases}
\end{align*}
Since $w_z>0$ by \eqref{wz_b}, we have $\Gamma(w,w_z)=A(w/w_z)$. Hence $\psi=w$ also satisfies the above system, by the uniqueness of its solution, we conclude that $\psi=w$ in ${\cal D}$. Consequently, $I=\Psi>0 $ in ${\cal D}$. But, by the definition of $f(s_0)$, we would have $f(s)\leq z_0$ for $s\in(s_0-\ep,s_0+\ep)$ so that 
$$f^*(s_0)=\limsup\limits_{s\rightarrow s_0}f(s)\leq z_0,$$ 
contradicting to $z_0\in(f_*(s_0-),f^*(s_0))$.
\end{proof}

\begin{lemma}\label{lem:<f}
Given $s\in (0,T]$, we have
\begin{align}\label{<f}
I(z,s)\leq 0,\quad \forall \; z\leq f(s).\end{align}
\end{lemma}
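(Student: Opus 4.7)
The plan is to argue by contradiction and reuse the comparison machinery developed in the proof of \lemref{lem:b+-}. Suppose there exist $s_0 \in (0,T]$ and $\hat z < f(s_0)$ with $I(\hat z, s_0) > 0$. Continuity of $I$ immediately gives $I(\hat z, s) > 0$ on some time neighborhood $(s_0 - \epsilon_0, s_0 + \epsilon_0)$.

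The main step is a dichotomy on the behavior of $f$ just to the left of $s_0$. In the first sub-case, there exists $s_1 \in (s_0 - \epsilon_0, s_0)$ with $f(s_1) \leq \hat z$; then I would replay \lemref{lem:b+-}'s argument verbatim on the strip $\mathcal{D} = (\hat z, +\infty) \times (s_1, s_0 + \epsilon)$. That is, construct $\psi$ solving the borrowing PDE on $\mathcal{D}$ with Dirichlet data $\psi(\hat z, s) = w(\hat z, s)$ and $\psi(z, s_1) = w(z, s_1)$ under exponential growth; use $\Gamma(w, w_z) = A(w/w_z)$ together with uniqueness of solutions to conclude $\psi \equiv w$ on $\mathcal{D}$; then apply the strong maximum principle to $\Psi := \psi + a_2 \psi_z = I$, noting that $\Psi(\hat z, s) = I(\hat z, s) > 0$ by continuity and $\Psi(z, s_1) = I(z, s_1) \geq 0$ because $z > \hat z \geq f(s_1)$ puts $z$ outside the set $\{I(\cdot, s_1) < 0\}$. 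This yields $I > 0$ throughout $\mathcal{D}$, forcing $f(s) \leq \hat z$ for all $s \in (s_1, s_0 + \epsilon)$, which contradicts $\hat z < f(s_0)$.

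The complementary sub-case, in which $f(s) > \hat z$ for every $s$ in a left-neighborhood of $s_0$, is where the main obstacle lies. In this regime $f_*(s_0-) \geq \hat z$ while $f^*(s_0) \geq f(s_0) > \hat z$, so whenever $f_*(s_0-) < f^*(s_0)$, \lemref{lem:b+-} delivers $I \equiv 0$ on the open jump interval $(f_*(s_0-), f^*(s_0))$; a slight rightward perturbation $\hat z \mapsto \hat z + \delta$, with $I(\hat z + \delta, s_0) > 0$ preserved by continuity and $\hat z + \delta$ chosen to sit strictly inside the jump interval, produces the incompatible pair $I = 0$ and $I > 0$ at $(\hat z + \delta, s_0)$. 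The truly delicate point is excluding the degenerate configuration in which $f$ is continuous at $s_0$ with $f(s) > \hat z$ persisting all the way down to $s = 0^+$; I expect to rule this out by iterating the whole analysis at the first bad time $s^* := \inf\{s \in (0, s_0] : I(\hat z, s) > 0 \text{ and } f(s) > \hat z\}$, using the initial condition $I(z,0) = \tfrac{1+a_2}{2}e^z - d$, whose unique sign-change at $z = \ln(2d/(1+a_2))$ forces $\lim_{s \to 0^+} f(s) = \ln(2d/(1+a_2))$ and thereby squeezes $\hat z$ to a value that, after a further downward perturbation, reduces to the first sub-case.
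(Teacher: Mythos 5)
Your sub-case 1 reuses \lemref{lem:b+-}'s argument essentially correctly, with one slip: the auxiliary function $\psi$ there must be prescribed with the \emph{Robin} data $(\psi+a_2\psi_z)(\hat z,s)=I(\hat z,s)$ on the left lateral boundary, not the Dirichlet data $\psi(\hat z,s)=w(\hat z,s)$ that you write. The Robin condition is what turns into the Dirichlet condition for $\Psi=\psi+a_2\psi_z$, which is what the maximum principle needs; and it is also what makes $\psi$ and $w$ satisfy the \emph{same} boundary-value problem after the $\Gamma$ rewriting, so that uniqueness can be invoked to identify them. With Dirichlet data for $\psi$ you neither control $\Psi$ on the boundary nor get $\psi\equiv w$.

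The real gap is in your ``complementary sub-case,'' and you are right to be suspicious of it. When $f(s)>\hat z$ on an entire left-neighborhood of $s_0$, all you can conclude is $f_*(s_0-)\geq\hat z$; if $f_*(s_0-)>\hat z$, then the jump interval $(f_*(s_0-),f^*(s_0))$ from \lemref{lem:b+-} lies strictly to the right of $\hat z$, and a small rightward perturbation $\hat z+\delta$ (small enough to keep $I(\hat z+\delta,s_0)>0$) need not land inside it, so no contradiction is produced. Worse, if $f_*(s_0-)=f^*(s_0)$ there is no jump interval at all and \lemref{lem:b+-} gives nothing. Your proposed ``iterate at the first bad time and squeeze using the initial data'' is not a proof: the set $\{s:I(\hat z,s)>0\text{ and }f(s)>\hat z\}$ need not have a first element in $(0,s_0]$, and nothing in the sketch forces the iteration to terminate in sub-case 1. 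The paper sidesteps the whole dichotomy by working globally: since $\limsup_{z\to-\infty}I(z,s)<0$ by the bounds \eqref{w_b}--\eqref{wz_b}, $I$ attains a maximum over $\overline{\cal C}$ at some $(z_0,s_0)$; the constraint $z_0\leq f^*(s_0)$ combined with \lemref{lem:b+-} (and continuity) rules out $z_0\geq f_*(s_0-)$, so $z_0<f_*(s_0-)$, which places a small backward cylinder ${\cal D}=(z_0-\ep,z_0+\ep)\times(s_0-\ep,s_0)$ inside ${\cal C}$. By continuity $I>0$ on ${\cal D}$, hence $A(w/w_z)=a_2$ and $I$ solves a linear equation there, and the strong maximum principle applied at the interior maximum $(z_0,s_0)$ yields the contradiction. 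The crucial idea you are missing is to select a \emph{global} maximizer of $I$ over $\overline{\cal C}$ rather than to argue at an arbitrary violating point $\hat z$; the global choice is exactly what forces the geometry to cooperate.
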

\begin{proof}
Denote $${\cal C}:=\{(z,s)\mid z\leq f(s),\;s\in(0,T]\}.$$ If \eqref{<f} were not true, i.e. $I$ would take positive values in ${\cal C}$. Since $\limsup\limits_{z\rt-\infty}I(z,s)<0$ for any $s\in(0,T]$ by \eqref{w_b} and \eqref{wz_b}, there would exist $(z_0,s_0)\in\overline{{\cal C}}$ such that $I(z_0,s_0)=\max\limits_{(z,s)\in\overline{{\cal C}}}I(z,s)>0.$ Note that $(z_0,s_0)\in\overline{{\cal C}}$ implies 
$$z_0\leq f^*(s_0)=\limsup\limits_{s\rightarrow s_0}f(s).$$ 
By \lemref{lem:b+-} we would have $$z_0< f_*(s_0-)=\liminf\limits_{s\rt s_0-}f(s).$$ Therefore, $I>0$ in $${\cal D}:=(z_0-\ep,z_0+\ep)\times(s_0-\ep,s_0)\subset {\cal C}$$ for sufficiently small $\ep>0$. Then $A(w/w_z)=a_2$ in ${\cal D}$, so $I$ would satisfy a linear equation in ${\cal D}$. However, as $(z_0,s_0)$ is the maximum point of $I$ in ${\cal D}$, it is impossible by the maximum principle.
\end{proof}

By \lemref{lem:<f}, we see $I(z,s_0)\geq0$ for any $z> z_0$ if $I(z_0,s_0)>0$. We continue to prove the following stronger conclusion.
\begin{lemma}\label{lim:>f}
Given $s_0\in (0,T]$, we have $I(z,s_0)>0$ for any $z> z_0$ if $I(z_0,s_0)>0$.
\end{lemma}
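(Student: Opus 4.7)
The plan is to argue by contradiction. Suppose $I(z_0,s_0)>0$ but $I(z_1,s_0)\le 0$ for some $z_1>z_0$. Because $I$ is continuous and strictly positive at $(z_0,s_0)$, no nearby point lies in $\{z:I(z,s_0)<0\}$, so $z_0>f(s_0)$. \lemref{lem:<f} then forces $I(\cdot,s_0)\ge 0$ on $[z_0,+\infty)$, so in fact $I(z_1,s_0)=0$ and one may define
\[
z_*:=\inf\{z>z_0:I(z,s_0)=0\}.
\]
Continuity and positivity of $I$ near $(z_0,s_0)$ give $z_*>z_0$, with $I(\cdot,s_0)>0$ on $[z_0,z_*)$ and $I(z_*,s_0)=0$. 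Because $I(\cdot,s_0)\ge 0$ on both sides of $z_*$ with equality at $z_*$, the point $z_*$ is a local minimum of $I(\cdot,s_0)$, so
\[
I_z(z_*,s_0)=0,\qquad I_{zz}(z_*,s_0)\ge 0.
\]

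Next I exploit the linear PDE satisfied by $I$ on the open set $\mathcal B=\{I>0\}$. There $A(w/w_z)\equiv a_2$, so $w$ solves a linear constant-coefficient parabolic equation, and differentiating that equation in $z$ shows $I$ solves the same one:
\[
I_s-\tfrac12\sigma^2a_2^2 I_{zz}+\bigl(\mu-\tfrac12\sigma^2a_2^2-\sigma^2a_2\bigr)I_z+(\mu-\sigma^2a_2)I=0\quad\text{in }\mathcal B.
\]
Approaching $(z_*,s_0)\in\partial\mathcal B$ from within $\mathcal B$ along, say, $(z_*-\epsilon,s_0)$ as $\epsilon\downarrow 0$, the $C^{2,1}$ regularity from \thmref{theo:w} lets the identity pass to $(z_*,s_0)$. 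Inserting $I=I_z=0$ at $(z_*,s_0)$ reduces it to
\[
I_s(z_*,s_0)=\tfrac12\sigma^2a_2^2\,I_{zz}(z_*,s_0)\ge 0.
\]

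The contradiction is then assembled as follows. In the generic case $I_{zz}(z_*,s_0)>0$, one has $I_s(z_*,s_0)>0$, and the first-order Taylor expansion in $s$ gives $I(z_*,s_0-h)<0$ for all sufficiently small $h>0$. Hence $z_*\in\{z:I(z,s_0-h)<0\}$, which forces $f(s_0-h)\ge z_*>z_0$; consequently $(z_0,s_0-h)\in\mathcal C$, and \lemref{lem:<f} yields $I(z_0,s_0-h)\le 0$. Letting $h\downarrow 0$ and using continuity of $I$ contradicts $I(z_0,s_0)>0$.

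The main obstacle is the degenerate case $I_{zz}(z_*,s_0)=0$, where the first-order Taylor argument is inconclusive. The intended remedy is a parabolic Hopf lemma applied on the connected component $C\subset\mathcal B$ containing $(z_0,s_0)$, whose boundary includes the point $(z_*,s_0)$: since $I>0$ in $C$, $I$ solves the linear PDE above in $C$, and $I$ attains the boundary value $0$ at $(z_*,s_0)$, Hopf forces $I_z(z_*,s_0)<0$ strictly, again contradicting $I_z(z_*,s_0)=0$. The technical crux here is verifying an interior parabolic ball (or cone) condition for $C$ at $(z_*,s_0)$; this is achieved by combining continuity of $I$ on the segment $[z_0,z_*)\times\{s_0\}$ with the structural facts from \lemref{lem:b+-} and \lemref{lem:<f}, together with the auxiliary-function/uniqueness comparison used in the proof of \lemref{lem:b+-} to identify an ambient linear-PDE solution with $w$.
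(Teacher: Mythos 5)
Your proof takes a genuinely different route from the paper's, but it has two intertwined gaps that the paper's approach avoids entirely.

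The paper's proof does not work pointwise at the first zero $z_*$. Instead it uses continuity of $I(z_0,\cdot)$ to obtain $I(z_0,s)>0$ for all $s$ in a left-neighborhood $(s_0-\ep,s_0]$, and then combines this with \lemref{lem:<f} and the definition of $f$ to conclude $I\geq 0$ on the entire rectangle $(z_0,+\infty)\times(s_0-\ep,s_0]$. On that rectangle (not merely on the open set $\{I>0\}$) we have $-w/w_z\leq a_2$, hence $A(w/w_z)\equiv a_2$ and $I$ satisfies the same linear parabolic equation. The strong maximum principle then forces $I>0$ on the rectangle; any interior zero would propagate to the boundary $z=z_0$, contradicting $I(z_0,s)>0$. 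This is short and requires no interior-ball geometry and no information about $I_{zz}$ at a boundary point.

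The first gap in your argument is regularity. You invoke $I_z(z_*,s_0)=0$ and $I_{zz}(z_*,s_0)\geq0$, and you pass the PDE for $I$ to the boundary point $(z_*,s_0)$. But $I=w+a_2w_z$, so $I_{zz}$ requires $w_{zzz}$, and \thmref{theo:w} gives only $w\in C^{2+\al,1+\al/2}$. Inside $\mathcal B$ one can bootstrap (the linear equation with constant coefficients and Schauder give interior smoothness), but $(z_*,s_0)$ lies on $\partial\mathcal B$, and the boundary is not the parabolic boundary of a known nice region, so this does not automatically extend. Your remark that ``the $C^{2,1}$ regularity \dots lets the identity pass to $(z_*,s_0)$'' is not justified: $C^{2,1}$ regularity of $w$ does not yield existence, let alone continuity, of $I_{zz}$. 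The paper never needs $I_{zz}$ at a boundary point and so never confronts this. (If you first established $I\geq 0$ on the rectangle $(z_0,+\infty)\times(s_0-\ep,s_0]$, you could bootstrap $w$ there and make $I_{zz}$ well-defined at $(z_*,s_0)$; but you establish the sign of $I$ only along the single time slice $s=s_0$.)

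The second and main gap is the degenerate case $I_{zz}(z_*,s_0)=0$. You propose a parabolic Hopf lemma on the connected component $C$ of $\mathcal B$ containing $(z_0,s_0)$. But Hopf's lemma requires an interior parabolic ball (or cone) condition at the boundary point $(z_*,s_0)$, and you acknowledge without proving it that this is ``the technical crux.'' The geometry of $C$ is unknown a priori: $\partial C$ is a level set of $I$, and nothing in \lemref{lem:b+-} or \lemref{lem:<f} gives the needed uniform interior ball at $(z_*,s_0)$. Simply asserting that it can be extracted from continuity plus those lemmas is not a proof. Again the paper's strong-maximum-principle argument on the rectangular domain sidesteps the issue completely, precisely because the linear equation holds on $\{I\geq 0\}$, not just $\{I>0\}$: this is the crucial observation that both of your gaps trace back to missing.
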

\begin{proof}
By the continuity of $I$, there exists $0<\ep<s_0$ such that
$$
I(z_0,s)>0,\quad s\in (s_0-\ep,s_0].
$$
By \lemref{lem:<f} we further have
$$
I(z,s)\geq 0,\quad (z,s)\in (z_0,+\infty)\times(s_0-\ep,s_0]
$$
and $I$ satisfies a linear equation in $(z_0,+\infty)\times(s_0-\ep,s_0]$. By the strong maximum principle we conclude $I>0$ in $(z_0,+\infty)\times(s_0-\ep,s_0]$.
\end{proof}

By \lemref{lim:>f} and the definition \eqref{boundaryb}, we conclude
\begin{lemma}\label{lem:B}
We have
\begin{align}\label{B}
{\cal B}=\{(z,s)\mid z>b(s),\;s\in(0,T]\}\end{align}
with $$b(0+)=\ln(2d)-\ln(1+a_2).$$
\end{lemma}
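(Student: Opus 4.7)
The plan is to split the lemma into its two assertions and dispatch each by continuity plus one earlier lemma.

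For the characterization $\mathcal{B}=\{z>b(s)\}$, I would first observe that because $w_z>0$ by \eqref{wz_b}, the defining inequality $-w/w_z<a_2$ is equivalent to $I(z,s):=w+a_2w_z>0$, so $\mathcal{B}=\{I>0\}$. Fix $s\in(0,T]$. The slice $\{z\in\mathbb{R}\mid I(z,s)>0\}$ is non-empty by \lemref{lem:b_ub} and is not the whole line, because $I(z,s)\to -de^{-r_2s}<0$ as $z\to-\infty$ (use the upper bound in \eqref{w_b} together with the upper bound in \eqref{wz_b}). By \lemref{lim:>f} this slice is upward-closed in $z$, and by continuity of $I$ (which comes from $w\in C^{2+\alpha,1+\alpha/2}$) it is open in $\mathbb{R}$. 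An upward-closed, open, proper subset of $\mathbb{R}$ must be an open right-ray $(\beta(s),\infty)$, and then $\{I(\cdot,s)\leq 0\}=(-\infty,\beta(s)]$ forces $\beta(s)=b(s)$ by the very definition of $b(s)$.

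For the initial value $b(0+)=\ln(2d)-\ln(1+a_2)$, I would use the initial condition $w(z,0)=\tfrac{1}{2}e^z-d$. Differentiating gives $w_z(z,0)=\tfrac{1}{2}e^z$ and hence $I(z,0)=\tfrac{1+a_2}{2}e^z-d$, whose unique zero is $b_0:=\ln(2d)-\ln(1+a_2)$. Since $I$ is jointly continuous on $\mathbb{R}\times[0,T]$ by \thmref{theo:w}, for any $\varepsilon>0$ one can find $\delta>0$ such that $I(b_0+\varepsilon,s)>0$ and $I(b_0-\varepsilon,s)<0$ for all $s\in(0,\delta)$. By the characterization already obtained, these inequalities translate to $b(s)<b_0+\varepsilon$ and $b(s)\geq b_0-\varepsilon$ respectively. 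Sending $\varepsilon\downarrow 0$ yields $b(0+)=b_0$.

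The main obstacle is essentially none, as the real work was done upstream in \lemref{lem:b+-}, \lemref{lem:<f} and \lemref{lim:>f} via the strong maximum principle. The only mildly delicate point is the strictness of $z>b(s)$ at the endpoint: if one had $I(b(s),s)>0$, then continuity would force $I>0$ on a whole neighborhood of $b(s)$, contradicting the definition of $b(s)$ as the supremum of $\{I\leq 0\}$. Therefore $I(b(s),s)\leq 0$, $(b(s),s)\notin\mathcal{B}$, and the characterization is sharp, completing the proof.
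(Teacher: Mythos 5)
Your proof is correct and follows the same route the paper takes: the paper simply cites \lemref{lim:>f} together with the definition \eqref{boundaryb}, which is exactly the upward-closedness argument you spell out, and you additionally supply the missing verification of $b(0+)$ via joint continuity of $I=w+a_2w_z$ and the explicit initial datum $I(z,0)=\tfrac{1+a_2}{2}e^z-d$. The only cosmetic imprecision is the phrase "$I(z,s)\to -de^{-r_2s}$" as $z\to-\infty$; what your bounds actually give is $\limsup_{z\to-\infty} I(z,s)\leq -de^{-r_2s}<0$, which is all you need.
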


Similarly, we can prove
\begin{lemma}\label{lem:L}
We have
\begin{align}\label{L}
{\cal S}=\{(z,s)\mid z<l(s),\;s\in(0,T]\} \end{align}
with $$l(0+)=\ln(2d)-\ln(1+a_1).$$
\end{lemma}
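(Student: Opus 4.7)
The plan is to run the mirror image of the proof of Lemma \ref{lem:B}, replacing $I=w+a_2w_z$ throughout by $J:=w+a_1w_z$. Since $w_z>0$ by \eqref{wz_b}, the defining inequality $-w/w_z>a_1$ for ${\cal S}$ is equivalent to $J<0$, and from \eqref{boundaryl} one has $l(s)=\inf\{z:J(z,s)\ge 0\}$. Introduce the companion function $g(s):=\inf\{z:J(z,s)>0\}$, so that $g(s)\ge l(s)$ by continuity of $J$; what must be shown is that $J(z,s)<0$ for every $z<l(s)$ and $J(z,s)\ge 0$ for every $z\ge l(s)$, after which the two auxiliary boundaries $g$ and $l$ coincide.

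I would then re-establish the analogues of Lemmas \ref{lem:b+-}, \ref{lem:<f} and \ref{lim:>f} with the role of ``positive'' exchanged with ``negative'' and the half-line $(z_0,+\infty)$ replaced by $(-\infty,z_0)$. The key observation making each of them go through is that on any region where $J<0$ we have $-w/w_z>a_1$, so $A(w/w_z)\equiv a_1$ and \eqref{w_pb} reduces to the first linear equation in \eqref{three_eq}; differentiating that equation shows $J$ itself satisfies the same linear equation there, so the strong maximum principle applies. In the analogue of Lemma \ref{lem:b+-} the auxiliary linear problem $\psi$ is posed on $(-\infty,z_0)\times(s_0-\ep,s_0+\ep)$ with boundary data at $z=z_0$ and initial data at $s=s_0-\ep$ inherited from $w$, supplemented by the exponential-growth tail conditions extracted from \eqref{w_b} and \eqref{wz_b}; the $\Gamma$-trick is adapted so that $\Gamma(\psi,\psi_z)=a_1$ exactly on $\{\psi+a_1\psi_z\le 0\}$, and uniqueness of the nonlinear problem forces $\psi=w$, producing the desired contradiction. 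The analogues of Lemmas \ref{lem:<f} and \ref{lim:>f} then propagate $J<0$ to the entire left half-line $\{z<z_0\}$ whenever $J(z_0,s_0)<0$, which delivers \eqref{L}.

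For the initial value $l(0+)$, the terminal/initial data give $J(z,0)=w(z,0)+a_1w_z(z,0)=\tfrac{1}{2}(1+a_1)e^z-d$, which vanishes precisely at $z=\ln(2d)-\ln(1+a_1)$, is strictly negative to its left and strictly positive to its right. Combined with the continuity of the characterization provided by the three propagation lemmas as $s\downarrow 0$, this yields $l(0+)=\ln(2d)-\ln(1+a_1)$.

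The main obstacle is the analogue of Lemma \ref{lem:b+-}. The $\Gamma$-construction has to be redesigned so that it selects $a_1$ precisely on $\{\psi+a_1\psi_z\le 0\}$ (the upper branch of $A$ rather than the lower), and the uniqueness of the restricted nonlinear problem on the unbounded left half-strip has to be verified under the exponential-growth conditions supplied by Theorem \ref{theo:w} so that $\psi=w$ can be concluded. Everything else is symmetric bookkeeping against the proof of Lemma \ref{lem:B}.
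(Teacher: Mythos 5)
Your proposal is correct and follows exactly the route the paper indicates (the paper's own ``proof'' of Lemma~\ref{lem:L} is literally the single word ``Similarly,'' deferring to the chain Lemmas~\ref{lem:b+-}--\ref{lem:B}). Your observation that the $\Gamma$-function must be rebranched so it returns $a_1$ on the whole set $\{x+a_1y\le0\}$ (rather than only where additionally $x+a_2y<0$, which is automatic only for $y>0$) is the one genuine asymmetry in the mirror argument, and your fix is the right one since the modified $\Gamma$ still agrees with $A(x/y)$ whenever $y>0$, which is all that the uniqueness step needs.
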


Recall that $l(s)<b(s)$, so the above two lemmas imply
\begin{lemma}\label{lem:M}
We have
\begin{align*}%\label{M}
{\cal N}=\{(z,s)\mid l(s)\leq z\leq b(s),\;s\in(0,T]\}.\end{align*}
\end{lemma}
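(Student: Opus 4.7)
The plan is essentially a complementation argument: once $\mathcal{B}$ and $\mathcal{S}$ have been characterized as half-strips by \lemref{lem:B} and \lemref{lem:L}, the middle region $\mathcal{N}$ is forced on us, because $\mathcal{B}$, $\mathcal{N}$, and $\mathcal{S}$ partition $Q_T$ by construction.

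First, I would remark that $w_z > 0$ throughout $Q_T$ by the lower bound in \eqref{wz_b}, so the quotient $-w/w_z$ is well-defined at every point of $Q_T$. Therefore each $(z,s) \in Q_T$ satisfies exactly one of the three mutually exclusive conditions $-w/w_z < a_2$, $a_2 \leq -w/w_z \leq a_1$, or $-w/w_z > a_1$, which gives the disjoint partition
\[
Q_T = \mathcal{B} \sqcup \mathcal{N} \sqcup \mathcal{S}.
\]

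Next, I would invoke \lemref{lem:B} to write $\mathcal{B} = \{(z,s) \mid z > b(s),\ s \in (0,T]\}$ and \lemref{lem:L} to write $\mathcal{S} = \{(z,s) \mid z < l(s),\ s \in (0,T]\}$. Since $l(s) < b(s)$ for every $s \in (0,T]$ (as noted right after the definitions \eqref{boundaryb}--\eqref{boundaryl}, because $a_1 > a_2$ and $w_z > 0$), the two half-strips $\mathcal{B}$ and $\mathcal{S}$ are disjoint, and their union has complement
\[
Q_T \setminus (\mathcal{B} \cup \mathcal{S}) = \{(z,s) \mid l(s) \leq z \leq b(s),\ s \in (0,T]\}.
\]
Combining this with the partition identity yields the claim $\mathcal{N} = \{(z,s) \mid l(s) \leq z \leq b(s),\ s \in (0,T]\}$.

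There is no real obstacle here; the only subtle point is verifying that $\mathcal{B}$, $\mathcal{N}$, $\mathcal{S}$ genuinely partition $Q_T$, which reduces to the positivity $w_z > 0$ guaranteed by \thmref{theo:w}. The rest is pure set algebra on top of the two preceding lemmas.
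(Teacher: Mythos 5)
Your argument is exactly the one the paper intends: $\mathcal{B}$, $\mathcal{N}$, $\mathcal{S}$ partition $Q_T$ (since $w_z>0$ makes $-w/w_z$ well-defined everywhere), and after \lemref{lem:B}, \lemref{lem:L}, and the observation $l(s)<b(s)$, the description of $\mathcal{N}$ follows by complementation. The proof is correct and matches the paper's reasoning.
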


Next, we prove that the boundaries $b(\cdot)$ and $l(\cdot)$ are smooth when the coefficients meet certain conditions.

\begin{proposition}\label{prop:C1}
If the coefficients satisfy the following conditions
\begin{gather}
\label{CD0}
{\mu>0,}\\\label{CD1}
\si^2 a_2^2+r_1-2 r_2 \geq 0,\\\label{CD2}
1+\frac{2 r_1}{\mu-r_1}-2 a_1+a_2\geq 0,
\end{gather}
then the boundaries $b(\cdot)$, $l(\cdot)\in C^1((0,T])$.
\end{proposition}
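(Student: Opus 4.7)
The plan is to apply the implicit function theorem to the defining relations $I(b(s), s) = 0$ and $J(l(s), s) = 0$, where $I := w + a_2 w_z$ and $J := w + a_1 w_z$. This will yield $b, l \in C^1((0,T])$ once we establish (i) sufficient regularity of $I, J$ up to the respective free boundaries and (ii) the non-degeneracy $I_z(b(s), s) > 0$ and $J_z(l(s), s) > 0$. As a preliminary step, I would derive linear parabolic PDEs for $I$ and $J$: in ${\cal B} = \{z > b(s)\}$, where $A(w/w_z) = a_2$, the function $w$ satisfies the constant-coefficient linear equation in the third case of \eqref{three_eq}, and differentiating this equation once in $z$ then forming $I = w + a_2 w_z$ shows that $I$ satisfies the \emph{same} linear equation in ${\cal B}$. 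Moreover $I$ vanishes on $\{z = b(s)\}$ and is strictly positive inside by \lemref{lim:>f}. An analogous statement holds for $J$ in ${\cal S}$, with $J < 0$ in the interior.

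Next, and this is where \eqref{CD0}-\eqref{CD2} enter, I would establish monotonicity of $b(\cdot)$ and $l(\cdot)$. From the explicit initial datum $w(z, 0) = \tfrac{1}{2}e^z - d$ one may compute $I_s$ and $J_s$ at $s = 0$ directly from the governing equations, and evaluate them at $b(0+) = \ln(2d/(1+a_2))$ and $l(0+) = \ln(2d/(1+a_1))$ from \lemref{lem:B} and \lemref{lem:L}. Condition \eqref{CD1} is designed to be precisely the algebraic statement that this initial value of $I_s$ has a definite sign (and hence that $b$ is monotone in $s$), while conditions \eqref{CD0} and \eqref{CD2} play the analogous role for $J_s$ and $l$. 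Because $I_s$ and $J_s$ themselves satisfy the same linear parabolic equations as $I$ and $J$ in their respective regions (differentiate in $s$), the strong maximum principle propagates the signs forward in time throughout $(0,T]$.

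Once monotonicity is in hand, combined with the uniform bounds of \lemref{lem:b_ub} and \lemref{lem:l_lb}, the functions $b$ and $l$ are of bounded variation, and continuity on $(0,T]$ follows by ruling out jumps via \lemref{lem:b+-}. Boundary Schauder theory for parabolic equations on monotone time-dependent domains then makes $I$ extend as $C^{1+\alpha, (1+\alpha)/2}$ up to $\partial {\cal B}$ locally, and Hopf's lemma, using strict positivity of $I$ in the interior and the interior-sphere condition supplied by the monotone boundary, yields $I_z(b(s), s) > 0$; analogously $J_z(l(s), s) > 0$. The implicit function theorem applied to $I(b(s), s) = 0$ and $J(l(s), s) = 0$ then gives $b, l \in C^1((0,T])$ with $b'(s) = -I_s/I_z$ and $l'(s) = -J_s/J_z$ at the boundary.

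The main obstacle is the second step: verifying by direct algebra that \eqref{CD0}-\eqref{CD2} produce the required signs of $I_s$ and $J_s$ not just at the single points $(b(0+), 0)$ and $(l(0+), 0)$ but in a one-sided parabolic neighborhood, so that the maximum principle can be invoked in a time-dependent domain whose boundary is only known a posteriori. A secondary difficulty is bootstrapping $b, l$ from bounded-variation to sufficient regularity for Schauder theory; this requires exploiting the continuity of $I, J$ together with the non-degeneracy forthcoming from Hopf's lemma.
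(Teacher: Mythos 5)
Your final step — the implicit function theorem applied with non-degeneracy $I_z(b(s),s)>0$, $J_z(l(s),s)>0$ — is exactly the endpoint the paper reaches (stated there as $\p_z(w/w_z)>0$ on $\overline{\mathcal N}$, which is equivalent since $I=w+a_2w_z$ and $w+a_1w_z$ are scalar multiples of $w/w_z-(-a_2)$ and $w/w_z-(-a_1)$ along the respective boundaries). But the route you propose to reach that non-degeneracy has a genuine gap, and it is not how conditions \eqref{CD0}--\eqref{CD2} actually enter.

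Your plan hinges on establishing monotonicity of $b(\cdot)$ and $l(\cdot)$ first, by showing $I_s$ and $J_s$ have a definite sign at $s=0$ and propagating that sign by the maximum principle, and then invoking Hopf's lemma on the resulting monotone domains. Two problems. First, the claim that \eqref{CD1} (resp.\ \eqref{CD0},\eqref{CD2}) ``is designed to be precisely'' the sign condition on $I_s(b(0+),0)$ (resp.\ $J_s$) is asserted, not checked; carrying out the computation gives expressions like $w_s(b(0+),0)=\tfrac{d}{1+a_2}(a_2\mu-r_2)$, whose sign is not controlled by the stated hypotheses, and the full $I_s=w_s+a_2 w_{zs}$ is messier still. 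There is no indication in the paper that the free boundaries are monotone, and the hypotheses are not calibrated to produce that. Second, even granting a sign for $I_s$ at $s=0$, the propagation argument is circular: along $\{z=b(s)\}$ one has $I\equiv 0$, so $I_s=-I_z\,b'(s)$ there, and to conclude that $I_s$ keeps a sign on the lateral boundary you already need to know both $I_z>0$ (the thing Hopf was supposed to deliver) and the sign of $b'(s)$ (the monotonicity you are trying to establish). The maximum principle for $I_s$ on the moving domain ${\mathcal B}$ cannot get started without boundary data that you do not yet control.

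The paper avoids all of this. Condition \eqref{CD1} is used to prove a monotonicity of $w$ itself, namely $w_s+r_1 w\geq 0$ (Lemma~\ref{lem:ws>0}, by comparing $e^{r_1 s}w(\cdot,s)$ with its time-shift). Then, inside $\overline{\mathcal N}$ where $A(w/w_z)=-w/w_z$, the PDE \eqref{w_pb} is rearranged algebraically to express $\tfrac{\si^2}{2}w\big(1-\p_z(w/w_z)\big)$ in terms of $w_s$, $w$, and $A(w/w_z)$; bounding $w_s\geq -r_1 w$, $a_2\leq A\leq a_1$, and using $w<0$, $\mu>0$ gives directly
\[
\p_z\Big(\frac{w}{w_z}\Big)>1-2a_1+a_2+\frac{2r_1}{\mu-r_1}\geq 0\quad\text{in }\overline{\mathcal N},
\]
with strict inequality because $A$ cannot equal $a_1$ and $a_2$ simultaneously. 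This is \eqref{Az<}, and the implicit function theorem applied to $-w/w_z=a_2$ and $-w/w_z=a_1$ gives $b,l\in C^1((0,T])$ at once. No monotonicity of $b,l$, no Hopf lemma, and no boundary regularity beyond the global $C^{2+\al,1+\al/2}$ already in hand are needed. You should replace the speculative middle of your argument with this direct computation of $\p_z(w/w_z)$ from the PDE; that is where \eqref{CD0}--\eqref{CD2} actually do their work.
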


To prove this conclusion, we need

\begin{lemma}\label{lem:ws>0}
Under the condition \eqref{CD1}, we have
\be\label{ws>0}
w_s+r_1 w\geq 0.
\ee
\end{lemma}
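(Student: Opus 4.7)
The plan is to set $\phi := w_s + r_1 w$ and prove the lemma in two stages: (i) verify $\phi(z,0) \geq 0$ for every $z$, and (ii) show that $\phi$ satisfies a uniformly parabolic linear equation in each of the three regions $\mathcal{B}, \mathcal{N}, \mathcal{S}$, so that a standard parabolic maximum-principle argument propagates non-negativity from $s=0$ to all of $Q_T$.

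For step (i) I will plug the initial data $w(z,0) = \tfrac{1}{2}e^z - d$ and $w_z(z,0) = w_{zz}(z,0) = \tfrac{1}{2}e^z$ into $w_s(z,0) = \mathcal{T} w(z,0)$ in each region. In $\mathcal{B}$ at $s=0$, with $A = a_2$, I find $\phi(z,0) = \tfrac{1}{2}e^z(\sigma^2 a_2^2 + r_1 - 2r_2) + (r_2 - r_1)d$, which is non-negative by \eqref{CD1} and $r_2 > r_1$. In $\mathcal{S}$ at $s=0$, with $A = a_1$, I find $\phi(z,0) = \tfrac{1}{2}e^z(\sigma^2 a_1^2 - r_1)$, positive because \eqref{CD1} combined with $r_2 > r_1$ yields $\sigma^2 a_1^2 \geq \sigma^2 a_2^2 \geq 2r_2 - r_1 > r_1$. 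In $\mathcal{N}$ the nonlinear factor $(w/w_z)^2(w_{zz} - w_z)$ vanishes at $s=0$ since $w_{zz} = w_z$ there, so $\phi(z,0) = (\sigma^2 + \mu - r_1)d - \tfrac{1}{2}e^z(\sigma^2 + 2\mu - r_1)$, monotonically decreasing in $z$; its minimum on $\mathcal{N}$ is attained at $z = b(0+) = \ln(2d/(1+a_2))$, where non-negativity simplifies to $a_2(\mu - r_1) \geq r_2$, equivalently $\sigma^2 a_2^2 + a_2(r_2 - r_1) \geq r_2$, which follows from \eqref{CD1} since $\sigma^2 a_2^2 \geq 2r_2 - r_1$ and $a_2(r_2 - r_1) > 0$.

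For step (ii) I will derive the equation satisfied by $\phi$. In $\mathcal{B}$ and $\mathcal{S}$ the operator $\mathcal{T}$ is linear with constant coefficients, so differentiating the linear PDE $w_s = L w$ in $s$ shows that $\phi = w_s + r_1 w$ solves the same linear equation (the diffusion is $\tfrac{1}{2}\sigma^2 a_2^2$, respectively $\tfrac{1}{2}\sigma^2 a_1^2$). In $\mathcal{N}$, I will write the nonlinear PDE as $w_s = F(w, w_z, w_{zz})$ with $F = \tfrac{\sigma^2}{2}(w/w_z)^2(w_{zz} - w_z) - \sigma^2 w - \mu w_z - \mu w$ and check the Euler-type identity $F_{w_{zz}} w_{zz} + F_{w_z} w_z + F_w w = F = w_s$; combining it with the $s$-differentiated equation for $w_s$ then yields the homogeneous linear equation $\phi_s = F_{w_{zz}}\phi_{zz} + F_{w_z}\phi_z + F_w \phi$ in $\mathcal{N}$. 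The diffusion $F_{w_{zz}} = \tfrac{\sigma^2}{2}(w/w_z)^2$ lies in $[\tfrac{1}{2}\sigma^2 a_2^2, \tfrac{1}{2}\sigma^2 a_1^2]$ throughout $Q_T$, so the equation for $\phi$ is uniformly parabolic, with first- and zeroth-order coefficients bounded on compacta by the $C^{2+\alpha, 1+\alpha/2}$ regularity of $w$ from \thmref{theo:w} together with $w_z > 0$.

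The conclusion will follow from a parabolic maximum principle applied to $\tilde\phi := e^{-Ks}\phi$ with $K$ chosen larger than the sup norm of the zero-order coefficients across all regions: a negative interior minimum is ruled out region by region by the strong maximum principle, $\phi(\cdot, 0) \geq 0$ excludes a minimum on $\{s = 0\}$, and the exponential bounds \eqref{w_b}--\eqref{wz_b} control $\phi$ at spatial infinity. The main obstacle I anticipate is patching the maximum principle across the free boundaries $\{z = b(s)\}$ and $\{z = l(s)\}$, where the coefficients of the linear equation for $\phi$ are only piecewise continuous; a clean way to handle this is to regularise the truncation $A(\cdot)$ by a smooth approximation $A_{\varepsilon}(\cdot)$, apply the classical strong maximum principle to the corresponding smooth problem, and pass to the limit using the stability of parabolic PDEs under uniform convergence of coefficients.
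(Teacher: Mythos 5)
Your approach is genuinely different from the paper's and, with some repair, can be made to work, but it is technically heavier. The paper proves the lemma \emph{without} ever writing an equation for $\phi := w_s + r_1 w$. Instead it sets $\varphi := e^{r_1 s} w$ and observes two things: (a) since $\mathcal{T}$ is 1-homogeneous with $s$-independent coefficients, $\varphi$ solves the autonomous equation $\varphi_s - r_1\varphi - \mathcal{T}\varphi = 0$, hence so does any time-shift $\overline{\varphi}(z,s) = \varphi(z,s+\Delta s)$; (b) by the global lower bound in \eqref{w_b}, condition \eqref{CD1} (equivalently $\theta_2 + r_1 \geq 0$) forces $\varphi(z,\Delta s) \geq \frac{1}{2}e^z - d = \varphi(z,0)$. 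A single application of the comparison principle to $\overline{\varphi}$ and $\varphi$ then gives $\varphi(\cdot,s+\Delta s) \geq \varphi(\cdot,s)$, i.e. $\varphi_s \geq 0$. This completely sidesteps the problem you flag at the end of your proposal: the linearisation of $\mathcal{T}$ has coefficients that jump across $\{z = b(s)\}$ and $\{z = l(s)\}$ (the diffusion $\tfrac{1}{2}\sigma^2 A^2$ is continuous, but the first- and zeroth-order terms involve $A'$, which is not), so the PDE you would like $\phi$ to satisfy only holds region by region with mismatching lower-order coefficients on the interfaces. Your regularisation plan is plausible but not carried out; making it rigorous requires re-running the entire existence argument for $w^\varepsilon$ with smoothed $A_\varepsilon$, proving $w^\varepsilon \to w$ in $C^{2,1}_{\mathrm{loc}}$, controlling the error $\phi^\varepsilon(\cdot,0) \geq -o(1)$, and justifying a maximum principle on an unbounded strip for the regularised problem — none of which is short. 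The paper buys all of this for free by comparing a solution with its own time translate.

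One concrete error in your Step (i): the claim that $\phi(z,0)$ restricted to $\mathcal{N}$ is ``monotonically decreasing in $z$'' is not implied by \eqref{CD1} alone. Writing $a = 2de^{-z} - 1$ so that $\frac{1}{2}e^z = \frac{d}{a+1}$, one gets $\phi(z,0) = \frac{d}{a+1}\bigl[a(\sigma^2 + \mu - r_1) - \mu\bigr]$, whose derivative with respect to $a$ has the sign of $\sigma^2 + 2\mu - r_1 = \sigma^2(1+a_1) + \mu$, and this can be negative when $\mu$ is sufficiently negative (e.g. $\mu=-10$, $r_2=-10.5$, $r_1=-11$, $\sigma=1$ satisfies \eqref{murr} and \eqref{CD1} but has $\sigma^2+2\mu-r_1 = -8$). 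Fortunately the conclusion survives: $\phi(z,0)$ on $\mathcal{N}$ is monotone in $z$ in either direction, so its minimum is attained at $z = b(0+)$ or at $z = l(0+)$, and at both endpoints it agrees with the adjacent $\mathcal{B}$- or $\mathcal{S}$-formula, which you have already verified to be $\geq 0$. So the statement should be ``monotone, hence minimised at an endpoint'' rather than ``decreasing''.

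Finally, it is worth noting what each method leans on. Your route needs only the terminal data and the PDE at $s=0$, but pays with the regularisation headache and one extra derivative of regularity of $w$ across the free boundaries. The paper's route needs the a priori two-sided bound \eqref{w_b}, which was already proved in \thmref{theo:w} by explicit sub/supersolutions, and then one invocation of the comparison principle for the original quasilinear operator. In the context of this paper, where \eqref{w_b} is already in hand, the time-shift argument is the economical one.
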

\begin{proof}
The condition \eqref{CD1} is equivalent to 
$$\theta_2=\si^2 a_2^2+2 \si^2 a_2-2 \mu\geq-r_1.$$ 
Denote $\varphi=e^{r_1 s} w$, by the first inequality in \eqref{w_b} we have $$\varphi(z,s)\geq \frac{1}{2}e^{(\theta_2+r_1)s}e^z-d\geq\frac{1}{2}e^z-d=\varphi(z,0),\quad (z,s)\in Q_T.$$ For any $\Delta s\in (0,T)$, let $\overline{\varphi}(z,s):=\varphi(z,s+\Delta s)$, then by above and the equation in \eqref{w_pb}, we have
\bee
\begin{cases}
\overline{\varphi}_s-r_1 \overline{\varphi}-{\cal T} \overline{\varphi}=0 \quad \hbox{in}\quad Q_{T-\Delta s},\\[3mm]
\overline{\varphi}(z,0)=\varphi(z,\Delta s)\geq \varphi(z,0),\quad z\in \R.
\end{cases}
\eee
By the comparison principle we have $\overline{\varphi}\geq \varphi$ in $Q_{T-\Delta s}$, which implies $\varphi_s\geq 0$, so $w_s+r_1 w\geq 0$.
\end{proof}

\begin{lemma}\label{lem:Az<}
Under the conditions \eqref{CD0}, \eqref{CD1} and \eqref{CD2},
\be\label{Az<}
\p_z\Big(\frac{w}{w_z}\Big)>0 \quad\rm{in}\quad {\cal N}.
\ee
\end{lemma}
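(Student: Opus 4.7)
The plan is to prove $\partial_z(w/w_z)>0$ in $\mathcal{N}$ by a direct algebraic computation rather than by any maximum-principle argument. Since $w_z>0$ by \eqref{wz_b}, the inequality is equivalent to $\Phi:=w_z^2-w w_{zz}>0$. Write $\eta:=w/w_z$; then in $\mathcal{N}$ one has $\eta\in[-a_1,-a_2]$, hence $\eta<0$, and consequently $w<0$ (because $w_z>0$ and $-w/w_z\geq a_2>0$).

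First I will produce an algebraic identity for $\Phi$ from the PDE satisfied by $w$ in $\mathcal{N}$,
\[
w_s-\tfrac{\sigma^2}{2}\eta^2(w_{zz}-w_z)+\sigma^2 w+\mu w_z+\mu w=0.
\]
Multiplying by $w$, combining with the two consequences $\tfrac{\sigma^2}{2}\eta^2 w_z^2=\tfrac{\sigma^2}{2}w^2$ and $\mu w w_z=\mu w^2/\eta$ of the relation $w=\eta w_z$, and cancelling, one obtains
\[
\sigma^2\eta^2\,\Phi=-\frac{(1+\eta)(\sigma^2\eta+2\mu)}{\eta}\,w^2-2 w w_s.
\]
Then invoking \lemref{lem:ws>0} (which uses (CD1)) gives $w_s\geq -r_1 w$; since $\eta w=w^2/w_z>0$, multiplying by $2\eta w>0$ yields $2\eta w w_s\geq -2r_1\eta w^2$. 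Multiplying the identity by $-\eta>0$ and invoking this estimate produces
\[
-\sigma^2\eta^3\,\Phi\;\geq\;\big[(1+\eta)(\sigma^2\eta+2\mu)-2r_1\eta\big]w^2\;=\;F(\eta)\,w^2,
\]
where $F(\eta):=\sigma^2\eta^2+(\sigma^2+2\mu-2r_1)\eta+2\mu$. Since $-\sigma^2\eta^3>0$ and $w^2>0$, strict positivity of $\Phi$ reduces to showing $F(\eta)>0$ on $[-a_1,-a_2]$.

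The last step is pure quadratic analysis. Because $F''\equiv 2\sigma^2>0$ and $F'(-a_1)=-2\sigma^2 a_1+\sigma^2+2(\mu-r_1)=\sigma^2>0$ (using $\mu-r_1=\sigma^2 a_1$), the function $F$ is strictly increasing on $[-a_1,-a_2]$, so its minimum there is $F(-a_1)=\mu+r_1-\sigma^2 a_1^2$. A direct rearrangement shows that (CD2) is precisely the inequality $\mu+r_1\geq \sigma^2 a_1(2a_1-a_2)$, whence $F(-a_1)\geq \sigma^2 a_1(a_1-a_2)>0$ since $a_1>a_2$. The main obstacle is the algebraic manipulation that produces the identity in the first step and the recognition that (CD1) and (CD2) are exactly what is needed to close the estimate; no free-boundary analysis near $l(s)$ or $b(s)$ enters the argument.
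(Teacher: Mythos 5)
Your proof is correct, and it is closely parallel to the paper's in spirit (both are purely algebraic manipulations of the PDE in $\mathcal N$ combined with \lemref{lem:ws>0}, not maximum-principle arguments), but the bookkeeping is genuinely different. The paper substitutes the identity $A^2(w/w_z)\,w_{zz}=w\bigl(1-\p_z(w/w_z)\bigr)$ into the PDE and then replaces $A$ by its endpoint values $a_1$, $a_2$ term by term; the sign of $-\mu w/A$ versus $-\mu w/a_1$ is controlled using \eqref{CD0}, $\mu>0$, and the strictness comes from $A$ not attaining $a_1$ and $a_2$ simultaneously. You instead multiply the PDE by $w$ to obtain the exact identity $\sigma^2\eta^2\Phi=-\frac{(1+\eta)(\sigma^2\eta+2\mu)}{\eta}w^2-2ww_s$ for $\Phi=w_z^2-ww_{zz}$ with $\eta=w/w_z$, apply \eqref{ws>0}, and then show the resulting quadratic $F(\eta)=\sigma^2\eta^2+(\sigma^2+2\mu-2r_1)\eta+2\mu$ is strictly positive on $[-a_1,-a_2]$ by a convexity-plus-monotonicity argument whose minimum $F(-a_1)=\mu+r_1-\sigma^2a_1^2$ is exactly what \eqref{CD2} controls. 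I checked the identity, the sign bookkeeping ($\eta<0$, $w<0$, $\eta w>0$, $-\sigma^2\eta^3>0$), $F'(-a_1)=\sigma^2>0$, and the rearrangement of \eqref{CD2}; all are correct. One concrete gain of your packaging: you never invoke \eqref{CD0}, so your argument establishes the conclusion under \eqref{CD1} and \eqref{CD2} alone, slightly sharpening the hypothesis set relative to the paper's proof.
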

\begin{proof}
Recall that $w<0$, $w_z>0$ and $a_2\leq A(\frac{w}{w_z})=-\frac{w}{w_z}\leq a_1$ in $ {\cal N}$, so
$$
A^2\Big(\frac{w}{w_z}\Big) w_{zz}=\Big(\frac{w}{w_z}\Big)^2 w_{zz}=w \Big(1-\p_z\Big(\frac{w}{w_z}\Big)\Big)
\quad\rm{in}\quad {\cal N}.
$$
By the equation in \eqref{w_pb} and \lemref{lem:ws>0} we have
\begin{align*}
\frac{\si^2}{2}w \Big(1-\p_z\Big(\frac{w}{w_z}\Big)\Big)
&=
w_s+\Big(\mu-\frac{1}{2}\si^2A^2\Big(\frac{w}{w_z}\Big)-\si^2A\Big(\frac{w}{w_z}\Big)\Big)w_z+\Big(\mu-\si^2A\Big(\frac{w}{w_z}\Big)\Big)w\\
&=
w_s-\frac{\mu } {A\Big(\frac{w}{w_z}\Big)}w+\frac{1}{2}\si^2A\Big(\frac{w}{w_z}\Big)w+\si^2w+\Big(\mu-\si^2A\Big(\frac{w}{w_z}\Big)\Big)w\\
&=
w_s-\frac{\mu } {A\Big(\frac{w}{w_z}\Big)}w-\frac{1}{2}\si^2A\Big(\frac{w}{w_z}\Big)w+(\si^2+\mu)w\\
&>
-r_1 w-\frac{\mu}{a_1}w-\frac{1}{2}\si^2a_2w+(\si^2+\mu)w \\
&=
\si^2 w\Big( a_1-\frac{\mu}{\si^2 a_1}-\frac{1}{2} a_2+1\Big)\quad\rm{in}\quad {\cal N}.
\end{align*}
The inequality is strict because $A\Big(\frac{w}{w_z}\Big)$ cannot equal $a_1$ and $a_2$ simultaneously.
It follows
\[
\p_z\Big(\frac{w}{w_z}\Big)> 1-2 \Big( a_1-\frac{\mu}{\si^2 a_1}-\frac{1}{2} a_2+1\Big)=1-2 a_1+a_2+\frac{2 r_1}{\mu-r_1} \geq0\quad\rm{in}\quad {\cal N}.\]
This completes the proof.
\end{proof}

Now, we are ready to prove \propref{prop:C1}. Let $J=-w/w_z$, from the definition of $b(\cdot)$ and $l(\cdot)$ we have $$J(b(s),s)=a_2,\quad J(l(s),s)=a_1,\quad s\in (0,T].$$ When the conditions \eqref{CD1} and \eqref{CD2} hold, the above result shows $J_z(b(s),s)$ and $J_z(l(s),s)<0$. So it follows from the implicit function existence theorem that $b(\cdot)$, $l(\cdot)\in C^1((0,T])$.

\section{Solutions to the HJB Equation \eqref{V_pb} and Problem \eqref{value}}\label{sec:V_solu}

We are now ready to construct a classical solution to the PDE \eqref{V_pb} from the function $w$ given in \thmref{theo:w} and deduce the optimal portfolio to the problem \eqref{value}.

First, we rewritten the PDEs \eqref{u_eq} and \eqref{v_eq} of $u$ and $v$ in compact forms as follows.
\begin{align}\label{u_pb}
\begin{cases}
-u_t-{\cal J}u=0\quad\hbox{in}\quad (0,+\infty)\times [0,T),\\[3mm]
u(y,T)=\frac{1}{2}y-d,\quad y>0,
\end{cases}
\end{align}
and
\begin{align}\label{v_pb}
\begin{cases}
-v_t-{\cal H}v=0
\quad\hbox{in}\quad (0,+\infty)\times [0,T),\\[3mm]
v(0,t)=0, \qquad 0\leq t<T,\\[3mm]
v(y,T)=-\frac{1}{4}y^2+d y,\quad y>0,
\end{cases}
\end{align}
where
\[
{\cal J}u:=\frac{1}{2}\si^2 A^2\Big(\frac{u}{yu_y}\Big)y^2u_{yy}-\Big(\mu-\si^2A^2\Big(\frac{u}{yu_y}\Big)-\si^2A\Big(\frac{u}{yu_y}\Big) \Big)yu_y-\Big(\mu-\si^2A\Big(\frac{u}{yu_y}\Big)\Big)u,\]
and
\[
{\cal H}v:=\frac{1}{2}\si^2 A^2\Big(\frac{v_y}{yv_{yy}}\Big)y^2v_{yy}-\Big(\mu-\si^2A\Big(\frac{v_y}{yv_{yy}}\Big)\Big)yv_y.\]

\begin{lemma}\label{lem:u}
Let $w$ be given in Theorem \ref{theo:w} and let $$u(y,t)=w(\ln y, T-t).$$ Then
$u\in C^{2+\al,1+\frac{\al}{2}}\big((0,+\infty)\times[0,T]\big)$ is a solution to the PDE \eqref{u_pb} such that
\begin{align} \label{u_b}
\frac{1}{2} e^{\theta_2 (T-t)} y-e^{-r_1 (T-t)} d\leq & u \leq \frac{1}{2} e^{\theta_1 (T-t)} y-e^{-r_2 (T-t)} d,\\\label{uy_b}
\frac{1}{2}e^{-\kappa (T-t)} \leq & u_y\leq \frac{1}{2}e^{k (T-t)},\end{align}
in $(0,+\infty)\times[0,T]$.
\end{lemma}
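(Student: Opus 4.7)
The statement is a change-of-variables lemma, so the plan is simply to verify regularity, the PDE, the terminal condition, and the two pairs of bounds by direct substitution, invoking \thmref{theo:w} throughout.

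\textbf{Regularity.} The map $(y,t)\mapsto(\ln y, T-t)$ is a $C^\infty$ diffeomorphism from $(0,+\infty)\times[0,T]$ onto $\R\times[0,T]$, with uniformly bounded derivatives on compact subsets of $(0,+\infty)\times[0,T]$. Since $w\in C^{2+\al,1+\al/2}(\R\times[0,T])$ by \thmref{theo:w}, the composition $u(y,t)=w(\ln y,T-t)$ lies in $C^{2+\al,1+\al/2}\big((0,+\infty)\times[0,T]\big)$.

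\textbf{Derivative identities and the PDE.} I would first record the chain-rule identities that were already used heuristically in Section \ref{sec:Eq}, namely
\[u_t=-w_s,\qquad yu_y=w_z,\qquad y^2u_{yy}=w_{zz}-w_z,\]
where all $w$-derivatives are evaluated at $(\ln y, T-t)$. The crucial algebraic observation is that $u=w$ and $yu_y=w_z$, so
\[\frac{u}{yu_y}=\frac{w}{w_z},\qquad A\!\Big(\frac{u}{yu_y}\Big)=A\!\Big(\frac{w}{w_z}\Big).\]
Substituting these identities into the definition of $\mathcal{J}$ and collecting terms,
\[-u_t-\mathcal{J}u=w_s-\tfrac12\si^2 A^2 (w_{zz}-w_z)+\bigl(\mu-\si^2A^2-\si^2 A\bigr)w_z+\bigl(\mu-\si^2A\bigr)w,\]
which, after combining the two $w_z$ terms, reproduces exactly $w_s-\mathcal{T}w$. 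By \thmref{theo:w} this vanishes on $\R\times(0,T]$, i.e.\ on $(0,+\infty)\times[0,T)$ in the $(y,t)$ variables. The terminal condition is immediate: $u(y,T)=w(\ln y,0)=\tfrac12 e^{\ln y}-d=\tfrac12 y-d$.

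\textbf{The bounds.} Replacing $(z,s)$ by $(\ln y,T-t)$ in \eqref{w_b} gives \eqref{u_b} at once, since $e^z=y$. For \eqref{uy_b}, use $u_y=w_z/y$ together with \eqref{wz_b}:
\[\frac{1}{2}e^{-\kappa(T-t)}=\frac{1}{y}\cdot\frac{1}{2}e^{-\kappa(T-t)}y\le\frac{w_z}{y}=u_y\le\frac{1}{y}\cdot\frac{1}{2}e^{k(T-t)}y=\frac{1}{2}e^{k(T-t)}.\]

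\textbf{Expected difficulty.} There is essentially no obstacle: the lemma is a bookkeeping exercise, and all analytic content has already been absorbed into \thmref{theo:w}. The only point that deserves care is the algebraic verification that the nonlinear operator $\mathcal{T}$ in the log variables is exactly conjugate to $\mathcal{J}$ under the transformation, in particular that the two $w_z$ contributions (from $-\tfrac12\si^2 A^2(w_{zz}-w_z)$ and from the drift term) combine to produce the coefficient $\mu-\tfrac12\si^2 A^2-\si^2 A$ appearing in $\mathcal{T}$; this is precisely the reverse of the derivation performed in Section \ref{sec:Eq}.
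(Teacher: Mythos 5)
Your verification is correct, and it fills in exactly the routine computation the paper omits (the paper only remarks "This result can be easily verified"). The chain-rule identities, the algebraic conjugacy $-u_t-\mathcal{J}u=w_s-\mathcal{T}w$, and the transfer of the bounds all check out, as does the regularity claim.
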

This result can be easily verified, so we omit its proof.

Furthermore, we have
\begin{lemma}\label{lem:u_lim}
For any $t\in [0,T]$,
\begin{align}\label{u_lim}
\lim\limits_{y\rightarrow 0+}u=-e^{-r_1 (T-t)} d,\quad
\lim\limits_{y\rightarrow+\infty}u=+\infty, \quad
\lim\limits_{y\rightarrow 0+}yu=0, \quad
\lim\limits_{y\rightarrow 0+}y^2u_y=0.\end{align}
\end{lemma}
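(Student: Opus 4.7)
I would establish the four limits by combining the a priori bounds of Lemma 5.1 (for three of them) with a Feynman--Kac argument deep inside the saving region (for the main one).

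The routine limits: The bound $u_y\ge\tfrac12 e^{-\kappa(T-t)}$ gives $u(y,t)\ge u(1,t)+\tfrac12 e^{-\kappa(T-t)}(y-1)\to+\infty$, so $u\to+\infty$ as $y\to+\infty$. The sandwich bounds of Lemma 5.1 pin $u$ between $-e^{-r_1(T-t)}d$ and $-e^{-r_2(T-t)}d+\tfrac12 e^{\theta_1(T-t)}y$ near $y=0$, making $u$ bounded there, so $yu\to 0$. Similarly, $u_y\le\tfrac12 e^{k(T-t)}$ immediately yields $y^2u_y\to 0$.

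The main limit: The lower bound in Lemma 5.1 gives $\liminf_{y\to 0+}u(y,t)\ge -e^{-r_1(T-t)}d$ directly. For the matching upper bound, I would work on the strip $(0,y_0]\times[0,T]$ with $y_0:=\exp\bigl(\inf_{s\in[0,T]}l(s)\bigr)>0$ (finite by Lemma 4.5). This strip lies inside the saving region $\mathcal S$, so $u$ satisfies the linear PDE
\[ -u_t=\tfrac12\sigma^2 a_1^2 y^2 u_{yy}+(\sigma^2 a_1^2+\sigma^2 a_1-\mu)y u_y-r_1 u. \]
The substitution $h(z,s):=e^{r_1 s}w(z,s)$ with $z=\ln y$, $s=T-t$ kills the zeroth-order term, producing the drifted heat equation
\[ h_s=\tfrac12\sigma^2 a_1^2 h_{zz}-\bigl(\mu-\tfrac12\sigma^2 a_1^2-\sigma^2 a_1\bigr)h_z \]
on $(-\infty,\ln y_0)\times(0,T]$ with initial datum $h_0(z)=\tfrac12 e^z-d\to -d$ as $z\to-\infty$. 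A Feynman--Kac representation via $Z_s^z=z+\tfrac12\theta_1 s+\sigma a_1 W_s$ and the first hitting time $\tau^z$ of $\ln y_0$ gives
\[ h(z,s)=\E\bigl[h_0(Z_s^z)\,\mathbf 1_{\{\tau^z>s\}}\bigr]+\E\bigl[h(\ln y_0,\tau^z)\,\mathbf 1_{\{\tau^z\le s\}}\bigr]. \]
As $z\to-\infty$, Gaussian tail estimates force $\P(\tau^z\le T)\to 0$, killing the second term (its boundary data being uniformly bounded by Lemma 5.1); and dominated convergence applied to the first (using $-d\le h_0(Z_s^z)\le\tfrac12 e^{Z_s^z}-d$ together with the Gaussian moments of $e^{Z_s^z}$ vanishing as $z\to-\infty$) yields the limit $-d$. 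Therefore $w(z,s)\to -e^{-r_1 s}d$, i.e.\ $\limsup_{y\to 0+}u(y,t)\le -e^{-r_1(T-t)}d$, closing the argument.

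The main obstacle is rigorously setting up the Feynman--Kac representation on the half-line $(-\infty,\ln y_0)$: the growth of $h_0$ at $+\infty$ is tamed by the cutoff at the barrier $\ln y_0$, while the uniform bound on the boundary data $h(\ln y_0,\cdot)$ and the tail behavior of $\tau^z$ as $z\to-\infty$ follow from Lemma 5.1 and standard Gaussian estimates, respectively.
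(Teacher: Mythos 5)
Your handling of the three routine limits (via the $u$- and $u_y$-bounds) coincides with the paper's. For the main limit $\lim_{y\to 0+}u = -e^{-r_1(T-t)}d$, you and the paper both observe that a left half-strip $\{z\le z_0\}\times[0,T]$ lies in the saving region $\mathcal S$ (by the lower bound on $l(\cdot)$), so that $w$ there solves the \emph{linear} equation with $A\equiv a_1$. From that point the two proofs diverge: the paper constructs an explicit supersolution $\Psi(z,s)=M e^{|\theta_1|s}e^{z-z_0}-e^{-r_1 s}d$ and closes the argument with the comparison principle, whereas you pass to $h=e^{r_1 s}w$ and invoke a Feynman--Kac representation with the first exit time $\tau^z$ of the half-line, then use Gaussian tail/moment estimates to send the exit-time contribution and the $e^{Z^z_s}$ term to zero as $z\to-\infty$. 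Both are sound; the paper's barrier is shorter and purely PDE-theoretic (matching the rest of its toolbox), while your probabilistic route is more transparent about \emph{why} the value at the left boundary of the half-line is irrelevant in the limit (the diffusion simply does not reach it), at the cost of having to justify the stopped representation on an unbounded domain. Two small technical remarks: you should take $z_0$ \emph{strictly} less than $\inf_{s\in[0,T]} l(s)$ (as the paper does) to guarantee strict inclusion in $\mathcal S$, and in the stopped Feynman--Kac formula the boundary term should read $\E\bigl[h(\ln y_0, s-\tau^z)\,\mathbf 1_{\{\tau^z\le s\}}\bigr]$ rather than $h(\ln y_0,\tau^z)$; neither affects the conclusion, since only the uniform bound on $h(\ln y_0,\cdot)$ and $\P(\tau^z\le T)\to 0$ are used.
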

\begin{proof}
The second and third limits can be derived from \eqref{u_b}, the fourth limit is due to \eqref{uy_b}. It is left to prove the first limit. Thanks to the estimate \eqref{l_lb}, there exists $$z_0\in\Big(-\infty, \inf\limits_{s\in [0,T]}l(s)\Big)$$ such that $A(w/w_z)=a_1$ in 
$${\cal D}:=(-\infty,z_0]\times[0,T].$$ Thus,
\bee
w_s-\frac{1}{2}\si^2 a_1^2w_{zz}+\big(\mu-\frac{1}{2}\si^2a_1^2-\si^2a_1\big)w_z+(\mu-\si^2a_1)w=0 \quad \hbox{in}\quad {\cal D}.\eee
Let $$M:=\max\big\{\frac{1}{2}e^{z_0},\max\limits_{s\in [0,T]} w(z_0,s)+e^{-r_1 s}d\big\}$$ and denote $$\Psi(z,s):=M e^{|\theta_1| s} e^{z-z_0}-e^{-r_1 s} d.$$ Then
\bee
&& \Psi_s-\frac{1}{2}\si^2 a_1^2 \Psi_{zz}+\Big(-\frac{1}{2}\si^2 a_1^2-\si^2 a_1+\mu\Big)\Psi_z+\Big(\mu-\si^2 a_1 \Big)\Psi\\
&=&
M e^{|\theta_1| s} e^{z-z_0}\Big(|\theta_1|-\frac{1}{2}\si^2 a_1^2+\Big(-\frac{1}{2}\si^2 a_1^2-\si^2 a_1+\mu\Big)+\Big(\mu-\si^2 a_1 \Big)\Big)\\
&&+e^{-r_1 s} d (r_1-(\mu-\si^2 a_1 ) )\\
&\geq&0,\eee
by recalling the definitions of $\theta_1$ and $a_1$.
Moreover,
\[
\begin{cases}
\Psi(z,0)=Me^{z-z_0}-d\geq \frac{1}{2}e^{z}-d=w(z,0), \quad z\leq z_0, \\[3mm]
\Psi(z_0,s)\geq M-e^{-r_1 s}d\geq w(z_0,s), \quad s\in [0,T],
\end{cases}
\]
so, by the comparison principle, we get $\Psi\geq w$ in ${\cal D}$. Together with the first inequality in \eqref{w_b}, we have $\lim\limits_{z\rightarrow-\infty} w=-e^{-r_1s} d$, which implies $\lim\limits_{y\rightarrow 0+}u=-e^{-r_1 (T-t)} d$.
\end{proof}
It follows from \eqref{uy_b} and \eqref{u_lim} that $$ u_y>0,\quad\lim\limits_{y\rightarrow 0+}u=-e^{-r_1 (T-t)} d,\quad \lim\limits_{y\rightarrow+\infty}u=+\infty, $$ so $-u$ is one-to-one mapping $(0,+\infty) $ to $(-\infty, e^{-r_1 (T-t)} d)$ for each $t\in[0,T)$.

\begin{lemma}\label{lem:v}
Let $u$ be given in \lemref{lem:u}. Define
\begin{align*}
v(y,t):=-\int_0^yu(\xi,t){\rm d} \xi,\quad (y,t)\in (0,+\infty)\times[0,T].
\end{align*}
Then $v\in C^{3,2}\big((0,+\infty)\times[0,T]\big)$ is a solution to the PDE \eqref{v_pb} such that
\begin{align} \label{vy_b}
-\frac{1}{2} e^{\theta_1 (T-t)} y+e^{-r_2 (T-t)}d\leq & v_y \leq-\frac{1}{2} e^{\theta_2 (T-t)} y+e^{-r_1 (T-t)} d,\\\label{vyy_b}
-\frac{1}{2}e^{k (T-t)}\leq & v_{yy}\leq-\frac{1}{2}e^{-\kappa (T-t)},\end{align}
in $(0,+\infty)\times[0,T]$.
Moreover, for any $t\in [0,T]$,
\begin{align}\label{v_lim}
\lim\limits_{y\rightarrow 0+}v_y=e^{-r_1 (T-t)} d,\quad
\lim\limits_{y\rightarrow+\infty}v_y=-\infty, \quad
\lim\limits_{y\rightarrow 0+}y v_y=0\quad
\lim\limits_{y\rightarrow 0+}y^2 v_{yy}=0.\end{align}
\end{lemma}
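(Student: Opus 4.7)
The strategy is to extract every property of $v$ from the corresponding property of $u$ already established in \lemref{lem:u} and \lemref{lem:u_lim}, via the direct pointwise identities
\begin{equation*}
v_y=-u,\qquad v_{yy}=-u_y,\qquad v_{yyy}=-u_{yy},\qquad v_t=-\int_0^y u_t(\xi,t)\,\d\xi,
\end{equation*}
all of which follow from differentiating $v(y,t)=-\int_0^y u\,\d\xi$ under the integral sign, justified by the regularity of $u$ stated in \lemref{lem:u}. Given these, the bounds \eqref{vy_b} and \eqref{vyy_b} are immediate consequences of \eqref{u_b} and \eqref{uy_b} by negation, and the four limits in \eqref{v_lim} reduce to \eqref{u_lim}: for instance $\lim_{y\to 0+}v_y=-\lim_{y\to 0+}u=e^{-r_1(T-t)}d$, $\lim_{y\to+\infty}v_y=-\lim_{y\to+\infty}u=-\infty$, $\lim_{y\to 0+}yv_y=-\lim_{y\to 0+}yu=0$, and $\lim_{y\to 0+}y^2v_{yy}=-\lim_{y\to 0+}y^2u_y=0$.

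For the PDE \eqref{v_pb}, the key observation is that
\begin{equation*}
\frac{v_y}{yv_{yy}}=\frac{-u}{y(-u_y)}=\frac{u}{yu_y},\qquad\text{so}\qquad A\Bigl(\tfrac{v_y}{yv_{yy}}\Bigr)=A\Bigl(\tfrac{u}{yu_y}\Bigr),
\end{equation*}
and therefore substituting $v_y=-u$, $v_{yy}=-u_y$ into the definition of ${\cal H}v$ yields
\begin{equation*}
{\cal H}v(y,t)=-\tfrac{1}{2}\si^2 A^2\Bigl(\tfrac{u}{yu_y}\Bigr)\,y^2 u_y+\Bigl(\mu-\si^2 A\Bigl(\tfrac{u}{yu_y}\Bigr)\Bigr)\,yu.
\end{equation*}
I would then differentiate this in $y$ piecewise over the three regions $\{A\equiv a_1\}$, $\{A\equiv a_2\}$, and $\{A=-u/(yu_y)\}$: in each region $A$ is either constant or a smooth function of $u,u_y$, so a direct computation (e.g.\ in the middle region ${\cal H}v$ simplifies to $\tfrac{1}{2}\si^2 u^2/u_y+\mu y u$) produces exactly $-{\cal J}u$. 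Hence $\p_y({\cal H}v)=-{\cal J}u=u_t$ by \eqref{u_pb}. Combining this with ${\cal H}v(0,t)=0$ (which holds because ${\cal H}v$ carries factors of $y$ and $y^2$ while $v_y$ and $v_{yy}$ are bounded near $y=0$ by \eqref{vy_b} and \eqref{vyy_b}), integration in $y$ yields ${\cal H}v(y,t)=\int_0^y u_t\,\d\xi=-v_t(y,t)$, i.e.\ $v_t+{\cal H}v\equiv 0$. The terminal condition $v(y,T)=-\tfrac{1}{4}y^2+dy$ is obtained by directly integrating $u(y,T)=y/2-d$, and $v(0,t)=0$ is built into the definition.

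For the regularity $v\in C^{3,2}$, one spatial derivative is gained from the integration in $y$, so $v$ inherits $C^{3+\al,1+\al/2}$ regularity in the parabolic H\"older sense from $u\in C^{2+\al,1+\al/2}$; the second time derivative is then recovered by differentiating the identity $v_t=-{\cal H}v$ once more in $t$, using the regularity of $v_y$, $v_{yy}$ and of the composition $A(v_y/yv_{yy})$. I expect the main obstacle to be the piecewise nature of the identity $\p_y({\cal H}v)=-{\cal J}u$: since $A$ is only Lipschitz across the free boundaries $\{-u/(yu_y)=a_1\}$ and $\{-u/(yu_y)=a_2\}$, the differentiation in $y$ is only smooth in the interiors of the three regions, and one must verify that the one-sided values of $\p_y({\cal H}v)$ agree across the interfaces. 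Fortunately, on each interface the three defining formulas for $A$ coincide (for instance $a_1=-u/(yu_y)$ on $\{-u/(yu_y)=a_1\}$), so the limits from both sides match and the piecewise identity patches into a single global equality on $(0,+\infty)\times[0,T]$, completing the argument.
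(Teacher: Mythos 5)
Your overall strategy mirrors the paper's: reduce everything to $u$ via $v_y=-u$, verify the bounds and limits directly, prove the PDE by showing $\p_y(-v_t-{\cal H}v)=u_t+{\cal J}u=0$ and ${\cal H}v(0,t)=0$ and integrating in $y$, and obtain the terminal/boundary conditions by inspection. Your extra care about the piecewise differentiation of $A(v_y/(yv_{yy}))$ across the interfaces (checking that the three formulas for $A$ match on $\{-u/(yu_y)=a_i\}$ so the one-sided derivatives agree) is legitimate and fills in what the paper compresses into ``it is easy to check.''

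The one genuine gap is in the regularity argument. You claim $v$ ``inherits $C^{3+\alpha,1+\alpha/2}$ regularity'' from integrating $u$, but integration in $y$ only improves spatial regularity; it does nothing for time regularity, so this alone gives $v_t\in C^{2+\alpha,\alpha/2}$, not the needed second time derivative. You then propose to recover $v_{tt}$ by differentiating $v_t=-{\cal H}v$ in $t$, but that step is circular: to differentiate ${\cal H}v$ in $t$ you already need $v_{ty}$ and $v_{tyy}$, which is precisely the regularity you are trying to establish, and $A$ being merely Lipschitz makes the formal chain rule suspect. The paper closes this gap by observing that $v_t$ solves the \emph{linear} parabolic equation $-v_{tt}-\frac12\si^2A^2\!\big(\frac{v_y}{yv_{yy}}\big)y^2v_{tyy}+\big(\mu-\si^2A\big(\frac{v_y}{yv_{yy}}\big)\big)yv_{ty}=0$ with $C^{\alpha,\alpha/2}$ coefficients (since $v_y$ and $v_{yy}$ are already H\"older and $A$ is Lipschitz), and then invokes the Schauder interior estimate to get $v_t\in C^{2+\alpha,1+\alpha/2}$, hence $v\in C^{3,2}$. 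Without invoking an a priori estimate of this type, your regularity step does not close.
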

\begin{proof}
Clearly $v_y=-u$, so \eqref{vy_b}, \eqref{vyy_b} and \eqref{v_lim} are the direct consequences of \eqref{u_b}, \eqref{uy_b} and \eqref{u_lim} respectively.
Since $u\in C^{2+\al,1+\frac{\al}{2}}\big((0,+\infty)\times[0,T]\big)$,
we have $v$, $v_y\in C^{2+\al,1+\frac{\al}{2}}\big((0,+\infty)\times[0,T]\big)$.
Moreover, it is easy to check that
$$\p_y(-v_t-{\cal H} v)=u_t+{\cal J}u=0$$ and $$(-v_t-{\cal H}v)(0,t)=0,$$
so
\begin{align*}
(-v_t-{\cal H} v)(y,t)
=(-v_t-{\cal H} v)(0,t)+\int_0^y\p_y(-v_t-{\cal H} v)(\xi,t){\rm d}\xi
=0.\end{align*}
Therefore, $v$ is a solution to the PDE \eqref{v_pb}. As a consequence,
\[
-v_{tt}-\frac{1}{2}\si^2 A^2\Big(\frac{v_y}{yv_{yy}}\Big)y^2v_{tyy}+\Big(\mu-\si^2A\Big(\frac{v_y}{yv_{yy}}\Big)\Big)yv_{ty}=\p_t(-v_t-{\cal H} v)=0.\]
Using the Schauder interior estimation (see \cite{Li96} Theorem 4.9), we get $v_t\in C^{2+\al,1+\frac{\al}{2}}\big((0,+\infty)\times[0,T]\big)$,
therefore, $v\in C^{3,2}\big((0,+\infty)\times[0,T]\big)$.
\end{proof}

\subsection{Proof of \thmref{theo:V}}\label{proofv}
Now we are ready to prove \thmref{theo:V}.
Let $v$ be given in \lemref{lem:v}. Define
\begin{align}\label{V_duel}
V (x,t):=\sup\limits_{y>0}\;(v(y,t)-xy),\quad x<e^{r_1 (T-t)}d,\quad t\in [0,T].
\end{align} 
We come to prove that the above $V$ satisfies the requirements of \thmref{theo:V}.

For each $t\in[0,T]$, the estimates \eqref{vyy_b} and \eqref{v_lim} imply $v_y(\cdot,t)$ is strictly decreasing and maps $(0,\infty)$ to $(-\infty,e^{-r_1 (T-t)} d)$, so
\begin{align*}
J(x,t):=\argmax\limits_{y>0}\;(v(y,t)-xy)=
(v_y(\cdot,t))^{-1}(x)>0, \end{align*}
and
\begin{align}\label{V_J}
V (x,t)=v(J(x,t),t)-xJ(x,t),\quad x<e^{r_1 (T-t)}d,\quad t\in [0,T].
\end{align}
Also the function $J(x,t)\in C(\TQ)$ and is strictly increasing w.r.t. $x$. Therefore,
\begin{align*}
V_x(x,t) &=v_y(J(x,t),t)J_x(x,t)-xJ_x(x,t)-J(x,t)=-J(x,t)<0,\\
V_{xx}(x,t)&=-J_x(x,t)=-\partial_x[(v_y(\cdot,t))^{-1}(x)]=\frac{-1}{v_{yy}(J(x,t),t)}>0,\\
V_t(x,t)&=v_y(J(x,t),t)J_t(x,t)+v_t(J(x,t),t)-xJ_t(x,t)=v_t(J(x,t),t).
\end{align*}
As $v\in C^{3,2}\big((0,+\infty)\times[0,T]\big)$, we get $$ V\in C^{3,2}\big(\overline{\TQ}\setminus\{x=e^{-r_1 (T-t) d}\} \big).$$ Since $v$ is a solution to the PDE \eqref{v_pb}, which is equivalent to \eqref{v_eq}, one can check that $ V$ satisfies the PDE in \eqref{Three_eq}. This together with $V_x<0$ and $V_{xx}>0$ shown above implies the PDE in \eqref{V_pb}.

From \eqref{v_lim} we know for any $t\in[0,T]$,
\begin{align}\label{J_lim}
\lim\limits_{x\rightarrow e^{-r_1 (T-t)} d-}J(x,t)=0,\quad
\lim\limits_{x\rightarrow-\infty}J(x,t)=+\infty.\end{align}
So \eqref{V_lim} holds.
Moreover, \eqref{J_lim} and \eqref{V_J} imply $$V(e^{-r_1 (T-t)} d-,t)=v(0+,t)=0,$$ so the boundary condition in \eqref{V_pb} holds.

Now, we verify the terminal condition. Thanks to \eqref{vy_b} and $v(0,t)=0$, we have
$$
-\frac{1}{4} e^{\theta_1 (T-t)} y^2+e^{-r_2 (T-t)}d y\leq v \leq-\frac{1}{4} e^{\theta_2 (T-t)} y^2+e^{-r_1 (T-t)} d y,
$$
and consequently,
\begin{align}
V (x,t)&=\sup\limits_{y>0}\;(v(y,t)-xy)\nonumber\\
&\geq \sup\limits_{y>0}\Big(-\frac{1}{4} e^{\theta_1 (T-t)} y^2+e^{-r_2 (T-t)}d y-xy\Big)\nonumber\\
&=e^{-\theta_1 (T-t)}(e^{-r_2 (T-t)}d-x)^2,\label{V_B1}
\end{align}
and
\begin{align}\label{V_B2}
V (x,t)
&\leq \sup\limits_{y>0}\Big(-\frac{1}{4} e^{\theta_2 (T-t)} y^2+e^{-r_2 (T-t)}d y-xy\Big)\nonumber\\
&= e^{-\theta_2 (T-t)}(e^{-r_1 (T-t)}d-x)^2.
\end{align}
Letting $t\rightarrow T$ in the above two inequalities, it follows that $V$ satisfies the terminal condition in \eqref{V_pb}.
This completes the proof of \thmref{theo:V}.

\subsection{Proof of \propref{theo:free_boundary}}\label{sec:freeboundary}
This is the consequence of \lemref{lem:B} and \lemref{lem:L}.
Thanks to \eqref{Vx_b} and \eqref{Vxx_b}, we see $(0,t)\in\fB$, so $B(t)>0$.

\subsection{Proof of Theorem \ref{veri}}\label{sec:veri}

In this section, we prove Theorem \ref{veri}.

Suppose $V$ is the solution to \eqref{V_pb} given in Theorem \ref{theo:V}. Fix any $(x,t)\in \TQ$ and any admissible portfolio $\pi\in\Pi_t$, let $X$ be the unique strong solution to \eqref{X_eq}. We set
$${\tau_n=\inf\Big\{s\geq t \;\Big|\; |V_x(X_s,s)|+\int_t^s|\pi_u|^2\d u\geq n\Big\}.}$$ Then $$s\mapsto\int_t^{s\wedge \tau_n\wedge T}V_x(X_u,u)\si\pi_u \d W_u$$ is a martingale, whose mean is 0.
Therefore, by It\^o's formula and the HJB equation \eqref{V_pb},
\begin{align}
&\quad\;\E[V(X_{T\wedge \tau_n},T\wedge \tau_n)\mid X_t=x ]\nonumber\\
&=V(x,t)+\E\Big[\int_t^{T\wedge \tau_n}\Big(V_t+\frac{1}{2}\si^2\pi_s^2V_{xx}+\Big((r_1\chi_{X_s>\pi_s}+r_2\chi_{X_s<\pi_s})(X_s-\pi_s)\nonumber \\
&\qquad\qquad\qquad\qquad\qquad\quad\;+\mu\pi_s\Big)V_x\Big)(X_s,s)\d s+\int_t^{T\wedge \tau_n}V_x(X_s,s)\si\pi_s \d W_s\;\Big|\; X_t=x \Big]\nonumber\\
&\geq V(x,t).\label{veq1}
\end{align}
Using the estimates \eqref{V_B1} and \eqref{V_B2}, we have
\[
0\leq V(X_{T\wedge \tau_n},T\wedge \tau_n)\leq C\big(1+\sup\limits_{s\in[t,T]}|X_s|^2\big).
\]
By the standard estimate for SDE, the right hand side is integrable, so we can apply the dominated convergence theorem to $\E[V(X_{T\wedge \tau_n},T\wedge \tau_n)\mid X_t=x ]$, and obtain
\begin{align}
V(x,t) &\leq\E[\lim\limits_{n\rt \infty}V(X_{T\wedge \tau_n},T\wedge \tau_n)\mid X_t=x ]\nonumber\\
&=\E[V(X_{T},T)\mid X_t=x ]\nonumber\\
&=\E[(X_T-d)^2 \mid X_t=x ].
\label{veq2}
\end{align}
Therefore, we have
\begin{align}\label{lowerv}
V(x,t)\leq\inf\limits_{\pi\in\Pi_t}\E[(X_T-d)^2 \mid X_t=x ].
\end{align}

To show the reverse inequality, define a feedback control
\[
\pi^*(x,t)=
\begin{cases}
-a_1\frac{V_{x}(x,t)}{V_{xx}(x,t)}, &-a_1\frac{V_{x}(x,t)}{V_{xx}(x,t)}<x,\\[3mm]
x, &-a_2\frac{V_{x}(x,t)}{V_{xx}(x,t)}\leq x\leq-a_1\frac{V_{x}(x,t)}{V_{xx}(x,t)},\\[3mm]
-a_2\frac{V_{x}(x,t)}{V_{xx}(x,t)}, &-a_2\frac{V_{x}(x,t)}{V_{xx}(x,t)}>x.
\end{cases}
\]
Clearly \eqref{vy_b} and \eqref{vyy_b} imply
$|y v_{yy}|\leq C(1+|v_y|)$ for some constant $C$ independent of $t$, which is equivalent to $$\Big|\frac{V_{x}(x,t)}{V_{xx}(x,t)}\Big|\leq C(1+|x|).$$ This indicates $\pi^*(x,t)$ is linear growth in $x$ uniformly for $t\in[0,T]$. Moreover, because $V\in C^{3,2}$, $\pi^*(x,t)$ is locally Lipschitz continuous. By Mao \cite[Theorem 3.4, p.56]{M08}, there exists a unique strong solution $X^*$ to the following SDE:
\begin{equation}\label{X*_eq}
\left\{\begin{array}{rl}
{\rm d}X^*_s&=\big[\big(r_1\chi_{X^*_s>\pi^*(X^*_s,s)}+r_2\chi_{X^*_s<\pi^*(X^*_s,s)}\big)(X^*_s-\pi^*(X^*_s,s))\\[2mm]
&\quad\quad\quad\quad\quad\quad\quad\quad\quad\quad\quad
+\mu\pi^*(X^*_s,s)\big]{\rm d}s+\sigma\pi^*(X^*_s,s){\rm d}W_s,\quad s\in [t,T], \\[2mm]
X^*_t&=x.
\end{array}\right.
\end{equation}
Furthermore, as $\pi^*(x,t)$ is linear growth in $x$, by Mao \cite[Lemma 3.2, p.51]{M08}, we obtain from \eqref{X*_eq} that
\[\E\Big[ \sup\limits_{s\in[t,T]}|X^*_s|^2\Big]<\infty,\]
which further implies $\hat\pi_s:=\pi^*(X^*_s,s)$ is an admissible control in $\Pi_t$ by the linear growth property of $\pi^*(x,t)$ in $x$. Repeat the preceding argument with the control $\hat\pi$, then the inequalities in \eqref{veq1} and \eqref{veq2} become equations, giving $$V(x,t)=\E [(X^*_T-d)^2| X_t=x ].$$
Compared to \eqref{lowerv}, we conclude that $\pi^*$ is an optimal feedback control to the problem \eqref{value}, and $V$ is the value function.

\section{Concluding Remarks}\label{sec:cr}

In this paper, we solved Markowitz’s mean-variance portfolio selection problem in a continuous-time Black-Scholes market with different borrowing and saving rates by PDE methods. A feedback optimal portfolio is provided. Efficiently numerical schemes can be easily developed to calculate it. Different from many existing papers, the optimality of the portfolio is proved by a verification argument, where the smoothness of the value function plays an important role. It is of great interests to extend our model to the case with jumps in stock price. This may lead to some new financial insights.

Clearly, \thmref{veri} implies the function $V$ given in \thmref{theo:V} is unique. As a consequence, we have that the functions $w$ in \thmref{theo:w}, $u$ in \lemref{lem:u} and $v$ in \lemref{lem:v} are unique as well. The above uniqueness can be proved by pure PDE argument as well. We leave this to the interested readers. 

This paper used PDE method to solve the portfolio selection problem. This approach does not work if the system is not Markovian or $d$ is stochastic in general. So stochastic control theory for piecewise linear quadratic problems is called for. Of course, it is of great importance to develop such theories, and also far beyonds the scope of this paper. But we hope our method can inspire the readers to develop such theories.

\newpage
\begin{appendices}

\section*{Appendix: Proof of Theorem \ref{theo:w}}\label{sec:w_exist} 
In this section, we prove Theorem \ref{theo:w} by approximation method.

Firstly, for each fixed $0<\ep<1$, define a continuous function
\[
\Gamma_\ep(\xi,\eta):=A\Big(\frac{\xi}{\eta+\ep}\Big),\quad (\xi,\eta)\in (-\infty,+\infty)\times[0,+\infty).\]
Note that
\[
\p_\xi \Gamma_\ep(\xi,\eta)=A^{\prime}\Big(\frac{\xi}{\eta+\ep}\Big) \frac{1}{\eta+\ep}=
\left\{
\begin{array}{ll}
-\frac{1}{\eta+\ep}\in[-\frac{1}{\ep},0),&\quad \mbox{if}\;a_2<-\frac{\xi}{\eta+\ep}<a_1,\bigskip\\
0,&\quad \mbox{if}\;-\frac{\xi}{\eta+\ep}>a_1 \;\hbox{or}\; -\frac{\xi}{\eta+\ep}<a_2,
\end{array}
\right.\]
and
\[
\p_\eta \Gamma_\ep(\xi,\eta)=A^{\prime}\Big(\frac{\xi}{\eta+\ep}\Big) \frac{-\xi}{(\eta+\ep)^2}=
\left\{
\begin{array}{ll}
\frac{\xi}{\eta+\ep}\frac{1}{\eta+\ep}\in [-\frac{a_1}{\ep},0),&\quad \mbox{if}\; a_2<-\frac{\xi}{\eta+\ep}<a_1,\bigskip\\
0,&\quad \mbox{if}\; -\frac{\xi}{\eta+\ep}>a_1\;\hbox{or}\;-\frac{\xi}{\eta+\ep}<a_2,
\end{array}
\right.\]
so the function $\Gamma_\ep(\cdot,\cdot)$ is Lipschitz continuous in $(-\infty,+\infty)\times[0,+\infty) $.
Moreover, for each fixed $c>0$, $\p_\xi\Gamma_\ep(\xi,\eta)$ and $\p_\eta \Gamma_\ep(\xi,\eta)$ are uniformly bounded for all $(\xi,\eta, \ep)\in (-\infty,+\infty)\times[c,+\infty)\times [0,1]$.

Now, consider an approximation equation in a bounded domain $Q_T^N:=(-N,N)\times[0,T]$,
\begin{align}\label{wNN_pb}
\left\{
\begin{array}{ll}
w^{\ep,N}_s-\frac{1}{2}\si^2A^2\Big(\frac{w^{\ep,N}}{|w^{\ep,N}_z|+\ep}\Big)w^{\ep,N}_{zz}+\Big(\mu-\frac{1}{2}\si^2A^2\Big(\frac{w^{\ep,N}}{|w^{\ep,N}_z|+\ep}\Big)-\si^2A\Big(\frac{w^{\ep,N}}{|w^{\ep,N}_z|+\ep}\Big)\Big)w^{\ep,N}_z\\[5mm]
\qquad\qquad\qquad\qquad\qquad\qquad\qquad\qquad\quad\quad\quad
+\Big(\mu-\si^2A\Big(\frac{w^{\ep,N}}{|w^{\ep,N}_z|+\ep}\Big)\Big)w^{\ep,N}=0 \quad \hbox{in} \quad Q_T^N,\\[5mm]
(w^{\ep,N}-w^{\ep,N}_z)(-N,s)=-e^{-r_2 s }d,\quad w^{\ep,N}_z(N,s)=\frac{1}{2} e^{\theta_1 s } e^N,\quad s\in [0,T],\\[5mm]
w^{\ep,N}(z,0)=\frac{1}{2}e^z-d,\quad-N<z<N,
\end{array}
\right.\end{align}
The Leray-Schauder fixed point theorem (see \cite{Ev16} Theorem 4, p.541) and embedding theorem (see \cite{Li96} Theorem 6.8) imply the existence of $C^{1+\al,\frac{1+\al}{2}}\big(\overline{Q_T^N}\big)$ (for some $\al\in(0,1)$) solution to the problem \eqref{wNN_pb}.
Moreover, the Schauder estimation (see \cite{Li96} Theorem 4.23) implies 
$$w^{\ep,N}\in C^{2+\al,1+\frac{\al}{2}}\big(\overline{Q_T^N}\big).$$

In the proceeding proof, we will frequently use the following fact without claim: 

$$0<a_2\leq A(\xi)\leq a_1,\quad |A'(\xi)| \leq 1,\quad a_2\leq |A'(\xi)\xi| \leq a_1,\quad a_2^2\leq |A'(\xi)\xi^2| \leq a_1^2.$$ 

We first establish the estimates
\begin{align}\label{wNN_b}
\frac{1}{2} e^{\theta_2 s } e^z-e^{-r_1 s } d\leq w^{\ep,N} \leq \frac{1}{2} e^{\theta_1 s } e^z-e^{-r_2 s }d.\end{align}
Denote $$\psi(z,s)=\frac{1}{2} e^{\theta_2 s } e^z-e^{-r_1 s } d,\quad A(\cdot\cdot)=A\Big(\frac{w^{\ep,N}}{|w^{\ep,N}_z|+\ep}\Big).$$ Using the definitions of $\theta_2$, $a_1$ and $a_2$ as well as the bounds on $A$ and $A'$, we get
\bee
&& \psi_s-\frac{1}{2}\si^2A^2(\cdot\cdot)\psi_{zz}+\Big(\mu-\frac{1}{2}\si^2A^2(\cdot\cdot)-\si^2A(\cdot\cdot)\Big)\psi_z+\Big(\mu-\si^2A(\cdot\cdot)\Big)\psi\\
&=&
\frac{1}{2}e^{\theta_2 s } e^z\Big(\theta_2-\frac{1}{2}\si^2A^2(\cdot\cdot)+\Big(\mu-\frac{1}{2}\si^2A^2(\cdot\cdot)-\si^2A(\cdot\cdot)\Big)+\Big(\mu-\si^2A(\cdot\cdot)\Big)\Big)\\
&&
+e^{-r_1 s } d\Big(r_1-(\mu-\si^2A(\cdot\cdot))\Big)\\
&\leq&
\frac{1}{2}e^{\theta_2 s } e^z\Big(\theta_2-\si^2a_2^2-2\si^2a_2+2\mu\Big)
+e^{-r_1 s } d\Big(r_1-(\mu-\si^2a_1)\Big)\\
&=&0.\eee
Notice $\theta_1>\theta_2$, so
\[
\begin{cases}
\psi(z,0)=\frac{1}{2}e^z-d=w^{\ep,N}(z,0), &-N<z<N, \\[3mm]
(\psi-\psi_z)(-N,s)=-e^{-r_1 s } d\leq-e^{-r_2 s }d=(w^{\ep,N}-w^{\ep,N}_z)(-N,s),& s\in [0,T], \\[3mm]
\psi_z(N,s)=\frac{1}{2} e^{\theta_2 s } e^N\leq \frac{1}{2} e^{\theta_1 s } e^N=w^{\ep,N}_z(N,s), & s\in [0,T].
\end{cases}
\]
Applying the comparison principle for linear equations, the first inequality in \eqref{wNN_b} is established.

Similarly, let $$\Psi(z,s)=\frac{1}{2}e^{\theta_1 s }e^{z}-e^{-r_2 s }d.$$ Then by the definitions of $\theta_1$, $a_1$ and $a_2$,
\bee
&& \Psi_s-\frac{1}{2}\si^2A^2(\cdot\cdot)\Psi_{zz}+\Big(\mu-\frac{1}{2}\si^2A^2(\cdot\cdot)-\si^2A(\cdot\cdot)\Big)\Psi_z+\Big(\mu-\si^2A(\cdot\cdot)\Big)\Psi\\
&=&
\frac{1}{2}e^{\theta_1 s }e^{z}\Big(\theta_1-\frac{1}{2}\si^2A^2(\cdot\cdot)+\Big(\mu-\frac{1}{2}\si^2A^2(\cdot\cdot)-\si^2A(\cdot\cdot)\Big)+\Big(\mu-\si^2A(\cdot\cdot)\Big)\Big)\\
&&
+e^{-r_2 s }d\Big(r_2-(\mu-\si^2A(\cdot\cdot))\Big)\\
&\geq&
\frac{1}{2}e^{\theta_1 s }e^{z}\Big(\theta_1-\si^2a_1^2-2\si^2a_1+2\mu\Big)
+e^{-r_2 s }d\Big(r_2-(\mu-\si^2a_2)\Big)\\
&=&0.\eee
Moreover,
\[
\begin{cases}
\Psi(z,0)=\frac{1}{2}e^{z}-d=w^{\ep,N}(z,0), &-N<z<N, \\[3mm]
(\Psi-\Psi_z)(-N,s)=-e^{-r_2 s }d=(w^{\ep,N}-w^{\ep,N}_z)(-N,s),& s\in [0,T], \\[3mm]
\Psi_z(N,s)=\frac{1}{2} e^{\theta_1 s } e^N=w^{\ep,N}_z(N,s), & s\in [0,T],
\end{cases}
\]
by the comparison principle, the second inequality in \eqref{wNN_b} is established.

%%%%%%%%%%%%%%%%%%%%%%%%%%%%%%%%%%%%%%%%%%

Due to the setting of boundary conditions, we cannot establish $w^{\ep,N}_z\geq \frac{1}{2}e^{-\kappa s} e^z$. Instead, we first prove
\be\label{wNz_lb1}
w^{\ep,N}_z\geq-e^{-\theta_3 s} d,
\ee
where $$\theta_3=\min\{\mu-\si^2 a_1(a_1+3),r_1\}.$$ Differentiating the equation in \eqref{wNN_pb} w.r.t. $z$ we have
\begin{multline*}
\p_s w^{\ep,N}_z-\frac{\si^2}{2}\p_z\Big(A^2(\cdot\cdot)\p_zw^{\ep,N}_z\Big)
+\Big(\mu-\frac{1}{2}\si^2A^2(\cdot\cdot)-\si^2A(\cdot\cdot)\Big)\p_zw^{\ep,N}_z
+\Big(\mu-\si^2A(\cdot\cdot)\Big)w^{\ep,N}_z\\
-\si^2A'(\cdot\cdot)\Big(\frac{w^{\ep,N}_z}{|w^{\ep,N}_z|+\ep}-\frac{w^{\ep,N}}{(|w^{\ep,N}_z|+\ep)^2}\sgn(w^{\ep,N}_z)w^{\ep,N}_{zz}\Big) \Big(A(\cdot\cdot)+1\Big)w^{\ep,N}_z\\
-\si^2A'(\cdot\cdot)\Big(\frac{w^{\ep,N}_z}{|w^{\ep,N}_z|+\ep}-\frac{w^{\ep,N}}{(|w^{\ep,N}_z|+\ep)^2}\sgn(w^{\ep,N}_z)w^{\ep,N}_{zz}\Big) w^{\ep,N}=0.
\end{multline*}
After reorganizing, we get an equation for $w^{\ep,N}_z$ in the divergence form:
\begin{multline}\label{wNNz_eq}
\p_s w^{\ep,N}_z-\frac{\si^2}{2}\p_z\Big(A^2(\cdot\cdot)\p_zw^{\ep,N}_z\Big)
+\Big(-\frac{1}{2}\si^2A^2(\cdot\cdot)-\si^2A(\cdot\cdot)+\mu\Big)\p_zw^{\ep,N}_z\\
+\Big(\mu-\si^2A(\cdot\cdot)\Big)w^{\ep,N}_z-\si^2A'(\cdot\cdot)\frac{w^{\ep,N}_z}{|w^{\ep,N}_z|+\ep} \Big(A(\cdot\cdot)+1\Big)w^{\ep,N}_z\\
+\si^2A'(\cdot\cdot)\Big(\frac{w^{\ep,N}}{|w^{\ep,N}_z|+\ep}\Big) \Big(\frac{w^{\ep,N}_z}{|w^{\ep,N}_z|+\ep}\Big) \Big(A(\cdot\cdot)+1\Big)\sgn(w^{\ep,N}_z)\p_zw^{\ep,N}_z\\
+\si^2A'(\cdot\cdot) \Big(\frac{w^{\ep,N}}{|w^{\ep,N}_z|+\ep}\Big)^2 \sgn(w^{\ep,N}_z)\p_zw^{\ep,N}_z-\si^2A'(\cdot\cdot) \frac{w^{\ep,N}}{|w^{\ep,N}_z|+\ep} w^{\ep,N}_z
=0.
\end{multline}
It is not hard to check that all the coefficients in \eqref{wNNz_eq} are bounded. Denote $\psi(z,s)=-e^{-\theta_3 s} d$, then
\begin{multline*}
\p_s \psi-\frac{\si^2}{2}\p_z\Big(A^2(\cdot\cdot)\p_z \psi\Big)
+\Big(\mu-\frac{1}{2}\si^2A^2(\cdot\cdot)-\si^2A(\cdot\cdot)\Big)\p_z \psi\\
+\Big(\mu-\si^2A(\cdot\cdot)\Big)\psi-\si^2A'(\cdot\cdot)\frac{w^{\ep,N}_z}{|w^{\ep,N}_z|+\ep} \Big(A(\cdot\cdot)+1\Big)\psi\\
+\si^2A'(\cdot\cdot)\Big(\frac{w^{\ep,N}}{|w^{\ep,N}_z|+\ep}\Big) \Big(\frac{w^{\ep,N}_z}{|w^{\ep,N}_z|+\ep}\Big) \Big(A(\cdot\cdot)+1\Big)\sgn(w^{\ep,N}_z)\p_z \psi\\
+\si^2A'(\cdot\cdot) \Big(\frac{w^{\ep,N}}{|w^{\ep,N}_z|+\ep}\Big)^2 \sgn(w^{\ep,N}_z)\p_z \psi-\si^2A'(\cdot\cdot) \frac{w^{\ep,N}}{|w^{\ep,N}_z|+\ep} \psi\\
=e^{-\theta_3 s}d\Big(\theta_3-\mu+\si^2A(\cdot\cdot)
+\si^2A'(\cdot\cdot)\frac{w^{\ep,N}_z}{|w^{\ep,N}_z|+\ep} \Big(A(\cdot\cdot)+1\Big)
+\si^2A'(\cdot\cdot) \frac{w^{\ep,N}}{|w^{\ep,N}_z|+\ep}
\Big)
\\
\leq e^{-\theta_3 s}d (\theta_3-\mu+\si^2 a_1+\si^2 a_1(a_1+1)+\si^2 a_1 )\leq 0,
\end{multline*}
thanks to the definition of $\theta_3$. Moreover,
\[
\begin{cases}
w^{\ep,N}_z(z,0)=\frac{1}{2}e^{z}\geq 0\geq \psi(z,0), \\[3mm]
w^{\ep,N}_z(-N,s)=w^{\ep,N}(-N,s)+e^{-r_2 s }d>-e^{-r_1 s} d\geq \psi(-N,s), \quad{(\rm by \;\eqref{wNN_b})}
\\[3mm]
w^{\ep,N}_z(N,s)=\frac{1}{2} e^{\theta_1 s } e^N\geq 0\geq \psi(N,s).
\end{cases}
\]
Using the comparison principle for divergence forms (see \cite{Li96} Corollary 6.16), we obtain $w^{\ep,N}_z\geq \psi$, giving \eqref{wNz_lb1}.

We next to prove
\be\label{wNNz_ub}
w^{\ep,N}_z\leq \frac{1}{2}e^{k s}e^z.
\ee
Denote $g^{\ep,N}(z,s)=e^{-z}w^{\ep,N}_z(z,s)$. According to \eqref{wNNz_eq}, we have
\begin{multline}\label{ezw}
\p_s g^{\ep,N}-\frac{\si^2}{2}\p_z\Big(A^2(\cdot\cdot)g^{\ep,N}_z\Big)
-\si^2A^2(\cdot\cdot)g^{\ep,N}_z
-\frac{\si^2}{2}A^2(\cdot\cdot)g^{\ep,N}
\\
-\si^2A(\cdot\cdot)A'(\cdot\cdot) \Big(\frac{w^{\ep,N}_z}{w^{\ep,N}_z+\ep}g-\Big(\frac{w^{\ep,N}}{|w^{\ep,N}_z|+\ep}\Big){\Big(\frac{w^{\ep,N}_z}{|w^{\ep,N}_z|+\ep}\Big)\sgn(w^{\ep,N}_z)}\big(g^{\ep,N}_z+g^{\ep,N} \big)\Big)
\\
+\Big(\mu-\frac{1}{2}\si^2A^2(\cdot\cdot)-\si^2A(\cdot\cdot)\Big)\big(g^{\ep,N}_z+g^{\ep,N} \big)\\
+\Big(\mu-\si^2A(\cdot\cdot)\Big)g^{\ep,N}
-\si^2A'(\cdot\cdot) \Big(A(\cdot\cdot)+1\Big)g^{\ep,N}\\
+\si^2A'(\cdot\cdot)\frac{w^{\ep,N}}{|w^{\ep,N}_z|+\ep} \Big(A(\cdot\cdot)+1\Big)\sgn(w^{\ep,N}_z)\big(g^{\ep,N}_z+g^{\ep,N} \big)\\
+\si^2A'(\cdot\cdot) \Big(\frac{w^{\ep,N}}{|w^{\ep,N}_z|+\ep}\Big)^2\sgn(w^{\ep,N}_z)\big(g^{\ep,N}_z+g^{\ep,N} \big)-\si^2A'(\cdot\cdot) \frac{w^{\ep,N}}{|w^{\ep,N}_z|+\ep} g^{\ep,N}
=0.
\end{multline}
On the other hand, denote $\Psi(z,s)=\frac{1}{2}e^{k s}$, then
\begin{multline*}
\p_s \Psi-\frac{\si^2}{2}\p_z\Big(A^2(\cdot\cdot)\Psi_z\Big)
-\si^2A^2(\cdot\cdot)\Psi_z-\frac{\si^2}{2}A^2(\cdot\cdot)\Psi\\
-\si^2A(\cdot\cdot)A'(\cdot\cdot) \Big(\frac{w^{\ep,N}_z}{w^{\ep,N}_z+\ep}\Psi-\Big(\frac{w^{\ep,N}}{|w^{\ep,N}_z|+\ep}\Big){\Big(\frac{w^{\ep,N}_z}{|w^{\ep,N}_z|+\ep}\Big)\sgn(w^{\ep,N}_z)}\big(\Psi_z+\Psi\big)\Big)\\
+\Big(-\frac{1}{2}\si^2A^2(\cdot\cdot)-\si^2A(\cdot\cdot)+\mu\Big)\big(\Psi_z+\Psi\big)\\
+\Big(\mu-\si^2A(\cdot\cdot)\Big)\Psi
-\si^2A'(\cdot\cdot) \Big(A(\cdot\cdot)+1\Big)\Psi\\
+\si^2A'(\cdot\cdot)\frac{w^{\ep,N}}{|w^{\ep,N}_z|+\ep} \Big(A(\cdot\cdot)+1\Big)\sgn(w^{\ep,N}_z)\big(\Psi_z+\Psi\big)\\
+\si^2A'(\cdot\cdot) \Big(\frac{w^{\ep,N}}{|w^{\ep,N}_z|+\ep}\Big)^2\sgn(w^{\ep,N}_z)\big(\Psi_z+\Psi\big)-\si^2A'(\cdot\cdot) \frac{w^{\ep,N}}{|w^{\ep,N}_z|+\ep} \Psi\\
\geq \frac{1}{2}e^{ks}{\Big(k-\frac{1}{2}\si^2 a_1^2-\si^2 a_1^2-\frac{1}{2}\si^2 a_1^2-\si^2 a_1(a_1+1)-\si^2 a_1^2-\si^2 a_1\Big)} \geq 0,
\end{multline*}
thanks to the definition of $k$. Notice $k\geq \theta_1$, so
\[
\begin{cases}
g^{\ep,N} (z,0)=\frac{1}{2}=\Psi(z,0),\\[3mm]
g^{\ep,N} (-N,s)=e^{N}(w^{\ep,N}+e^{-r_2 s}d)(-N,s)\leq \frac{1}{2} e^{\theta_1 s} \leq \frac{1}{2} e^{k s}=\Psi(-N,s),\quad{(\rm by \;\eqref{wNN_b})}\\[3mm]
g^{\ep,N} (N,s)=\frac{1}{2} e^{\theta_1 s } \leq \frac{1}{2} e^{k s}=\Psi(N,s).
\end{cases}
\]
Using the comparison principle for divergence forms, we obtain $g^{\ep,N}\leq \Psi$, proving \eqref{wNNz_ub}.

Thanks to \eqref{wNN_b}, \eqref{wNz_lb1} and \eqref{wNNz_ub},
for each $a<b$, when $N>\max\{|a|,|b|\}$, taking the $C^{\al,\frac{\al}{2}}$ interior estimate (see \cite{Li96} Theorem 6.33) to the equations in \eqref{wNN_pb} and \eqref{wNNz_eq} respectively, we obtain
\[
\Big|w^{\ep,N}\Big|_{C^{\al,\frac{\al}{2}}([a,b]\times[0,T])}, \quad \Big|w^{\ep,N}_z\Big|_{C^{\al,\frac{\al}{2}}([a,b]\times[0,T])}\;\leq C.\]
where $C$ is independent of $\ep$ and $N$. Since $\Gamma_\ep (\cdot,\cdot)$ is Lipschitz continuous in $(-\infty,+\infty)\times[0,+\infty) $, we have
\be\label{A_Ca}
\;\bigg|\;A\Big(\frac{w^{\ep,N}}{|w^{\ep,N}_z|+\ep}\Big)\;\bigg|\;_{C^{\al,\frac{\al}{2}}([a,b]\times[0,T])}\leq C_\ep
\ee
i.e. the coefficients in the equation of \eqref{wNN_pb} belong to $C^{\al,\frac{\al}{2}}([a,b]\times[0,T])$, so we can take the Schauder interior estimate to the equation in \eqref{wNN_pb} to get
\be\label{w_C2+a}
\Big|w^{\ep,N}\Big|_{C^{2+\al,1+\frac{\al}{2}}([a,b]\times[0,T])}\;\leq C_\ep.
\ee
where the above two $C_\ep$s are independent of $N$. Therefore, there exists $w^\ep\in C^{2+\al,1+\frac{\al}{2}}\big(\overline{Q_T}\big)$ such that, for any region $Q=(a,b)\times(0,T]\subset Q_T$, there exists a subsequence of $w^{\ep,N}$, which we still denote by $w^{\ep,N}$, such that $w^{\ep,N}\rightarrow w^\ep$ in $C^{2,1}(\overline{Q})$ when $N\rightarrow\infty$. So $w^\ep$ satisfies the initial problem
\begin{align}\label{wN_pb}
\left\{
\begin{array}{ll}
w^\ep_s-\frac{1}{2}\si^2A^2\big(\frac{w^\ep}{|w^\ep_z|+\ep}\big)w^\ep_{zz}+\Big(\mu-\frac{1}{2}\si^2A^2\big(\frac{w^\ep}{|w^\ep_z|+\ep}\big)-\si^2A\big(\frac{w^\ep}{|w^\ep_z|+\ep}\big)\Big)w^\ep_z\\[5mm]
\quad\quad\quad\quad\quad\quad\quad\quad\quad\quad\quad\quad\quad\quad\quad\quad\quad\quad
+\Big(\mu-\si^2A\big(\frac{w^\ep}{|w^\ep_z|+\ep}\big)\Big)w^\ep=0 \quad \hbox{in} \quad Q_T,\\[5mm]
w^\ep(z,0)=\frac{1}{2}e^z-d.
\end{array}
\right.\end{align}
The the exponential growth conditions on $w^\ep$ and $w^\ep_z$ come from the estimates \eqref{wNN_b}, \eqref{wNz_lb1} and \eqref{wNNz_ub}.

We now prove
\be\label{wNz_lb}
w^\ep_z\geq \frac{1}{2}e^{-\kappa s}e^{z}.
\ee
Denote $$g^\ep (z,s)=e^{-z}w^\ep_z(z,s),\quad A(\cdots)=A\big(\frac{w^\ep}{|w^\ep_z|+\ep}\big).$$
Letting $N\rightarrow\infty$ in \eqref{ezw}, we obtain
\begin{multline*}
\p_s g^\ep-\frac{\si^2}{2}\p_z\Big(A^2(\cdots) g^\ep_z\Big)
-\si^2A^2(\cdots) g^\ep_z
-\frac{\si^2}{2}A^2(\cdots) g^\ep \\
-\si^2A(\cdots)A'(\cdots) \Big({\frac{w^\ep_z}{|w^\ep_z|+\ep} g^\ep-\Big(\frac{w^{\ep}}{|w^{\ep}_z|+\ep}\Big)\Big(\frac{w^\ep_z}{|w^\ep_z|+\ep}\Big)\sgn(w^\ep_z)}\big( g^\ep_z+g^\ep \big)\Big)
\\
+\Big(\mu-\frac{1}{2}\si^2A^2(\cdots)-\si^2A(\cdots)\Big)\big( g^\ep_z+g^\ep \big)\\
+\Big(\mu-\si^2A(\cdots)\Big) g^\ep-\si^2A'(\cdots) \Big(A(\cdots)+1\Big) g^\ep \\
+\si^2A'(\cdots)\frac{w^\ep}{|w^\ep_z|+\ep} \Big(A(\cdots)+1\Big)\sgn(w^\ep_z)\big( g^\ep_z+g^\ep \big)\\
+\si^2A'(\cdots) \Big(\frac{w^\ep}{|w^\ep_z|+\ep}\Big)^2\sgn(w^\ep_z)\big( g^\ep_z+g^\ep\big)-\si^2A'(\cdots) \frac{w^\ep}{|w^\ep_z|+\ep} g^\ep=0.
\end{multline*}
On the other hand, denote $\Psi(z,s)=\frac{1}{2}e^{-\kappa s}$, we have
\begin{multline*}
\p_s \Psi-\frac{\si^2}{2}\p_z\Big(A^2(\cdots)\Psi_z\Big)
-\si^2A^2(\cdots)\Psi_z
-\frac{\si^2}{2}A^2(\cdots)\Psi\\
-\si^2A(\cdots)A'(\cdots)
\Big({\frac{w^\ep_z}{|w^\ep_z|+\ep} \Psi-\Big(\frac{w^{\ep}}{|w^{\ep}_z|+\ep}\Big)\Big(\frac{w^\ep_z}{|w^\ep_z|+\ep}\Big)\sgn(w^\ep_z)}\big(\Psi_z+\Psi\big)\Big)
\\
+\Big(-\frac{1}{2}\si^2A^2(\cdots)-\si^2A(\cdots)+\mu\Big)\big(\Psi_z+\Psi\big)\\
+\Big(\mu-\si^2A(\cdots)\Big)\Psi-\si^2A'(\cdots) \Big(A(\cdots)+1\Big)\Psi\\
+\si^2A'(\cdots)\frac{w^\ep}{|w^\ep_z|+\ep} \Big(A(\cdots)+1\Big)\sgn(w^\ep_z)\big(\Psi_z+\Psi\big)\\
+\si^2A'(\cdots) \Big(\frac{w^\ep}{|w^\ep_z|+\ep}\Big)^2\sgn(w^\ep_z)\big(\Psi_z+\Psi\big)-\si^2A'(\cdots) \frac{w^\ep}{|w^\ep_z|+\ep} \Psi\\
\leq \frac{1}{2}e^{-\kappa s}{\Big(-\kappa+\si^2 a_1(1+a_1)+\mu+\mu+\si^2(a_1+1)+\si^2a_1(a_1+1)+\si^2a_1^2+\si^2a_1\Big)}
=0,
\end{multline*}
thanks to the definition of $\kappa$. Moreover, $g(z,0)=\frac{1}{2}=\Psi(z,0)$. By the comparison principle we have $g\geq \Psi$, hence, \eqref{wNz_lb} is proved.

Thanks to \eqref{wNNz_ub} and \eqref{wNz_lb}, $w^\ep_z$ has positive lower and upper bounds which are independent of $\ep$ in any bounded region, noting that the bounds of $|\p_\eta \Gamma_\ep(\xi,\eta)|$ and $|\p_\eta \Gamma_\ep(\xi,\eta)|$ are independent of $\ep$ when $\eta$ has a positive lower bound,
so the constants $C_\ep$s in the estimates \eqref{A_Ca} and \eqref{w_C2+a} are independent of $\ep$. Let $\ep\rightarrow 0$ in \eqref{wN_pb}, we obtain a limit $w$ that satisfies \eqref{w_pb}. Moreover, \eqref{w_b} and \eqref{wz_b} are the direct consequences of \eqref{wNN_b}, \eqref{wNNz_ub}, \eqref{wNz_lb}.

\end{appendices}

\newpage

\bibliographystyle{plainnat}

\bibliographystyle{siam}
\renewcommand{\baselinestretch}{1.2}

\end{document}